\documentclass[a4paper,onecolumn,11pt]{quantumarticle}
\pdfoutput=1

\usepackage[utf8]{inputenc}
\usepackage[english]{babel}
\usepackage[T1]{fontenc}

\usepackage{amsmath,amssymb,amsthm}
\usepackage{graphicx}
\usepackage{physics}
\usepackage{hyperref}
\usepackage[numbers,sort&compress]{natbib}
\usepackage{tikz}
\usetikzlibrary{arrows.meta,positioning,matrix,fit,calc,decorations.pathreplacing}
\usepackage{quantikz}

\usepackage[normalem]{ulem} 
\usepackage{cancel}
\usepackage{multirow}

\theoremstyle{plain}
\newtheorem{theorem}{Theorem}[section]
\newtheorem{lemma}[theorem]{Lemma}

\newtheorem{corollary}[theorem]{Corollary}

\theoremstyle{definition}
\newtheorem{definition}[theorem]{Definition}

\theoremstyle{remark}
\newtheorem{remark}[theorem]{Remark}

\newcommand{\polylog}{\operatorname{polylog}}
\newcommand{\HWC}{\mathrm{HWC}}
\newcommand{\sym}{\mathrm{sym}}
\newcommand{\CNOT}{\mathsf{CNOT}}
\newcommand{\Tof}{\mathsf{Toffoli}}

\newcommand\blfootnote[1]{%
  \begingroup
  \renewcommand\thefootnote{}%
  \footnote{#1}%
  \addtocounter{footnote}{-1}%
  \endgroup
}

\title{Efficient Depth--Ancilla Tradeoffs for Hamming Weight Computation and Symmetric Boolean Functions}

\author{Wei Zi}
\affiliation{Quantum Science Center of Guangdong-Hong Kong-Macao Greater Bay Area, Shenzhen, China}

\author{Pei Yuan}
\affiliation{Independent Researcher, Shenzhen, Guangdong, China}

\author{Junhong Nie}
\affiliation{Shandong University, Jinan, Shandong, China}

\author{Shengyu Zhang}
\affiliation{Independent Researcher, Shenzhen, Guangdong, China}

\begin{document}

\maketitle

\blfootnote{Wei Zi and Pei Yuan contributed equally to this work. 
Emails: Wei Zi (\texttt{ziwei.quantum@outlook.com}); Pei Yuan
(\texttt{
peiyuan0104@gmail.com}); Junhong Nie
(\texttt{123miyoi@gmail.com}); Shengyu Zhang
(\texttt{
shengyuzhang@gmail.com}). Shengyu Zhang is the corresponding
author.}

\begin{abstract}
Hamming weight computation maps an $n$-bit input to the number of ones it contains. It is a basic subroutine in quantum computing, and the core building block for symmetric Boolean functions, whose value depends only on the Hamming weight of the input. Moreover, symmetric Boolean functions are among the most common primitives in quantum computing. Efficient circuits for both problems are therefore important for the efficiency of many quantum algorithms.
We study the depth-ancilla tradeoffs of Hamming weight computation under two qubit connectivity models, all-to-all and two-dimensional nearest-neighbor square grid (2D), in both the standard and dynamic circuit models. In the standard all-to-all model, we obtain depth $O(\log n)$ with a sublinear number of ancillas. In the standard 2D model, we give a circuit of depth $O(\sqrt n)$ with $O(\log^2 n)$ ancillas, and a matching lower
bound showing that $\Theta(\sqrt n)$ is optimal. In both dynamic models, we obtain constant-depth circuits with $O(n^{1+\varepsilon}\polylog n)$ ancillary qubits for every fixed $\varepsilon>0$. All constructions give a smooth depth-ancilla tradeoff, and they also extend to arbitrary symmetric Boolean functions.
\end{abstract}

\section{Introduction}

Quantum computers can solve some important problems that are believed to be intractable for classical computers, such as integer factoring  \cite{365700}. Any quantum algorithm is executed by running a quantum circuit on a quantum device.
Consequently, to improve the performance and efficiency of quantum algorithms, quantum circuit optimization has become an active and widely studied research area \cite{yan2024quantum,kusyk2021survey,karuppasamy2024pqy}. A primary objective is to reduce the circuit depth, which is limited by the coherence time of the quantum device. A related objective is to understand the trade-off between circuit depth and the number of ancilla qubits, since both are limited resources on near-term hardware. In addition,  some implementation schemes of quantum devices such as superconducting qubits \cite{arute2019quantum,kim2023evidence,Wu2021Strong,Google2025willow} only allow 2-qubit gates between certain pairs of qubits. Quantum circuit optimization should therefore also respect the qubit connectivity constraints of the device.

Many quantum algorithms \cite{365700,grover1996fast, gilyen2019quantum, Low2024tradingtgatesdirty, liu2025blockencodinglowgate} rely on the ability to evaluate a Boolean function coherently on a quantum register, either as an oracle that marks target inputs or as an arithmetic subroutine. Because such Boolean subroutines are typically invoked many times within a larger algorithm, the depth and ancilla cost of their quantum circuit implementation directly determines the overall algorithmic efficiency.

Among the many Boolean functions used in quantum algorithms, the Hamming weight function has received considerable attention due to its wide range of applications.  Given an $n$-bit input string, the Hamming weight function returns the number of $1$'s in the string, namely its Hamming weight. The corresponding quantum circuit takes the input on an $n$-qubit input register and records its Hamming weight on a separate $\lceil\log(n+1)\rceil$-qubit output register, while leaving the input register unchanged. Quantum circuits for Hamming weight computation have wide-ranging applications across many directions, including quantum machine learning \cite{li2022quantum,li2023quantum}, quantum state preparation \cite{gosset2026quantum}, symmetric Boolean functions \cite{ZiNieSun2025SymmetricFunctions,TakahashiTani2016Collapse}, quantum simulation \cite{Kivlichan2020Trotterization} and quantum memory \cite{Khan2022EP}.

The optimization of the depth and ancilla count of Hamming weight computation has been extensively studied in the standard quantum circuit model, in which a circuit consists of arbitrary single-qubit and two-qubit gates, with no intermediate measurements and no qubit connectivity constraints. Earlier constructions trade off these two resources in different ways. With only $O(1)$ ancillary qubits, \cite{li2022quantum,li2023quantum} achieve a depth of $O(n\log n)$. Allowing $O(n)$ ancillary qubits brings the depth down to $O(n)$ \cite{chattopadhyay2016low,orts2024quantum}. The depth can be further reduced to $O(\log^2 n)$ using $O(\log n)$ clean and $O(n\log^2 n)$ dirty ancillary qubits \cite{takahashi2021power}. Most recently, Ref.~\cite{ZiNieSun2025SymmetricFunctions} matches this depth of $O(\log^2 n)$ while using no ancillary qubits. If the mid-circuit measurement and classical feedback are allowed in a quantum circuit, it is called the dynamic circuit. Takahashi and Tani~\cite{TakahashiTani2016Collapse} showed that, in the
unbounded-fanout circuit model where a fan-out gate is treated as a primitive,
Hamming weight computation can be implemented in constant depth using
\(O(n\log n)\) ancillary qubits.  To translate this result into the dynamic
circuit model considered here, each unbounded-fanout gate must be realized by
mid-circuit measurements and classical feedforward.  Since an \(n\)-qubit
fan-out can be implemented in \(O(1)\) depth using \(O(n)\) ancillary qubits
\cite{BaumerWoerner2025LongRangeFanout}, the Takahashi--Tani construction gives
a dynamic-circuit implementation of Hamming weight computation with depth
\(O(1)\) and \(O(n^2)\) ancillary qubits.

A Boolean function is said to be 
\emph{symmetric} if its value depends only on the Hamming weight of the input. This seemingly simple structure captures many of the most ubiquitous primitives \cite{TakahashiTani2016Collapse,Buhrman2024statepreparation,HoyerSpalek2005Fanout,Biswal2022Efficient} in quantum computing---including \textsc{or}, \textsc{and}, \textsc{parity}, \textsc{majority}, and \textsc{threshold}. In quantum computing, the efficient circuit construction of 
symmetric Boolean functions is of considerable importance, as many core subroutines and higher-level algorithms repeatedly invoke such functions during execution. In oracle-based 
applications, symmetric Boolean functions constitute the 
standard form of marking functions in some quantum algorithms, such as Grover 
search~\cite{grover1996fast},  
quantum counting \cite{brassard1998quantum} and the quantum algorithm for string matching \cite{niroula2021quantum}. 
The circuit depth optimization of symmetric Boolean functions has been extensively studied in the standard circuit model. For a symmetric Boolean function with $n$-bit input, early constructions in~\cite{perkowski2001regular,perkowski2001regularity} achieve $O(n^2)$ depth using $O(n^2)$ ancillary qubits. Refs.~\cite{maslov2004dynamic,maslov2006efficient} retain the same depth while reducing the ancilla count to $O(n)$, and Ref.~\cite{chattopadhyay2016low} further improves the depth to $O(n \log n)$ with $O(n)$ ancillary qubits. Takahashi and Tani~\cite{takahashi2021power} reduce the depth to $O(\log^2 n)$, but at the cost of $O(\log n)$ clean and $O(n \log^2 n)$ dirty ancillary qubits. At the opposite end of the trade-off, Ref.~\cite{maslov2021quantum} eliminates ancillary qubits entirely, but with a circuit depth of $O(n^2)$. In contrast, the construction in Ref.~\cite{ZiNieSun2025SymmetricFunctions} simultaneously achieves $O(\log^2 n)$ depth and uses only $\lceil \log(n+1) \rceil$ ancillary qubits, optimizing both metrics at once.

\paragraph{Our results.}
This paper studies depth--ancilla tradeoffs for Hamming weight computation in
both all-to-all and two-dimensional nearest-neighbor square-grid (2D) qubit connectivity constraints, in the standard and dynamic circuit models.  The output register of
size $\lceil\log(n+1)\rceil$ is not counted as ancillary workspace.  Tables
\ref{tab:alltoall-results} and \ref{tab:2d-results} summarize the main bounds. A subscript on the asymptotic notation, as in $O_\varepsilon(1)$, indicates that the hidden constant may depend on the subscripted parameter $\varepsilon$. If $\varepsilon$ is a constant, $O_\varepsilon(1)=O(1)$. 

\begin{table}[htpb]
    \centering
    \caption{Circuit depth and ancilla count for Hamming weight computation under the all-to-all qubit connectivity in standard and dynamic circuit model. Here \(n\) is the input size, \(2\le B\le n\), \(r\ge1\) is an integer independent of \(n\), and \(\varepsilon>0\) is fixed.}
    \label{tab:alltoall-results}
    \renewcommand{\arraystretch}{1.15}
    \begin{tabular}{c c c c}
    \hline\hline
    \textbf{reference} & \textbf{model} & \textbf{depth} & \textbf{ancilla count} \\
    \hline
    \cite{li2022quantum,li2023quantum}
        & standard
        & $O(n\log n)$
        & $O(1)$ \\
    \cite{chattopadhyay2016low,orts2024quantum}
        & standard
        & $O(n)$
        & $O(n)$ \\
    \cite{takahashi2021power}
        & standard
        & $O(\log^2 n)$
        & $O(n\log^2 n)$ \\
    \cite{ZiNieSun2025SymmetricFunctions}
        & standard
        & $O(\log^2 n)$
        & $0$ \\
    Thm.~\ref{thm:alltoall-blocked}
        & standard
        & $O(\log^2 B+\log n)$
        & $O((n/B)\log B)$ \\
    Cor.~\ref{cor:alltoall-sublinear}
        & standard
        & $O(\log n)$
        & $o(n)$ \\
    \hline
    \cite{TakahashiTani2016Collapse}
        & dynamic
        & $O(1)$
        & $O(n^2)$ \\
    Thm.~\ref{thm:alltoall-dynamic-one-level}
        & dynamic
        & $O(1)$
        & $O(nB+n^2\log B/B+n\,\mathrm{polylog}\,n)$ \\
    Cor.~\ref{cor:alltoall-dynamic-subquadratic}
        & dynamic
        & $O(1)$
        & $O(n^{3/2}\sqrt{\log n})$ \\
    Thm.~\ref{thm:alltoall-dynamic-recursive}
        & dynamic
        & $O(r)$
        & $O(r\,n^{1+1/r}\,\mathrm{polylog}\,n)$ \\
    Cor.~\ref{cor:alltoall-dynamic-near-linear}
        & dynamic
        & $O_\varepsilon(1)$
        & $O(n^{1+\varepsilon}\,\mathrm{polylog}\,n)$ \\
    \hline\hline
    \end{tabular}
\end{table}

\begin{table}[htbp]
    \centering
    \caption{Circuit depth and ancilla count for Hamming weight computation under the two-dimensional nearest-neighbor square-grid connectivity (2D), in the standard and dynamic circuit models. Here \(n\) is the input size, \(2\le B\le n\), \(r\ge1\) is an integer independent of \(n\), and \(\varepsilon>0\) is fixed.}
    \label{tab:2d-results}
    \renewcommand{\arraystretch}{1.15}
    \begin{tabular}{c c c c}
    \hline\hline
    \textbf{reference} & \textbf{model} & \textbf{depth} & \textbf{ancilla count} \\
    \hline
    Thm.~\ref{thm:hw-2d-polylog-ancilla}
        & standard
        & $O(\sqrt n)$, optimal
        & $O(\log^2 n)$ \\
    \hline
    Thm.~\ref{thm:2d-dynamic-direct}
        & dynamic 
        & $O(1)$
        & $O(n^2)$ \\
    Thm.~\ref{thm:2d-dynamic-blocked}
        & dynamic 
        & $O(1)$
        & $O(nB+n^2\log B/B+n\,\mathrm{polylog}\,n)$ \\
    Cor.~\ref{cor:2d-dynamic-subquadratic}
        & dynamic 
        & $O(1)$
        & $O(n^{3/2}\sqrt{\log n})$ \\
    Thm.~\ref{thm:2d-dynamic-recursive}
        & dynamic 
        & $O(r)$
        & $O(r\,n^{1+1/r}\,\mathrm{polylog}\,n)$ \\
    Cor.~\ref{cor:2d-dynamic-near-linear}
        & dynamic 
        & $O_\varepsilon(1)$
        & $O(n^{1+\varepsilon}\,\mathrm{polylog}\,n)$ \\
    \hline\hline
    \end{tabular}
\end{table}

The all-to-all results give two improvements over the known tradeoffs.  In the
standard model, blocking the ancilla-free Fourier-encoding construction and
then summing the local weights gives logarithmic depth with sublinear ancillary
workspace.  In the dynamic model, the same blocking idea, combined with
constant-depth weighted counting, reduces the ancillary cost of constant-depth
Hamming weight computation from the direct $O(n^2)$ bound to subquadratic and,
by recursion, to $O(n^{1+\varepsilon}\,\mathrm{polylog}\,n)$ for every fixed
$\varepsilon>0$.

The two-dimensional results reveal a sharp separation between
measurement-free locality and dynamic circuits with nonlocal classical
feedforward.
In the measurement-free two-dimensional model, Hamming weight computation has
optimal depth \(\Theta(\sqrt n)\): we give an \(O(\sqrt n)\)-depth construction
using only \(O(\log^2 n)\) clean ancillas, and the matching lower bound follows
from the parity light-cone obstruction.  Thus, without measurements, the
two-dimensional geometry imposes an unavoidable square-root depth barrier.

In contrast, this square-root barrier disappears in the dynamic
two-dimensional model.  Even though the quantum gates are still restricted to
nearest-neighbor interactions on the grid, Hamming weight computation can be
implemented in constant depth.  More strongly, the constant-depth construction
can be achieved with subquadratic ancillary workspace, and the recursive
version uses only $O(n^{1+\varepsilon}\,\mathrm{polylog}\,n)$ ancillas for
every fixed $\varepsilon>0$.  Thus, under nonlocal classical feedforward,
dynamic circuits exhibit a sharp improvement over measurement-free
two-dimensional circuits: the optimal $\Theta(\sqrt n)$ depth barrier is
replaced by constant depth with near-linear ancillary overhead.

The same Hamming-weight primitives give improved circuits for the entire class
of symmetric Boolean functions, rather than only for Hamming weight itself.
For an arbitrary symmetric Boolean function \(f(x)=g(|x|)\), the oracle for
\(f\) is obtained by computing \(|x|\), evaluating \(g\) on the
\(m=\lceil\log(n+1)\rceil\)-qubit weight register, and uncomputing \(|x|\).
Since the nontrivial dependence on the \(n\)-bit input is isolated in the
Hamming-weight computation step, our depth--ancilla tradeoffs for
Hamming weight computation translate directly into tradeoffs for
arbitrary symmetric Boolean oracles.

This yields new bounds in all four settings considered in this paper, as
summarized in Table~\ref{tab:sym}.  In the standard all-to-all model, we obtain
\(O(\log n)\)-depth circuits for arbitrary symmetric Boolean functions using
only \(O(n/\log n)\) clean ancillary qubits, improving the depth over the
previous \(O(\log^2 n)\)-depth constructions while keeping the workspace
sublinear.  In the standard 2D model, we obtain
\(O(\sqrt n)\)-depth circuits using \(O(\log^2 n)\) clean ancillary qubits; this
depth is worst-case optimal because parity is a symmetric Boolean function and
satisfies the usual two-dimensional light-cone lower bound.  Finally, in both
dynamic all-to-all and dynamic 2D models, arbitrary symmetric
Boolean functions can be implemented in \(O_\varepsilon(1)\) depth with
\(O(n^{1+\varepsilon}\mathrm{polylog}\,n)\) clean ancillary qubits.

\begin{table}[htpb]
    \centering
    \caption{Circuit depth and ancilla count for symmetric Boolean functions
    under all-to-all and two-dimensional nearest-neighbor square-grid
    connectivity (2D), in the standard and dynamic circuit models.  Here \(n\) is
    the input size and \(\varepsilon>0\) is fixed.}
    \label{tab:sym}
    \renewcommand{\arraystretch}{1.15}
    \begin{tabular}{c c c c c}
    \hline\hline
    \textbf{reference} & \textbf{model} & \textbf{connectivity}
        & \textbf{depth} & \textbf{ancilla count} \\
    \hline
    \cite{perkowski2001regular,perkowski2001regularity}
        & standard
        & all-to-all
        & \(O(n^2)\)
        & \(O(n^2)\) \\
    \cite{maslov2004dynamic,maslov2006efficient}
        & standard
        & all-to-all
        & \(O(n^2)\)
        & \(O(n)\) \\
    \cite{chattopadhyay2016low}
        & standard
        & all-to-all
        & \(O(n\log n)\)
        & \(O(n)\) \\
    \cite{takahashi2021power}
        & standard
        & all-to-all
        & \(O(\log^2 n)\)
        & \(O(n\log^2 n)\) \\
    \cite{maslov2021quantum}
        & standard
        & all-to-all
        & \(O(n^2)\)
        & \(0\) \\
    \cite{ZiNieSun2025SymmetricFunctions}
        & standard
        & all-to-all
        & \(O(\log^2 n)\)
        & \(\lceil\log(n+1)\rceil\) \\
    \hline
    \multirow{4}{*}{Thm.~\ref{thm:symmetric-functions}}
        & standard
        & all-to-all
        & \(O(\log n)\)
        & \(O(n/\log n)\) \\

        & dynamic
        & all-to-all
        & \(O_\varepsilon(1)\)
        & \(O(n^{1+\varepsilon}\,\mathrm{polylog}\,n)\) \\

        & standard
        & 2D
        & \(O(\sqrt n)\)
        & \(O(\log^2 n)\) \\

        & dynamic
        & 2D
        & \(O_\varepsilon(1)\)
        & \(O(n^{1+\varepsilon}\,\mathrm{polylog}\,n)\) \\
    \hline\hline
    \end{tabular}
\end{table}

\paragraph{Technical overview.}
The technical core of the paper is Hamming weight computation; the
symmetric-function bounds are obtained by computing \(|x|\), evaluating
\(g(|x|)\) on the \(m=\lceil\log(n+1)\rceil\)-qubit weight register, and
uncomputing \(|x|\).
We therefore focus in this overview on the Hamming-weight constructions,
which are all based on the same high-level reduction shown in
Fig.~\ref{fig:technical-overview}.

At a high level, the constructions first reduce Hamming weight computation
to weighted counting.  We divide the input into blocks, compute local Hamming
weights \(w_\ell\), and write each local weight as
\(w_\ell=\sum_j2^j b_{\ell,j}\).  Then the global Hamming weight is
\[
    |x|=\sum_{\ell,j}2^j b_{\ell,j},
\]
so the remaining task is to compute a weighted sum of the bits \(b_{\ell,j}\).
The four settings considered below differ in how the local weights and this
weighted-counting step are implemented.

\begin{figure}[htbp]
\centering
\begin{tikzpicture}[
    x=1cm,y=1cm,
    every node/.style={font=\small},
    box/.style={draw,thick,rounded corners=2pt,fill=gray!8,
        minimum width=2.25cm,minimum height=0.82cm,align=center},
    bluebox/.style={draw,thick,rounded corners=2pt,fill=blue!5,
        minimum width=2.35cm,minimum height=0.82cm,align=center},
    greenbox/.style={draw,thick,rounded corners=2pt,fill=green!7,
        minimum width=2.20cm,minimum height=0.82cm,align=center},
    smallblock/.style={draw,thick,rounded corners=1.5pt,fill=gray!10},
    arrow/.style={->,thick},
    dashedarrow/.style={->,thick,dashed,blue!70},
    >=Latex
]

\node[anchor=west] at (0.0,4.35) {(a) Blocking};
\node[anchor=west] at (4.20,4.35) {(b) Dynamic counting};
\node[anchor=west] at (9.10,4.35) {(c) Two-dimensional layout};

\node[box] (a1) at (1.25,3.35) {input bits};
\node[box] (a2) at (1.25,1.95) {local weights\\$w_1,\ldots,w_L$};
\node[greenbox] (a3) at (1.25,0.55) {weighted sum\\$\sum_{\ell,j}2^j b_{\ell,j}$};

\draw[arrow] (a1.south) -- (a2.north);
\draw[arrow] (a2.south) -- (a3.north);

\node[box] (b1) at (5.60,3.35) {weighted bits\\$z_i,a_i$};
\node[bluebox] (b2) at (5.60,1.95) {phase states\\$\ket{\phi_k(S)}$};
\node[greenbox] (b3) at (5.60,0.55) {decode\\$\ket{S}$};

\draw[arrow] (b1.south) -- (b2.north);
\draw[arrow] (b2.south) -- (b3.north);

\draw[thick] (9.10,0.15) rectangle (14.40,3.65);

\node at (11.75,3.35) {local patches};

\draw[smallblock] (9.45,2.55) rectangle ++(0.62,0.42);
\draw[smallblock] (10.75,2.55) rectangle ++(0.62,0.42);
\draw[smallblock] (12.05,2.55) rectangle ++(0.62,0.42);
\draw[smallblock] (10.10,1.65) rectangle ++(0.62,0.42);
\draw[smallblock] (11.40,1.65) rectangle ++(0.62,0.42);

\draw[thick,rounded corners=2pt,fill=blue!5]
    (10.20,0.55) rectangle (13.00,1.20);
\node at (11.60,0.875) {global counting};

\draw[thick,rounded corners=2pt,fill=green!7]
    (13.30,0.55) rectangle (14.10,1.20);
\node at (13.70,0.875) {$\ket{|x|}$};
\draw[arrow] (13.00,0.875) -- (13.30,0.875);

\draw[dashedarrow] (9.76,2.55) .. controls (9.60,2.05) and (10.05,1.55) .. (10.50,1.22);
\draw[dashedarrow] (10.41,1.65) .. controls (10.48,1.45) and (10.72,1.30) .. (10.95,1.22);
\draw[dashedarrow] (11.06,2.55) .. controls (11.00,2.05) and (11.15,1.55) .. (11.40,1.22);
\draw[dashedarrow] (11.71,1.65) .. controls (11.78,1.45) and (11.85,1.30) .. (11.90,1.22);
\draw[dashedarrow] (12.36,2.55) .. controls (12.45,2.05) and (12.62,1.55) .. (12.45,1.22);

\end{tikzpicture}
\caption{Overview of the constructions.  Panel (a) shows the blocked reduction
from Hamming weight computation to weighted counting.  Panel (b) shows the
dynamic weighted-counting primitive: fan-out prepares redundant Fourier phase
states for $S=\sum_i a_i z_i$, and a constant-depth decoder writes $\ket{S}$.
Panel (c) shows the two-dimensional dynamic layout, where local outputs are
copied by measurement-based long-range CNOTs to a global weighted-counting
patch.}
\label{fig:technical-overview}
\end{figure}

In the standard all-to-all model, the local Hamming weights are computed by the
ancilla-free Fourier-encoding construction, and the remaining weighted sum is
performed by a carry-save population-count circuit.  This gives the blocked
tradeoff in Theorem~\ref{thm:alltoall-blocked}; choosing the block size
appropriately yields logarithmic depth with sublinear ancillary workspace.

In the dynamic model, the central primitive is weighted counting.  Given bits
$z_i$ with fixed weights $a_i$ and $S=\sum_i a_i z_i$, we prepare the redundant
Fourier phase states needed by the Takahashi--Tani decoder
\cite{TakahashiTani2016Collapse}.  For each phase state,
measurement-based fan-out creates a cat state
\cite{BaumerWoerner2025LongRangeFanout}, and weighted controlled phase gates
accumulate the phase corresponding to $S$.  The decoder then writes the binary
representation of $S$ in constant depth.  This primitive costs
$O(MR+R\,\mathrm{polylog}\,R)$ ancillary qubits when there are $M$ weighted
bits and the sum lies in $0,\ldots,R$.

The dynamic tradeoffs follow by applying this primitive recursively.  A
one-level blocking gives a constant-depth construction with
$O(n^{3/2}\sqrt{\log n})$ ancillary qubits.  More generally, an $r$-level
pyramid computes weights of larger and larger blocks; each level is a
constant-depth weighted-counting layer, and each level uses
$O(n^{1+1/r}\,\mathrm{polylog}\,n)$ ancillary qubits.  This gives depth $O(r)$
and ancillary cost $O(rn^{1+1/r}\,\mathrm{polylog}\,n)$.

Without measurements, two-dimensional circuits cannot use long-range
fan-out.  We therefore localize the Fourier-encoding construction of Hamming
weight computation~\cite{ZiNieSun2025SymmetricFunctions}.  The key primitive is
a multi-output phase gadget: within each \(B\)-bit block, the \(O(\log B)\)
output qubits jointly accumulate the phases
\(\exp(i\theta_j y_j\sum_i x_i)\) by geometry-aware routing on a two-dimensional
patch.  Choosing \(B=\Theta(n/\log n)\), we route the remaining
\(O(\log^2 n)\) local-weight qubits to a small output region and sum them by a
nearest-neighbor carry-save circuit.  This yields depth \(O(\sqrt n)\) with
\(O(\log^2 n)\) clean ancillas, and this depth is optimal by the parity
light-cone lower bound~\cite{MooreNilsson2001ParallelQuantum,Rosenbaum2013NearestNeighbor}.

Finally, in the dynamic two-dimensional model, we implement the
weighted-counting primitive as an explicit two-dimensional patch.  The patch
arranges the weighted input bits in one direction and the Fourier-decoding
branches in the other: row fan-out copies each weighted input bit to all
branches, column fan-out prepares one cat state per branch, and local diagonal
gates accumulate the weighted phases.  The Takahashi--Tani decoder then outputs
the sum in constant depth.  We then plug this patch into the same blocked and
recursive framework as in the all-to-all dynamic construction.  At each level,
local or child weight bits are copied to reserved input sites on the boundary
of weighted-counting patches through vertex-disjoint routing corridors, the
weighted sums are computed in constant depth, and the copies are uncomputed.
Thus each recursive level has constant depth, while the patch area and corridor
area are absorbed into the same ancillary bounds as in the all-to-all dynamic
case.  This gives constant-depth subquadratic-ancilla constructions and, after
\(r\) recursive levels, depth \(O(r)\) with
\(O(rn^{1+1/r}\mathrm{polylog}\,n)\) clean ancillas; for every fixed
\(\varepsilon>0\), this yields \(O_\varepsilon(1)\) depth with
\(O(n^{1+\varepsilon}\mathrm{polylog}\,n)\) clean ancillas.

The rest of this paper is organized as follows. In Section \ref{sec:preliminaries}, we introduce notation and definitions of problems, and review some previous results. In Sections \ref{sec:all-to-all} and \ref{sec:two-dimensional}, we give the circuit constructions for Hamming weight computation in both standard and dynamic circuit models, with all-to-all and two-dimensional nearest-neighbor square-grid qubit connectivity constraints. We apply the circuit constructions of the Hamming weight computation to symmetric Boolean functions in Section \ref{sec:symmetric-functions}. We conclude and discuss in Section \ref{sec:discussion}.

\section{Preliminaries and Circuit Models}
\label{sec:preliminaries}

\paragraph{Notation} We use the following notation and conventions. For an $n$-bit string  $x=(x_1,\ldots,x_n)\in\{0,1\}^n$, we denote its Hamming weight by $|x|=\sum_{i=1}^n x_i$. For any $b\in\{0,1\}$ and $r\in\mathbb{N}$, $b^r$ denotes an $r$-bit string $b\cdots b\in\{0,1\}^r$. A CNOT gate $\CNOT^p_q$ acting on qubits $p,q$ satisfying $\CNOT^p_q\ket{x}_p\ket{y}_q=\ket{x}_p\ket{x\oplus y}_q$ for any $x,y\in\{0,1\}$. A Toffoli gate $\Tof(a,b;c)$ satisfying $\Tof(a,b;c)\ket{a}\ket{b}\ket{c}=\ket{a}\ket{b}\ket{ab\oplus c}$ for any $a,b,c\in\{0,1\}$.

We use asymptotic notation with respect to \(n\).  All logarithms are base two. A subscript on the asymptotic notation, as in $O_\varepsilon(1)$, indicates that the hidden constant may depend on the subscripted parameter $\varepsilon$.

\subsection{Hamming Weight Computation and Resource Measures}
We introduce the definition of Hamming weight computation (HWC) as follows.
\begin{definition}[Hamming weight computation, $\HWC_n$]
   The \(n\)-bit Hamming weight
computation, denoted by \(\HWC_n\), is a unitary transformation satisfying
\begin{equation}\label{eq:hwc}
    U_{\HWC}^{(n)}:
    \ket{x}\ket{0^m}
    \longmapsto
    \ket{x}\ket{|x|},
    \qquad \forall x\in\{0,1\}^n, \quad
    m=\lceil\log(n+1)\rceil .
\end{equation}
Here the second register is the output register.
\end{definition}
For any $|x| \in \{0, 1, \ldots, n\}$, we write 
$\ket{|x|} = \ket{k_{m-1}, k_{m-2},\cdots, k_0}$, 
where
$|x| = \sum_{j=0}^{m-1} k_j \cdot 2^j$ is the binary representation of $|x|$.

Throughout the paper, we distinguish the output register from ancillary
workspace.  The \(m\)-qubit output register \(\ket{0^m}\) is part of the
specified input and is not counted as ancillary workspace.  By the ancillary
qubit count, we mean only additional clean workspace qubits used during the
computation.

The depth of a circuit is the number of parallel layers of gates, where gates
with disjoint supports may be applied simultaneously. The goal of this paper is to construct quantum circuits that implement target tasks exactly, while minimizing the circuit depth and using as few ancillary qubits as possible.

\subsection{Circuit Models and Qubit Connectivity}
We adopt two circuit models and two types of qubit connectivity in this section.

\paragraph{Circuit model} We consider two quantum circuit models, the standard and the dynamic circuit models.
\begin{itemize}
    \item A \textit{standard circuit} consists of arbitrary one- and two-qubit gates. Mid-circuit measurements and classical feedforward are not allowed in this model.

    \item  A \textit{dynamic circuit} consists of arbitrary one- and two-qubit gates, together with mid-circuit measurements and classical feedforward. In this model, qubits may be measured in the computational basis during the computation, and subsequent gates may be chosen according to the classical measurement outcomes.
\end{itemize}

Ancillary qubits are permitted in the circuit models defined above. A \textit{clean ancillary qubit} is initialized to the state $\ket{0}$. A \textit{dirty ancillary qubit} (also called a borrowed ancillary qubit) is in an arbitrary, unknown initial state, possibly entangled with other parts of the system. Both the clean and dirty ancillary qubits need to be restored to their original states after the quantum circuits. 

\paragraph{Qubit connectivity} We further consider two types of qubit connectivity, all-to-all connectivity and two-dimensional square-grid connectivity.
\begin{itemize}
    \item The \textit{all-to-all connectivity} allows a two-qubit gate to be applied to any pair of qubits.

    \item The \textit{two-dimensional square-grid connectivity} (2D) arranges the qubits on the vertices of a two-dimensional square-grid and restricts every two-qubit gate to a pair of qubits that are adjacent on the grid.
\end{itemize}
The square-grid nearest-neighbor model is a standard abstraction for studying
geometric locality in quantum circuits~\cite{MooreNilsson2001ParallelQuantum,Rosenbaum2013NearestNeighbor}. Our upper bounds are stated for the square-grid.  They also apply, up to constant-factor changes in depth and area, to architecture families that can simulate the square-grid with constant overhead.

In the rest of this paper, we study the circuits under four settings: the \textbf{standard all-to-all model}, the \textbf{standard 2D model}, the \textbf{dynamic all-to-all model}, and the \textbf{dynamic 2D model}. We present circuit constructions and resource analyses for each of these four settings.

\paragraph{Fan-out gate} We use this additional power only through measurement-based implementations of fan-out and the associated cat-state recovery procedure.

\begin{definition}[Fan-out]\label{def:fanout}
    The fan-out $F_r$ on $r+1$ qubits is the unitary transformation satisfying
    \begin{equation}
        F_r:\ket{b}\ket{0^r}
    \longmapsto
    \ket{b}\ket{b^r},
    \qquad \forall b\in\{0,1\}.
    \end{equation}
\end{definition}

\begin{lemma}[Fan-out, \cite{BaumerWoerner2025LongRangeFanout}]\label{lem:fanout-path}
    The fan-out $F_r$ can be implemented with constant depth and $O(r)$ ancillary qubits in the dynamic circuit model on a one-dimensional nearest-neighbor architecture.
\end{lemma}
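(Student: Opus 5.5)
The plan is to implement $F_r$ on a path $v_0,v_1,\ldots,v_{2r}$. The source qubit $b$ sits at $v_0$, and the $r$ target qubits sit at the even positions $v_2,v_4,\ldots,v_{2r}$. The odd positions $v_1,v_3,\ldots,v_{2r-1}$ are $r$ ancillas, all initialized to $\ket{0}$. All two-qubit gates act only on consecutive vertices, so the construction lives on a one-dimensional nearest-neighbor architecture, and the ancilla count is $r=O(r)$.

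\textbf{Step 1 (prepare Bell pairs).} In one layer, apply a Hadamard to every odd site $v_{2k-1}$. In the next layer, apply $\CNOT^{v_{2k-1}}_{v_{2k}}$ for all $k=1,\ldots,r$ in parallel. This produces $r$ disjoint Bell pairs $(\ket{00}+\ket{11})/\sqrt2$ on $(v_{2k-1},v_{2k})$, in depth $2$.

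\textbf{Step 2 (fuse by parity measurements).} In two layers of parallel nearest-neighbor $\CNOT$s, apply $\CNOT^{v_{2k-2}}_{v_{2k-1}}$ for $k=1,\ldots,r$. The first layer handles odd $k$ and the second even $k$, so that no qubit is used twice in a layer. Then measure every odd site $v_{2k-1}$ in the computational basis, obtaining outcomes $m_k$. Each outcome is the parity of $v_{2k-2}$ and $v_{2k}$, where $v_0$ holds $b$. Hence, conditioned on the outcomes, the state of $(v_0,v_2,\ldots,v_{2r})$ is $\alpha\ket{0}\ket{s_1\cdots s_r}+\beta\ket{1}\ket{\bar s_1\cdots\bar s_r}$. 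Here $s_k=m_1\oplus\cdots\oplus m_k$ is a prefix parity of the outcomes. This holds by linearity for the input $\alpha\ket0+\beta\ket1$, and also when $b$ is entangled with other registers.

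\textbf{Step 3 (classical feedforward correction).} Apply $X^{s_k}$ to $v_{2k}$ in a single layer, where each $s_k$ is computed classically from the outcomes. The targets then hold $b^r$, and we obtain $\ket{b}\ket{b^r}$ as required. The measured ancillas are left in known classical states $\ket{m_k}$. Resetting them with $X^{m_k}$ in one more layer restores them to $\ket{0}$, so they are clean ancillas in the paper's sense. The total quantum depth is constant, about $6$.

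The main obstacle is the feedforward step. The corrections $s_k$ are prefix parities of up to $r$ bits, so the argument has to rely on the model's convention that classical feedforward may be an arbitrary classical function of earlier outcomes at no quantum-depth cost, as the paper assumes for dynamic circuits. The other point to check carefully is the case where $b$ is entangled with the rest of the system. Since every correction is a Pauli $X$ determined only by outcomes that are independent of $b$ (each is uniformly random), the correction acts uniformly on both branches. The map is therefore exactly the unitary $F_r$ on the relevant subspace.
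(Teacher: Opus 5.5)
Your argument is correct. With the targets starting in $\ket{0}$, the outcome $m_k$ equals $c_{k-1}\oplus c_k$, where $c_0=b$ and $c_k$ is the value of the $k$-th Bell pair. The outcome string is uniform and independent of $b$, so the $X^{s_k}$ frame correction yields exactly $\ket{b}\ket{b^r}$, even when $b$ is entangled with other registers. Treating the classical prefix parities as free matches the paper's own convention in Lemma~\ref{lem:2d-dynamic-fanout-path}. (The two $\CNOT$ layers of Step~2 can be merged, since the pairs $(v_{2k-2},v_{2k-1})$ are already disjoint.)

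The paper gives no proof of this lemma and simply cites B\"aumer--Woerner. There, as in the H{\o}yer--\v{S}palek fan-out model the paper follows, fan-out is the multi-target $\CNOT$ $\ket{b}\ket{t}\mapsto\ket{b}\ket{t\oplus b^r}$ on arbitrary $t$. You instead prove exactly Definition~\ref{def:fanout}, the isometry on inputs $\ket{b}\ket{0^r}$. That is what the lemma literally asks for, and it is why a single round of $Z$-basis measurements, only $X$ corrections, and exactly $r$ ancillas suffice.

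The price is generality. Your dynamic circuit has no literal inverse; on its image, uncomputation is supplied separately by Lemma~\ref{lem:2d-dynamic-fanout-path}. It also cannot be conjugated by Hadamards to give parity, which the paper does in Lemmas~\ref{lem:2d-dynamic-multi-and}, \ref{lem:2d-dynamic-parallel-decoder} and~\ref{lem:boolean-dynamic-2d}, where the targets are not in $\ket{0}$.

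If you want your proof to cover those uses, it extends directly. Give each target $t_k$ a fresh partner $e_k$ and build the copies of $b$ on the $e_k$ exactly as you do. A layer of nearest-neighbour SWAPs between $e_k$ and $t_k$, undone at the end, keeps $e_k$ adjacent to both $t_k$ and the next measured ancilla. Then apply $\CNOT^{e_k}_{t_k}$ and release the $e_k$ by the $X$-basis measurement and $Z$ correction of Lemma~\ref{lem:2d-dynamic-fanout-path}. This stays constant depth with $O(r)$ ancillas.
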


We also use the following \(X\)-basis measurement recovery procedure for cat states: after the fan-out value has been used, the fan-out leaves are measured and released, leaving the root qubit restored up to a Pauli correction determined by the measurement outcomes~\cite{ZiNieSun2025MFStatePreparation}.

After introducing these primitives, we describe some constructions in the unbounded fan-out language, following the standard fan-out circuit model used in prior work~\cite{HoyerSpalek2005Fanout,TakahashiTani2016Collapse}.  This is
only a shorthand for the corresponding dynamic implementation: the fan-out targets, recovery measurements, and ancillary qubits required to realize fan-out are still included in the ancillary-qubit count. 

\subsection{Basic Hamming Weight Primitives}

We record three known Hamming-weight primitives that will be used as building blocks.

\begin{lemma}[Ancilla-free Hamming weight computation, \cite{ZiNieSun2025SymmetricFunctions}]
\label{lem:prelim-hw-tcad}
In the
standard all-to-all model, \(\HWC_n\) can be implemented  with depth \(O(\log^2 n)\) and with no additional
clean ancillary qubits beyond the output register
.
\end{lemma}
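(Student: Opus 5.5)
The plan is to reproduce the Fourier-encoding (phase-estimation-style) construction of \cite{ZiNieSun2025SymmetricFunctions}, which works directly on the output register. Let $m=\lceil\log(n+1)\rceil$. First I would apply Hadamards to the $m$ output qubits, giving $\frac{1}{\sqrt{2^m}}\sum_{y}\ket{y}$. Then I would imprint the phase $\exp\bigl(2\pi i\,|x|\,y/2^m\bigr)$ on $\ket{x}\ket{y}$. Because $|x|\,y=\sum_{j}\sum_{i}2^j y_j x_i$, this phase factors into a product of two-qubit diagonal gates $\mathrm{diag}(1,1,1,e^{2\pi i 2^j/2^m})$ acting on the pairs $(x_i,y_j)$. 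Finally, an inverse QFT on the $m$-qubit register maps $\sum_y e^{2\pi i |x| y/2^m}\ket{y}$ to $\ket{|x|}$ exactly. This is exact because $|x|\le n<2^m$ and $|x|$ is an integer. The input register is unchanged, since all the gates are diagonal in its computational basis.

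The inverse QFT on $m=O(\log n)$ qubits has depth $O(m^2)=O(\log^2 n)$ with no ancillas, even with the textbook circuit. The Hadamard layer has depth $1$. So the only real cost is the phase layer, which consists of $nm$ commuting two-qubit diagonal gates.

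The main obstacle is scheduling this phase layer. It is a complete bipartite interaction between $n$ input qubits and $m$ output qubits. Each output qubit $y_j$ takes part in $n$ gates, so applied naively the layer has depth $n$. Without ancillas we cannot fan out $y_j$.

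My first approach would instead accumulate the partial sums into the Fourier register using the input qubits themselves as temporary storage. I would compress the input reversibly by a carry-save (3-to-2 adder) tree applied in place on $x$: full-adder gadgets replace triples of bits $(a,b,c)$ by $(a\oplus b\oplus c,\ \mathrm{maj}(a,b,c), \cdot)$ with weights tracked. Implemented reversibly in place, each such gadget maps $(a,b,c)$ to $(a\oplus b\oplus c,\ \mathrm{maj}(a,b,c),\ a\oplus b)$. This is a bijection, and the third bit then carries weight $0$ in the count.

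After $O(\log n)$ rounds, each of constant depth, the weighted sum $|x|$ is represented by $O(\log n)$ bits with weights $2^k$. The phase layer then involves only $O(\log n)\times m=O(\log^2 n)$ diagonal gates, which can be scheduled in depth $O(\log^2 n)$. Afterwards I would run the compression tree in reverse to restore $x$.

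The delicate point is making the in-place compression bookkeeping exact: weights must be preserved at every round, and no qubits beyond the input and output registers may be used. I would verify this with the invariant that $\sum_i \mathrm{wt}_i\cdot(\text{bit}_i)=|x|$ holds after each round.

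The total depth is $O(\log n)+O(\log^2 n)+O(\log^2 n)=O(\log^2 n)$, and no clean ancillas are used.
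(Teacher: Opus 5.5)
Your outer skeleton is right: Hadamards on the output register, the phase $e^{2\pi i|x|y/2^m}$, then an inverse QFT, with exactness because $|x|<2^m$. The step that makes the phase layer shallow, however, cannot work, and this is not a bookkeeping issue. First, your gadget $(a,b,c)\mapsto(a\oplus b\oplus c,\mathrm{maj}(a,b,c),a\oplus b)$ is not a bijection. It is symmetric in $a,b$, so $(1,0,0)$ and $(0,1,0)$ both map to $(1,0,1)$, and $(0,0,1)$ is never reached. More fundamentally, no in-place reversible compression of $x$ exists. Suppose $U$ is a permutation of $\{0,1\}^n$ and $w\in\mathbb{Z}_{\ge 0}^n$ satisfies $w\cdot U(x)=|x|$ for all $x$. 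Since $U$ is a bijection, $\{z: w\cdot z=h\}=U(\{x:|x|=h\})$, so this set has exactly $\binom{n}{h}$ elements for every $h$. For $h=0$: all $2^{\#\{i:\,w_i=0\}}$ strings supported on weight-$0$ positions satisfy $w\cdot z=0$, but only $\binom{n}{0}=1$ string may. Hence there are no weight-$0$ (garbage) positions. For $h=1$: with all weights positive, the strings with $w\cdot z=1$ are exactly the unit vectors at weight-$1$ positions. There must be $\binom{n}{1}=n$ of them, so every weight equals $1$ and nothing has been compressed. The input register alone therefore cannot hold $|x|$ in fewer weighted bits. This is why carry-save summation (Lemma~\ref{lem:prelim-carry-save}) comes with $O(n)$ clean ancillas.

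The missing idea is the one you ruled out: $y_j$ \emph{can} be fanned out without ancillas, using the input qubits themselves as dirty targets. This is the mechanism the paper uses for the phase interactions $\exp(i\theta_j y_j\sum_i x_i)$ of the cited Fourier-encoding construction (Lemma~\ref{lem:single-output-phase-gadget-2d}, built on Lemma~\ref{lem:dirty-fanout-tree}). Place $y_j$ at the root of a balanced binary tree whose other vertices are $x_1,\ldots,x_n$. The two CNOT sweeps of Lemma~\ref{lem:dirty-fanout-tree} map every $x_i\mapsto x_i\oplus y_j$ in depth $O(\log n)$. Let $P(\alpha)=\operatorname{diag}(1,e^{i\alpha})$. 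Applying $P(-\theta_j/2)$ to every $x_i$ contributes $\exp\bigl(-\tfrac{i\theta_j}{2}\sum_i(x_i+y_j-2x_iy_j)\bigr)$. After undoing the fan-out, apply $P(\theta_j/2)$ to each $x_i$ and $P(n\theta_j/2)$ to $y_j$; these cancel the linear terms and leave exactly $e^{i\theta_j y_j|x|}$. Take $\theta_j=2\pi 2^j/2^m$ and process the $m$ output qubits one after another. The whole phase layer then has depth $m\cdot O(\log n)=O(\log^2 n)$, uses no clean ancillas, and restores $x$. Together with your Hadamard layer and the $O(\log^2 n)$ inverse QFT, this gives the lemma.
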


The construction is based on Fourier phase encoding: the value \(|x|\) is
encoded into phases of the output register, and an inverse quantum Fourier
transform then converts this phase encoding into the binary representation
\(\ket{|x|}\).

\begin{lemma}[Carry-save Hamming weight computation, \cite{Gossett1998CarrySave}]
\label{lem:prelim-carry-save}
In the
standard all-to-all model, \(\HWC_n\) can be implemented  with depth \(O(\log n)\) and with \(O(n)\) additional
clean ancillary qubits.
\end{lemma}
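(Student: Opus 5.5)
The plan is a Wallace-tree (carry-save) reduction of the $n$ input bits, followed by a single fast addition of two $O(\log n)$-bit numbers and a full uncomputation of the workspace. Throughout, we treat the input bits as a multiset of weighted bits. At the start every bit $x_i$ has weight $2^0$.

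\textbf{Step 1: carry-save layers.} In each layer, group the bits of equal weight $2^j$ into disjoint triples. Apply a full adder to each triple $(a,b,c)$. It writes the sum bit $a\oplus b\oplus c$ (weight $2^j$) and the carry bit $\mathrm{MAJ}(a,b,c)$ (weight $2^{j+1}$) into two fresh clean ancillas. This takes constant depth: the parity is formed by three CNOTs, and the majority by a constant number of Toffolis, each decomposable into $O(1)$ two-qubit gates. The inputs are left untouched, so the operation is reversible.

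Each layer preserves the invariant that the weighted sum of the current bits equals $|x|$. It also shrinks the count of live bits by a factor of about $2/3$. After $O(\log n)$ layers, every weight class holds at most two bits, and the weights run up to $2^{O(\log n)}$. The remaining bits therefore represent two numbers $P$ and $Q$ with $P+Q=|x|$, each of $m+O(1)$ bits.

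The ancilla count is bounded by a geometric series: layer $t$ creates at most $2\cdot(2/3)^t n/3$ new bits, for a total of $O(n)$.

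\textbf{Step 2: final addition.} Add $P$ and $Q$ into the output register with a logarithmic-depth adder, for example a Draper--Kutin--Rains--Svore carry-lookahead adder. On $O(\log n)$-bit operands this has depth $O(\log\log n)$ and uses $O(\log n)$ ancillas. Since $|x|\le n$, the output fits in $m$ qubits, and the higher-order bits of the sum must vanish. We can therefore write the sum out-of-place, modulo $2^m$, via a CNOT copy of the adder's result.

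\textbf{Step 3: uncomputation.} Run Steps 2 and 1 in reverse, excluding the final copy into the output register. This restores every ancilla to $\ket{0}$ and at most doubles the depth. The total is $O(\log n)$ depth with $O(n)$ clean ancillas.

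\textbf{Main obstacle.} The hardest part is the bookkeeping of the carry-save schedule. We must show that bits of unequal weight never need to be combined, and that the number of live bits per weight class falls to $\le 2$ within $O(\log n)$ layers, even though carries keep feeding into higher classes. To handle this, we bound the total live-bit count $N_t \le (2/3)N_{t-1}+O(\log n)$, where the additive term counts the leftover (fewer than three) bits per class. This gives $N_t=O(\log n)$ after $O(\log n)$ layers.

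From that point, a further $O(\log\log n)$ layers, each tracking the per-class count $\lceil 2c/3\rceil$ with incoming carries, bring every class down to at most two bits. This last reduction is the step that most needs care. The argument here is the standard Wallace-tree analysis of \cite{Gossett1998CarrySave}.
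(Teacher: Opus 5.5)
Your proposal is correct and follows the same route as the paper, which gives no proof of its own beyond citing Gossett's carry-save (Wallace-tree) population count: parallel $3$-to-$2$ counters shrink the live weighted bits by a constant factor per layer, the remaining bits are added into the output register, and the workspace is uncomputed, which is your Steps 1--3 with more detail. One small slip does not affect the bound: with only full adders, as in your Step 1, a class of two bits that receives a carry becomes a class of three, and that can move up one weight per layer, so the final cleanup to at most two bits per class may take on the order of $\log n$ layers rather than $O(\log\log n)$ (one extra layer that also applies half adders to pairs avoids this), which still fits within the $O(\log n)$ depth budget.
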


This follows from the standard reversible carry-save, or Wallace-tree,
population-count construction.  Carry-save techniques have long been used in quantum arithmetic to parallelize additions and defer carry propagation \cite{Gossett1998CarrySave}.  In the present setting, parallel \(3\)-to-\(2\) counters reduce the number of active weighted bits by a constant factor per layer, and the remaining weighted bits are then added into the output register.  Related Hamming-weight primitives appear, for example, in Hamming-weight phasing for fault-tolerant simulation \cite{Kivlichan2020Trotterization} and in logarithmic-depth circuits for
Hamming-weight projections
\cite{RethinasamyLaBordeWilde2024HammingWeight}.

\begin{lemma}[Takahashi--Tani counting, \cite{TakahashiTani2016Collapse}]
\label{lem:prelim-tt-counting}
In the dynamic all-to-all model, \(\HWC_n\) can be implemented with constant
depth and with \(O(n^2)\) additional clean ancillary qubits.
\end{lemma}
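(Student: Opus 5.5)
The plan is to take the constant-depth Takahashi--Tani counting circuit \cite{TakahashiTani2016Collapse}, which lives in the unbounded fan-out model, and compile each fan-out gate into the dynamic model using Lemma~\ref{lem:fanout-path}. The ancilla bound then follows from counting the fan-out targets that are alive in any single layer.

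First I recall the structure of the Takahashi--Tani circuit. It has a constant number of layers. In the phase-encoding stage, each branch $k$ of a redundant family of Fourier phase states prepares a qubit in $\tfrac{1}{\sqrt2}(\ket{0}+e^{2\pi i\,\alpha_k|x|}\ket{1})$. To do this in constant depth, the branch qubit is fanned out into a cat state on $n$ targets. Target $i$ then receives the controlled phase $e^{2\pi i\,\alpha_k x_i}$ from input bit $x_i$, and the cat state is collapsed back.

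This requires each $x_i$ to interact with every branch in parallel. Hence each input bit is itself fanned out to one copy per branch. The subsequent constant-depth decoder, which turns the phase states into the binary representation $\ket{|x|}$ through threshold and parity computations, is likewise built from a constant number of fan-out, parity, and single-qubit layers. Its fan-out widths are bounded by the number of branches.

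Second, I replace every fan-out $F_r$ by the measurement-based implementation of Lemma~\ref{lem:fanout-path}, which uses depth $O(1)$ and $O(r)$ ancillas. After each use, the leaves are released by the $X$-basis recovery procedure, which restores the root up to a Pauli correction fixed by classical feedforward.

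Because there are only $O(1)$ layers, the overall depth stays $O(1)$. Ancillas can be reused between layers, so the total ancilla count is bounded by the maximum over layers of the summed fan-out widths, plus the branch qubits and decoder workspace.

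Third, I bound that sum. If the decoder uses $O(n)$ branches, then there are two sources of fan-out targets. The input-copy fan-outs contribute $n\cdot O(n)$ targets. The branch cat states contribute $O(n)\cdot n$ targets. Each of these is $O(n^2)$, and the decoder workspace in \cite{TakahashiTani2016Collapse} is $O(n\log n)$. The total is therefore $O(n^2)$, which is the claimed bound.

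The main obstacle is this bookkeeping, for two reasons. First, I must verify that the number of phase branches required by the Takahashi--Tani decoder is $O(n)$ rather than larger, and that no decoder layer performs a fan-out whose total width exceeds $O(n^2)$. Second, I must check that the Pauli corrections from cat-state recovery commute with, or can be pushed through, the diagonal phase gates. I would handle this by observing that $Z$-type corrections commute with the diagonal phase layers, and that $X$-type corrections on the roots can be undone immediately by feedforward before the next layer. In that case the correction adds only constant depth per layer.
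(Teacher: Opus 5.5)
Your proposal is correct and follows essentially the paper's route: compile the constant-depth Takahashi--Tani fan-out circuit by replacing each fan-out with the measurement-based implementation of Lemma~\ref{lem:fanout-path}. The $O(n^2)$ bound comes from the $O(n)$ phase branches, each needing $n$ input copies and an $n$-leaf cat state, which is exactly Lemma~\ref{lem:weighted-counting} with $M=R=n$. Your concern about Pauli corrections does not arise, because each measurement-based fan-out (and the $X$-basis recovery) implements its logical operation exactly once the feedforward corrections are applied before the next layer.
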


Takahashi and Tani gave an exact constant-depth Hamming weight construction in the
unbounded fan-out model~\cite{TakahashiTani2016Collapse}.  Since fan-out can be
implemented in constant depth using mid-circuit measurements and classical
feedforward with linear ancillary overhead
\cite{BaumerWoerner2025LongRangeFanout}, their construction gives the stated
dynamic-circuit primitive.  Conceptually, it prepares a redundant Fourier
encoding of the Hamming weight and decodes all output bits in parallel by
trying all lower-bit guesses.

\section{All-to-All Constructions for Hamming Weight Computation}
\label{sec:all-to-all}

In this section, we prove the all-to-all constructions.  We first combine the
ancilla-free Hamming-weight construction from Lemma~\ref{lem:prelim-hw-tcad}
with carry-save summation to obtain logarithmic depth with sublinear clean
workspace.  We then use measurement-based fan-out to obtain constant-depth
constructions with nearly linear ancillary cost.

\subsection{Blocked Hamming Weight Computation in Standard Circuit Model}

\begin{theorem}[Blocked HWC]
\label{thm:alltoall-blocked}
Let \(2\le B\le n\).  In the standard all-to-all model, \(\HWC_n\) can be implemented with depth
\(O(\log^2 B+\log n)\) and with \(O((n/B)\log B)\) additional clean ancillary
qubits.
\end{theorem}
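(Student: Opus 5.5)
We split the input $x$ into $L=\lceil n/B\rceil$ blocks $X_1,\dots,X_L$, each of at most $B$ bits. For each block we allocate a fresh clean register of $m_B=\lceil\log(B+1)\rceil$ qubits. These registers use $L\cdot m_B=O((n/B)\log B)$ ancillas in total.

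\textbf{Step 1 (local weights).} The blocks are disjoint, so we apply Lemma~\ref{lem:prelim-hw-tcad} to all blocks in parallel. Each block's register receives its local weight $\ket{w_\ell}$, where $w_\ell=|X_\ell|$. This costs depth $O(\log^2 B)$ and no further ancillas, because in that lemma the block's register plays the role of the output register.

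\textbf{Step 2 (global sum).} We write $w_\ell=\sum_{j<m_B}2^j b_{\ell,j}$, so that $|x|=\sum_{\ell,j}2^j b_{\ell,j}$. This is a weighted population count of $M=L m_B=O((n/B)\log B)$ bits, with at most $L$ bits in each of the $m_B$ weight columns. We compute it into the output register with the carry-save (Wallace-tree) construction underlying Lemma~\ref{lem:prelim-carry-save}, generalised to weighted columns. In each layer, parallel reversible $3$-to-$2$ counters act on triples of bits within a column. Each counter writes the sum bit in that column and the carry bit in the next column, using fresh clean ancillas.

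After $O(\log M)=O(\log n)$ layers, every column holds at most two bits. The number of columns is $O(\log n)$, since the total is at most $n$. We then add the two remaining $O(\log n)$-bit numbers into the output register with a logarithmic-depth adder (for example a carry-lookahead adder), which uses $O(\log n)$ ancillas. The counter layers reduce the bit count geometrically, so the ancillas they consume total $O(M)$. The overall cost of this step is depth $O(\log n)$ and $O((n/B)\log B)+O(\log n)$ ancillas. The $O(\log n)$ term is absorbed, since $(n/B)\log B\ge \log n$ up to constants when $B\le n$: for $B\le\sqrt n$ the ratio $n/B$ is already large, and for $B>\sqrt n$ we have $\log B\ge \tfrac12\log n$.

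\textbf{Step 3 (uncompute).} We run the inverse of the carry-save layers from Step~2, leaving the output register untouched; this clears the counter ancillas. We then run the inverse of Step~1, which clears the block registers. Every ancilla is now restored to $\ket 0$. The depth stays $O(\log^2 B+\log n)$ and the ancilla count stays $O((n/B)\log B)$.

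The main obstacle is Step~2. The issue is ensuring that the garbage produced by the carry-save tree (the intermediate sum and carry bits) can be uncomputed cleanly without disturbing the output. It must also be checked that the weighted-column version keeps total workspace linear in $M$ rather than $M\log M$. We handle this by the compute–copy–uncompute pattern: the final adder writes into the output register, and everything else is reversed. For the workspace bound, we note that layer $t$ of the tree allocates $O((2/3)^t M)$ new carry qubits, so the total over all layers is a geometric series bounded by $O(M)$.
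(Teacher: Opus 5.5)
Your proposal is correct and follows essentially the same route as the paper's proof. Both partition the input into blocks, compute all local weights in parallel with the ancilla-free Fourier construction of Lemma~\ref{lem:prelim-hw-tcad}, sum the resulting $M=O((n/B)\log B)$ weighted bits into the output register with a carry-save tree, and then reverse both stages. You make explicit several points the paper leaves implicit: the weighted-column Wallace tree, the final logarithmic-depth adder, the $O(M)$ bound on counter ancillas, and the absorption of the $O(\log n)$ adder workspace into $O((n/B)\log B)$. None of these changes the argument.
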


\begin{proof}
Partition the \(n\) input bits $x\in\{0,1\}^n$ into \(L=\lceil n/B\rceil\) blocks, each of
size at most \(B\).  Let \(h_\ell\) be the Hamming weight of the \(\ell\)-th
block.  By Lemma~\ref{lem:prelim-hw-tcad}, each local Hamming weight can be
computed in depth \(O(\log^2 B)\) using no additional clean workspace beyond
its local output register.  Since the blocks are disjoint, all local weights
can be computed in parallel:
\[
    \ket{x}\ket{0}^{\otimes L\lceil\log(B+1)\rceil}
    \mapsto
    \ket{x}\ket{h_1}\cdots\ket{h_L}, \quad \forall x\in\{0,1\}^n.
\]
The total number of local-weight qubits is $M \le  L \cdot \lceil\log(B+1)\rceil=O\!\left(\frac nB\log B\right)$.

It remains to compute \(h_1+\cdots+h_L=|x|\) into the final output register.
Write each local weight as \(h_\ell=\sum_j2^j b_{\ell,j}\).  Then $|x|=\sum_{\ell,j}2^j b_{\ell,j}$.
Thus, the bits \(b_{\ell,j}\) can be viewed as \(M\) weighted bits. The
standard carry-save population-count construction applies to these weighted
bits: parallel \(3\)-to-\(2\) counters reduce the number of active weighted
bits by a constant factor per layer, and the remaining weighted bits are added
into the \(m\)-qubit output register.  By Lemma~\ref{lem:prelim-carry-save},
this step has depth \(O(\log M)\le O(\log n)\) and uses \(O(M)\) clean
workspace.

Finally, reverse the carry-save summation and then reverse the local
Hamming-weight computations.  This restores all intermediate registers to
\(\ket0\), while leaving the final output register in the state
\(\ket{|x|}\).  Hence the total depth is \(O(\log^2 B+\log n)\), and the
total additional clean workspace is $ O(M)=O\!\left(\frac nB\log B\right)$.
\end{proof}

\begin{corollary}[Sublinear-ancilla logarithmic-depth HWC]
\label{cor:alltoall-sublinear}
In the standard all-to-all model,
\(\HWC_n\) can be implemented with depth \(O(\log n)\) and with $o(n)$
additional clean ancillary qubits.
\end{corollary}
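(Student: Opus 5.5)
The corollary follows from Theorem~\ref{thm:alltoall-blocked} by choosing the block size \(B\) as a function of \(n\). Two requirements must hold at once. The depth \(O(\log^2 B+\log n)\) must collapse to \(O(\log n)\), which forces \(\log^2 B=O(\log n)\), that is, \(\log B=O(\sqrt{\log n})\). The ancilla count \(O((n/B)\log B)\) must be \(o(n)\), which requires \(\log B/B\to 0\), that is, \(B\to\infty\).

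The natural choice is \(B=\max\{2,\lfloor 2^{\sqrt{\log n}}\rfloor\}\). This satisfies \(2\le B\le n\) for all \(n\ge 2\), and \(\log B\le\sqrt{\log n}\).

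With this choice the two bounds are checked as follows.

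\begin{itemize}
  \item \textbf{Depth.} Since \(\log^2 B\le\log n\), the depth bound from Theorem~\ref{thm:alltoall-blocked} becomes \(O(\log n+\log n)=O(\log n)\).
  \item \textbf{Ancillas.} The count is \(O\!\left(n\sqrt{\log n}\,/\,2^{\sqrt{\log n}}\right)\). Because \(2^{\sqrt{\log n}}\) grows faster than any power of \(\sqrt{\log n}\), the ratio \(\sqrt{\log n}/2^{\sqrt{\log n}}\to0\). Hence the count is \(o(n)\); indeed it is \(n/2^{\Omega(\sqrt{\log n})}\).
\end{itemize}

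The only step needing any care is the floor and the small-\(n\) cases, where \(B\) must remain a valid integer in \([2,n]\). These affect only constants and finitely many \(n\), so they do not change the asymptotic bounds. Nothing else is an obstacle.

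An alternative choice is \(B=\lceil\log n\rceil\). This also works, giving depth \(O((\log\log n)^2+\log n)=O(\log n)\) and \(O(n\log\log n/\log n)=o(n)\) ancillas. Either choice proves the corollary; the first gives the smaller ancilla count.
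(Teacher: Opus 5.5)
Your proposal is correct and matches the paper's proof, which instantiates Theorem~\ref{thm:alltoall-blocked} with \(B=2^{\sqrt{\log n}}\) so that \(\log^2 B=\log n\) and \((n/B)\log B=n\sqrt{\log n}/2^{\sqrt{\log n}}=o(n)\). Your handling of the floor and of \(2\le B\le n\), and your alternative choice \(B=\lceil\log n\rceil\), are valid refinements that the paper leaves implicit.
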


\begin{proof}
Apply Theorem~\ref{thm:alltoall-blocked} with
\(B=2^{\sqrt{\log n}}\).  Then \(\log^2 B=\log n\), and $\frac nB\log B
    =
    \frac{n\sqrt{\log n}}{2^{\sqrt{\log n}}}=o(n)$.
The claimed bounds follow.
\end{proof}

\subsection{Weighted Counting in Dynamic Circuit Model }

The next lemma is the main primitive for the dynamic constructions.  It uses
the parallel decoding technique from the exact fan-out counting construction
of Takahashi and Tani~\cite[Sec.~4.1 and App.~A.3]{TakahashiTani2016Collapse}.  Their
construction decodes a redundant Fourier encoding of an ordinary Hamming
weight in constant depth.  Here we use the same decoding idea, but prepare the
redundant Fourier encoding for a weighted sum.

\begin{lemma}[Weighted counting]
\label{lem:weighted-counting}
Let \(z_1,\ldots,z_M\in\{0,1\}\) be input bits with fixed nonnegative integer
weights \(a_1,\ldots,a_M\).  Suppose weighted sum  $S=\sum_{i=1}^M a_i z_i$
satisfies \(0\le S\le R\).  Then the weighted counting map
\[
    \ket{z}\ket{0^q}\mapsto\ket{z}\ket{S},
    \qquad z=(z_1,z_2,\ldots,z_M)\in\{0,1\}^M, \quad q=\lceil\log(R+1)\rceil,
\]
can be implemented with constant depth in the dynamic all-to-all model, using $ O\bigl(MR+R\,\mathrm{polylog}\,R\bigr)$ additional clean ancillary qubits.
\end{lemma}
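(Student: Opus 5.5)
The plan is to follow the Takahashi--Tani parallel decoding scheme from Lemma~\ref{lem:prelim-tt-counting}, changing only how the redundant Fourier encoding is prepared: the encoding will be written for the weighted sum \(S\) instead of an ordinary Hamming weight.

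\textbf{Phase states.} Write \(N=R+1\). The decoder needs a family of single-qubit (or small-register) phase states of the form \(\ket{\phi_k(S)}=\tfrac{1}{\sqrt2}(\ket0+e^{2\pi i S/2^{k}}\ket1)\), together with shifted variants for each candidate value of the lower bits. In total this is \(O(R)\) phase states, since each candidate \(c\in\{0,\dots,R\}\) contributes a constant number of them.

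To prepare one such state, I will proceed in four steps.
\begin{itemize}
\item Put a root qubit in \(\ket+\).
\item Fan it out by Lemma~\ref{lem:fanout-path} into an \(M\)-qubit cat state \((\ket{0^M}+\ket{1^M})/\sqrt2\).
\item Apply, in one parallel layer, the two-qubit controlled phases \(\mathrm{diag}(1,1,1,e^{2\pi i a_i/2^k})\) between the \(i\)-th cat qubit and \(z_i\).
\item Uncompute the fan-out by measuring the leaves in the \(X\) basis and applying the Pauli correction on the root, as described after Lemma~\ref{lem:fanout-path}.
\end{itemize}
The root then carries the phase \(e^{2\pi i S/2^k}\) on \(\ket1\). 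Each \(z_i\) participates in \(O(R)\) such gadgets, so \(z_i\) must itself be fanned out to \(O(R)\) copies beforehand. This is again constant depth with \(O(MR)\) ancillas, since a fan-out to \(r\) targets costs \(O(r)\) ancillas. Summed over the \(O(R)\) phase states, the cat states also cost \(O(MR)\) ancillas.

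\textbf{Decoding.} Next, apply the Takahashi--Tani decoder \cite[Sec.~4.1, App.~A.3]{TakahashiTani2016Collapse} to these phase states. That decoder depends only on the phases \(e^{2\pi i S/2^k}\) and on the promise \(0\le S\le R\), not on how \(S\) arose from the inputs. It determines each output bit \(k_j\) by guessing the lower bits, rotating away their contribution, and applying a Hadamard. It then combines the \(O(R)\) guesses with constant-depth exact OR/equality tests, which are themselves implemented with fan-out in the dynamic model.

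I will need to check that this stage costs \(O(R\,\mathrm{polylog}\,R)\) ancillas. That is, it should be the decoder's cost for a range of size \(R\), independent of \(M\). The dependence on \(M\) enters only through the preparation step. The check amounts to reading off that Takahashi--Tani's \(O(n^2)\) total comes from \(n\cdot n\) preparation plus \(n\,\mathrm{polylog}\,n\) decoding.

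\textbf{Cleanup.} Finally, uncompute the phase states and the copies of the \(z_i\) by running the preparation in reverse. The decoder can be made to leave its workspace clean, in the standard compute--copy--uncompute style, because its result is classical on each basis input. The result is \(\ket z\ket S\) with all ancillas restored.

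\textbf{Main obstacle.} The hardest part will be verifying that the decoder's internals really work for an arbitrary promised integer \(S\in[0,R]\) and that its ancilla count scales as \(R\,\mathrm{polylog}\,R\). If the decoder instead needs phases at many resolutions \(2^k\) per candidate, I would first try to bound the number of distinct phase states by \(O(R)\) by sharing them across candidates. Failing that, I would fold the extra \(\log R\) factor into the polylog term, or into \(MR\) up to logarithmic factors.
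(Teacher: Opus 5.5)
Your proposal is correct and follows essentially the same route as the paper. Both constructions use one $\ket{+}$-rooted cat state per branch, one parallel layer of controlled phases $\operatorname{diag}(1,1,1,\omega_k^{a_i})$ against fanned-out copies of the $z_i$ (total $O(MR)$ ancillas), disentangling of the leaves (your $X$-basis measurement is an equivalent substitute for the paper's inverse fan-out), and then the Takahashi--Tani decoder with $O(R\,\mathrm{polylog}\,R)$ workspace, cleaned up by compute--copy--uncompute.

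Your worry about the number of phase states is settled by the paper's indexing. Branches are pairs $(k,y)$ with $0\le k<q$ and $y\in\{0,1\}^k$, all carrying the same unshifted state $\ket{\phi_{k,y}(S)}$, and the lower-bit correction for $y$ is applied inside the decoder. So there are exactly $2^q-1=O(R)$ branches. The paper likewise takes the decoder's ancilla cost from Takahashi--Tani rather than re-deriving it.
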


\begin{proof}
The case \(R=0\) is trivial, so assume \(R\ge1\).  Let
\(q=\lceil\log(R+1)\rceil\). For \(t\in\{0,\ldots,R\}\) and
\(0\le k<q\), define
\[
    \ket{\phi_{k,y}(t)}
    =
    \frac{\ket0+e^{2\pi i t/2^{k+1}}\ket1}{\sqrt2}, \quad \forall (k,y)\in \mathcal{I},
\]
where $\mathcal{I}=\{(k,y): 0\le k <q, y\in\{0,1\}^k\}$ is the index set of phase-state branches.
The number of branches is
\[
    N_{\mathrm{br}}=\sum_{k=0}^{q-1}2^k=2^q-1=O(R).
\]
Given the corresponding family of states \(\ket{\phi_{k,y}(t)}\), the
Takahashi--Tani decoder outputs the binary representation of \(t\) in constant
fan-out depth using \(O(R\,\mathrm{polylog}\,R)\) ancillary qubits. 

It remains to prepare the same redundant phase encoding for the weighted sum
\(S=\sum_i a_i z_i\).

Fix one branch \((k,y)\), and let $\omega_k=e^{2\pi i/2^{k+1}}$.
Since 
$ 
    e^{2\pi iS/2^{k+1}}
    =
    \omega_k^S
    =
    \prod_{i=1}^M \omega_k^{a_i z_i},
$ 
it suffices to make the \(i\)-th input bit contribute the phase
\(\omega_k^{a_i z_i}\) to the \(\ket1\)-component of a phase qubit.

For this branch, start with one phase qubit in the state \(\ket+\).  Using fan-out $F_M$ implemented with constant depth and $O(M)$ ancillary qubits (Lemma \ref{lem:fanout-path}), create a cat state with \(M\) leaves,
\[
    \ket+ \ket{0^M}
    \longmapsto
    \frac{\ket0^{\otimes(M+1)}+\ket1^{\otimes(M+1)}}{\sqrt2}.
\]
For each \(i\), apply the two-qubit diagonal gate
\[
    D_i^{(k)}:
    \ket{z_i}\ket{\ell_i}
    \longmapsto
    \omega_k^{a_i z_i\ell_i}\ket{z_i}\ket{\ell_i}, \quad \forall z_i,\ell_i\in\{0,1\}
\]
between the input bit \(\ket {z_i}\) and the \(i\)-th cat leaf \(\ket {\ell_i}\).  Equivalently,
\(D_i^{(k)}\) is a controlled phase gate whose nontrivial phase is
\(\omega_k^{a_i}\).  A schematic example for \(M=3\) is shown in
Fig.~\ref{fig:weighted-phase-branch}.

\begin{figure}[t]
\centering
\begin{quantikz}[row sep=0.28cm, column sep=0.42cm]
\lstick{$\ket{+}$}
    & \gate[wires=4]{F_3}
    & \qw
    & \qw
    & \qw
    & \gate[wires=4]{F_3^\dagger}
    & \qw
    & \rstick{$\ket{\phi_k(S)}$} \\
\lstick{$\ket0$}
    &
    & \gate{P_1^{(k)}}
    & \qw
    & \qw
    &
    & \qw
    & \rstick{$\ket0$} \\
\lstick{$\ket0$}
    &
    & \qw
    & \gate{P_2^{(k)}}
    & \qw
    &
    & \qw
    & \rstick{$\ket0$} \\
\lstick{$\ket0$}
    &
    & \qw
    & \qw
    & \gate{P_3^{(k)}}
    &
    & \qw
    & \rstick{$\ket0$} \\
\lstick{$\ket{z_1}$}
    & \qw
    & \ctrl{-3}
    & \qw
    & \qw
    & \qw
    & \qw
    & \rstick{$\ket{z_1}$} \\
\lstick{$\ket{z_2}$}
    & \qw
    & \qw
    & \ctrl{-3}
    & \qw
    & \qw
    & \qw
    & \rstick{$\ket{z_2}$} \\
\lstick{$\ket{z_3}$}
    & \qw
    & \qw
    & \qw
    & \ctrl{-3}
    & \qw
    & \qw
    & \rstick{$\ket{z_3}$}
\end{quantikz}
\caption{Weighted phase preparation for one branch \((k,y)\) when \(M=3\).
Here \(P_i^{(k)}=\operatorname{diag}(1,\omega_k^{a_i})\), so the controlled
operation between \(z_i\) and the \(i\)-th cat leaf implements
\(D_i^{(k)}:\ket{z_i}\ket{\ell_i}\mapsto
\omega_k^{a_i z_i\ell_i}\ket{z_i}\ket{\ell_i}\).}
\label{fig:weighted-phase-branch}
\end{figure}

The gates \(D_i^{(k)}\) act on distinct leaf qubits and, after copying the
input bits to this branch if necessary, can be performed in one parallel
layer.  On the all-zero component of the cat state, every \(\ell_i=0\), so no
phase is accumulated.  On the all-one component, every \(\ell_i=1\), and the
accumulated phase is
$ 
    \prod_{i=1}^M\omega_k^{a_i z_i}
    =
    \omega_k^{\sum_i a_i z_i}
    =
    e^{2\pi iS/2^{k+1}} .
$ 
Thus the cat state becomes
\[
    \frac{
        \ket0^{\otimes(M+1)}
        +
        e^{2\pi iS/2^{k+1}}\ket1^{\otimes(M+1)}
    }{\sqrt2}.
\]
Applying the inverse fan-out restores the leaves to \(\ket0\) and leaves the
original phase qubit in the state \(\ket{\phi_{k,y}(S)}\).  Hence one branch can
be prepared in constant depth using \(O(M)\) ancillary qubits.

We prepare all \(N_{\mathrm{br}}=O(R)\) branches in parallel.  If needed, each
input bit \(z_i\) is first fanned out to all branches in which it participates.
The input copies and cat-state leaves together use
\(O(MN_{\mathrm{br}})=O(MR)\) ancillary qubits, and the entire weighted phase preparation has constant depth.

Finally, apply the Takahashi--Tani decoder to the prepared redundant encoding.
If temporary registers are produced, we use the standard
compute--copy--uncompute procedure to write \(\ket S\) into the final output
register and restore all temporary registers to \(\ket0\).  The total
ancillary cost is therefore $O(MR)+O(R\,\mathrm{polylog}\,R) = O\bigl(MR+R\,\mathrm{polylog}\,R\bigr)$,
and the depth remains constant.
\end{proof}

\subsection{Blocked and Recursive Hamming Weight Computation in Dynamic Circuit Model}

The following theorem shows a one-level blocked Hamming weight computation in a dynamic circuit.
\begin{theorem}
\label{thm:alltoall-dynamic-one-level}
Let \(2\le B\le n\).  In the dynamic all-to-all model, \(\HWC_n\) can be
implemented with constant depth and with $O\!\left(nB+\frac{n^2\log B}{B}+n\,\mathrm{polylog}\,n\right)$
additional clean ancillary qubits.
\end{theorem}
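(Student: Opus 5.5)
The plan is to mirror the blocked construction of Theorem~\ref{thm:alltoall-blocked}, replacing both phases with the dynamic weighted-counting primitive of Lemma~\ref{lem:weighted-counting}.

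\emph{Local phase.} Partition the input into \(L=\lceil n/B\rceil\) blocks of size at most \(B\). Inside each block, apply Lemma~\ref{lem:weighted-counting} with all weights \(a_i=1\), \(M\le B\) and \(R=B\). This writes the local weight \(h_\ell\) into a \(\lceil\log(B+1)\rceil\)-qubit register in constant depth. It uses \(O(B\cdot B+B\,\mathrm{polylog}\,B)=O(B^2)\) ancillas per block. Running all blocks in parallel on disjoint qubits gives constant depth, and the ancilla count is \(O(L B^2)=O(nB)\). The local-weight registers themselves add only \(O((n/B)\log B)\) qubits, which is dominated.

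\emph{Global phase.} Write \(h_\ell=\sum_j 2^j b_{\ell,j}\). The \(M=L\lceil\log(B+1)\rceil=O((n/B)\log B)\) bits \(b_{\ell,j}\) then carry fixed weights \(2^j\), and their weighted sum is \(S=|x|\le n=:R\). Apply Lemma~\ref{lem:weighted-counting} once more, this time writing \(S\) directly into the \(m\)-qubit output register; note that \(q=\lceil\log(n+1)\rceil=m\) when \(R=n\). This step has constant depth and uses \(O(MR+R\,\mathrm{polylog}\,R)=O\!\left(\frac{n^2\log B}{B}+n\,\mathrm{polylog}\,n\right)\) ancillas.

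\emph{Cleanup and accounting.} Finally, run the local phase in reverse to restore the local-weight registers to \(\ket0\). The global primitive already returns its own workspace to \(\ket0\), so after this all temporary registers are clean. Summing the three contributions gives the claimed ancilla bound with constant total depth.

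The main subtlety is the uncomputation in the dynamic model. The local phase uses measurements, so it is not literally a unitary that can be inverted gate by gate. I would handle this by noting that Lemma~\ref{lem:weighted-counting} implements the exact unitary map \(\ket{z}\ket{0}\mapsto\ket{z}\ket{S}\), with measurement outcomes corrected by Pauli feedforward. The inverse map can therefore be realized by running the same constant-depth construction a second time: it XORs \(h_\ell\) into a register already holding \(h_\ell\), which returns it to \(0\). That second run costs the same ancillas, which can be reused. A second point to check is that ancillas can be recycled between the phases; even if they are not, the bounds simply add.
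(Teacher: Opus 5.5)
Your proposal is correct and follows the paper's proof essentially step for step. You compute the blocks of size $B$ in parallel with $O(B^2)$ ancillas each; the paper cites Lemma~\ref{lem:prelim-tt-counting} for this, while you get the same bound from Lemma~\ref{lem:weighted-counting} with unit weights and $R=B$. You then make one global call to Lemma~\ref{lem:weighted-counting} on the $O((n/B)\log B)$ weighted bits with $R=n$, and uncompute the local registers. Your explicit handling of uncomputation in the dynamic model is a valid justification of the step the paper simply calls ``reversing'' the local computations: you re-run the XOR-output form of the primitive, which follows from the compute--copy--uncompute structure in the lemma's proof.
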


\begin{proof}
Partition the input into \(L=\lceil n/B\rceil\) blocks of size at most \(B\).
Let \(w_\ell\) be the Hamming weight of the \(\ell\)-th block.  By
Lemma~\ref{lem:prelim-tt-counting}, each weight \(w_\ell\) can be
computed in constant depth using \(O(B^2)\) clean ancillary qubits.  Since the
blocks are disjoint, all local computations are performed in parallel and use total workspace $L\cdot O(B^2)=O(nB)$.
The local-weight registers themselves occupy
$ M=L\lceil\log(B+1)\rceil=O\!\left(\frac nB\log B\right)$
qubits.

Now write \(w_\ell=\sum_j2^j b_{\ell,j}\), then $|x|=\sum_\ell w_\ell=\sum_{\ell,j}2^j b_{\ell,j}$.
Thus the bits \(b_{\ell,j}\) form \(M=O((n/B)\log B)\) weighted input bits,
and their weighted sum lies in the range \(0\le |x|\le n\).  Applying
Lemma~\ref{lem:weighted-counting} with \(R=n\) computes this weighted sum in
constant depth using
$ 
    O\bigl(Mn+n\,\mathrm{polylog}\,n\bigr)
    =
    O\!\left(\frac{n^2\log B}{B}+n\,\mathrm{polylog}\,n\right)
$
additional clean ancillary qubits.

After the global Hamming weight has been written into the final output
register, we reverse the local Hamming weight computations to restore all
local-weight registers and local workspaces to \(\ket0\).  The total depth is
constant, and the total clean ancillary cost is
$ 
    O(nB)+O\!\left(\frac{n^2\log B}{B}+n\,\mathrm{polylog}\,n\right).
$ 
\end{proof}

\begin{corollary}[Subquadratic-ancilla constant-depth HWC]
\label{cor:alltoall-dynamic-subquadratic}
In the dynamic all-to-all model, $\HWC_n$ can be implemented with constant depth and
\(O(n^{3/2}\sqrt{\log n})\) additional clean ancillary qubits.
\end{corollary}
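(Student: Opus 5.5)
We obtain the corollary by instantiating Theorem~\ref{thm:alltoall-dynamic-one-level} with a block size that balances its first two terms. The ancilla bound there is
\[
    O\!\left(nB+\frac{n^2\log B}{B}+n\,\mathrm{polylog}\,n\right),
\]
and the two dominant contributions, \(nB\) and \(n^2\log B/B\), pull in opposite directions as \(B\) varies. Setting \(nB = n^2\log B/B\) gives \(B^2=n\log B\). Since \(\log B=\Theta(\log n)\) for \(B\) polynomial in \(n\), this suggests the choice
\[
    B=\left\lceil\sqrt{n\log n}\,\right\rceil .
\]

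We first check that this choice is admissible, i.e.\ \(2\le B\le n\). This holds for all sufficiently large \(n\), since \(\sqrt{n\log n}\le n\) once \(\log n\le n\), which is always true. Small \(n\) can be absorbed into the constants, for instance by using Lemma~\ref{lem:prelim-tt-counting} directly, which costs \(O(1)\) for bounded \(n\).

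Next we evaluate each term with this \(B\). The first term is \(nB=O(n\sqrt{n\log n})=O(n^{3/2}\sqrt{\log n})\). For the second term, \(\log B\le\tfrac12\log(n\log n)+1=O(\log n)\), so
\[
    \frac{n^2\log B}{B}=O\!\left(\frac{n^2\log n}{\sqrt{n\log n}}\right)=O(n^{3/2}\sqrt{\log n}).
\]
The third term, \(n\,\mathrm{polylog}\,n\), is \(o(n^{3/2})\) and is therefore dominated. Summing the three terms gives the claimed \(O(n^{3/2}\sqrt{\log n})\) ancilla bound. The depth is constant by Theorem~\ref{thm:alltoall-dynamic-one-level} for every admissible \(B\), including this one.

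The only point needing any care is the bound \(\log B=O(\log n)\) in the second term, together with the rounding of \(B\) to an integer; both are routine. There is no real obstacle beyond this algebra.
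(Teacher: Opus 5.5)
Your proposal is correct and is essentially the paper's proof: both instantiate Theorem~\ref{thm:alltoall-dynamic-one-level} with $B=\Theta(\sqrt{n\log n})$, check that $nB$ and $n^2\log B/B$ are each $O(n^{3/2}\sqrt{\log n})$, and absorb the $n\,\mathrm{polylog}\,n$ term. Your explicit handling of $\log B=O(\log n)$, the integer rounding, and the admissibility condition $2\le B\le n$ is sound and slightly more careful than the paper's version.
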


\begin{proof}
Apply Theorem~\ref{thm:alltoall-dynamic-one-level} with
\(B=\Theta(\sqrt{n\log n})\),  then
$
    nB=O(n^{3/2}\sqrt{\log n})
$ and 
$
    \frac{n^2\log B}{B}=O(n^{3/2}\sqrt{\log n}).
$ 
The remaining \(n\,\mathrm{polylog}\,n\) term is absorbed by this bound.
\end{proof}

Before the recursive construction, we fix the polylogarithmic loss in
Lemma~\ref{lem:weighted-counting}.  Namely, let \(p\ge1\) be a constant such
that the weighted-counting primitive uses
\(O(MR+R(\log R)^p)\) additional clean ancillary qubits; such a \(p\) exists
by Lemma~\ref{lem:weighted-counting}.

The following theorem shows a depth-ancilla tradeoff for HWC.
\begin{theorem}
\label{thm:alltoall-dynamic-recursive}
For every integer \(r\ge1\), $\HWC_n$ can
be implemented in the dynamic all-to-all model with depth $O(r)$
and with
$O\!\left(r\,n^{1+1/r}\,\mathrm{polylog}\,n\right)$ 
additional clean ancillary qubits.
\end{theorem}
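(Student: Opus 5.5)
We build an \(r\)-level pyramid of block sizes and apply Lemma~\ref{lem:weighted-counting} once per level. Set \(D=\lceil n^{1/r}\rceil\) and block sizes \(B_s=D^s\) for \(s=0,1,\ldots,r\), so that \(B_0=1\) and \(B_r\ge n\). At level \(s\), the input is partitioned into \(L_s=\lceil n/B_s\rceil\) blocks of size at most \(B_s\), and each level-\(s\) block is the disjoint union of at most \(D\) level-\((s-1)\) blocks. Level \(0\) is simply the input bits. We maintain the invariant that after level \(s\) every level-\(s\) block \(\ell\) has its weight \(w^{(s)}_\ell\) written in binary on a fresh register of \(\lceil\log(B_s+1)\rceil\) qubits. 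Level \(r\) is a single block, and its register is the final output register.

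To pass from level \(s-1\) to level \(s\), write each child weight as \(w^{(s-1)}_{\ell'}=\sum_j 2^j b_{\ell',j}\). Then the parent weight is \(w^{(s)}_\ell=\sum_{\ell'\subset \ell}\sum_j 2^j b_{\ell',j}\). This is a weighted sum of \(M_s\le D\lceil\log(B_{s-1}+1)\rceil=O(D\log n)\) bits with range \(R_s\le B_s\). Apply Lemma~\ref{lem:weighted-counting} to all \(L_s\) parents in parallel. This costs constant depth and, with the constant \(p\) fixed above, uses
\[
O\bigl(L_s(M_sR_s+R_s(\log R_s)^p)\bigr)
=O\Bigl(\tfrac{n}{B_s}\bigl(D\log n\cdot B_s+B_s\log^p n\bigr)\Bigr)
=O\bigl(nD\log n+n\log^p n\bigr)
\]
ancillas. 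Since \(D=O(n^{1/r})\), this is \(O(n^{1+1/r}\,\mathrm{polylog}\,n)\). The level-\(s\) weight registers themselves occupy only \(O((n/B_s)\log n)\) qubits. Each fragment of the final level uses a clean \(\ket0\) target: this is the output register at level \(r\) and a fresh ancilla register otherwise.

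The main obstacle is accounting for the workspace and the garbage. Each application of Lemma~\ref{lem:weighted-counting} already returns its own workspace to \(\ket0\), so after a level is computed its internal ancillas can be reused. Only the intermediate weight registers of levels \(1,\ldots,r-1\) persist, and these total \(O(rn\log n)\) qubits.

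Running levels \(1\) through \(r\) forward takes depth \(O(r)\). Level \(r\) writes \(\ket{|x|}\) into the output register. We then run levels \(r-1\) down to \(1\) in reverse, uncomputing each intermediate register from its children, which are still present. This restores every ancilla to \(\ket0\) and doubles the depth to \(O(r)\). Charging each level its own fresh workspace is wasteful but harmless, and gives total ancilla count \(O(r\,n^{1+1/r}\,\mathrm{polylog}\,n)\). Keeping the workspace of each level separate is the simplest accounting and matches the stated bound.

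A minor point is the rounding of \(D\) when \(n^{1/r}\) is not an integer and the last block is partial. Both affect only constant factors, since \(B_s\le D^s\) and \(D\le 2n^{1/r}\).
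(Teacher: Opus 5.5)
Your proposal is correct and follows the paper's proof essentially step for step. Both use the same $r$-level pyramid with branching factor about $n^{1/r}$, one parallel layer of Lemma~\ref{lem:weighted-counting} per level costing $O(n^{1+1/r}\,\mathrm{polylog}\,n)$ ancillas, a depth-$O(r)$ forward pass, and a level-by-level reverse pass that clears the intermediate weight registers.

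The differences are cosmetic. You write level $r$ directly into the output register instead of copying it out and then reversing level $r$ as well. In that case, take the top-level range to be $R=n$ rather than $B_r=D^r$, which can exceed $n$ after rounding, so that the target has exactly $\lceil\log(n+1)\rceil$ qubits. You also handle the ceilings explicitly, where the paper ignores them.
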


\begin{proof}
We use an \(r\)-level pyramid construction.  For simplicity of notation, ignore
integer roundings in the block sizes; replacing them by ceilings changes only
constant factors.  Let $B_j=n^{j/r}$ for $j=0,1,\ldots,r$,
so \(B_0=1\) and \(B_r=n\).  At level \(j\), we compute the Hamming weights of
all blocks of size \(B_j\).  Thus level \(0\) consists of the original input
bits, and level \(r\) consists of the single global Hamming weight.

We first describe the forward computation.  Suppose levels \(0,1,\ldots,j-1\)
have already been computed.  Consider one block of size \(B_j\).  It consists $\frac{B_j}{B_{j-1}}=n^{1/r}$ subblocks from level \(j-1\).  The Hamming weights of these subblocks are
already available.  Writing each such subblock weight in binary, the number of
weighted input bits needed to form the level-\(j\) block weight is
$ 
    M_j
    =
    O\!\left(n^{1/r}\log B_{j-1}\right)
    \le
    O\!\left(n^{1/r}\log n\right),
$ 
and the resulting weighted sum lies in the range \(0,\ldots,B_j\).

By Lemma~\ref{lem:weighted-counting}, one level-\(j\) block weight can be
computed in constant depth using
\[
    O\!\left(M_j B_j+B_j\,\mathrm{polylog}\,B_j\right)
    =
    O\!\left(n^{(j+1)/r}\,\mathrm{polylog}\,n\right)
\]
ancillary qubits.  There are \(n/B_j=n^{1-j/r}\) level-\(j\) blocks, and all
of them are disjoint, so they can be processed in parallel.  Therefore the
total workspace needed for level \(j\) is
\[
    n^{1-j/r}\cdot
    O\!\left(n^{(j+1)/r}\,\mathrm{polylog}\,n\right)
    =
    O\!\left(n^{1+1/r}\,\mathrm{polylog}\,n\right).
\]
The depth of one level is constant.

Repeating this for \(j=1,2,\ldots,r\), the forward computation produces the
global Hamming weight at the top level in depth \(O(r)\).  We then copy this
top-level value into the final output register using CNOT gates.  The
intermediate weight registers at all levels are left in place for the moment.

Finally, we reverse the computation level by level, from \(j=r\) down to
\(j=1\).  Since each level was computed reversibly from the previous level,
this uncomputes all intermediate Hamming-weight registers and restores all
ancillary workspace to \(\ket0\), while leaving the copied final output
register unchanged.  The reverse pass has the same depth \(O(r)\).

Hence the total depth is \(O(r)\).  The ancillary cost is bounded by summing
the workspace bound over the \(r\) levels, together with the intermediate
weight registers.  This gives $O\!\left(r\,n^{1+1/r}\,\mathrm{polylog}\,n\right)$
additional clean ancillary qubits.
\end{proof}

This follows immediately from Theorem~\ref{thm:alltoall-dynamic-recursive} for a fixed $r>0$.
\begin{corollary}
\label{cor:alltoall-dynamic-fixed-r}
For every constant integer \(r\ge1\), $\HWC_n$ can be
implemented in the dynamic all-to-all model with depth \(O(1)\) and
$O\!\left(n^{1+1/r}\,\mathrm{polylog}\,n\right)$
additional clean ancillary qubits.
\end{corollary}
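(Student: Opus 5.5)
This follows by specializing Theorem~\ref{thm:alltoall-dynamic-recursive} to a constant number of levels. Fix a constant integer \(r\ge1\) and invoke that theorem with this \(r\). It produces a dynamic all-to-all circuit for \(\HWC_n\) of depth \(O(r)\) that uses \(O(r\,n^{1+1/r}\,\mathrm{polylog}\,n)\) additional clean ancillary qubits.

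Since \(r\) does not depend on \(n\), the depth bound \(O(r)\) collapses to \(O(1)\). The factor \(r\) in the ancilla count is likewise absorbed into the hidden constant, giving \(O(n^{1+1/r}\,\mathrm{polylog}\,n)\).

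The one point to check is that the constants hidden inside the theorem's \(O(\cdot)\) do not themselves grow with \(r\) in a way that could interact badly with \(n\). The proof of Theorem~\ref{thm:alltoall-dynamic-recursive} gives this directly. Each level is a single call to Lemma~\ref{lem:weighted-counting}, which has an absolute constant depth, and each level contributes workspace \(O(n^{1+1/r}\,\mathrm{polylog}\,n)\) with a uniform constant, using the fixed exponent \(p\) for the polylogarithmic loss. Summing over the \(r\) levels therefore introduces only a factor of \(r\), so for constant \(r\) both bounds hold as stated.

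There is no real obstacle here. The only care needed is the bookkeeping of the dependence on \(r\), and for constant \(r\) that dependence is harmless.
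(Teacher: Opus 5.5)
Your proposal is correct and matches the paper, which also gets this corollary immediately by instantiating Theorem~\ref{thm:alltoall-dynamic-recursive} with a fixed $r$ and absorbing the factor $r$ into the hidden constants. Your extra bookkeeping check on the hidden constants is sound, though not strictly needed: for fixed $r$, any $r$-dependence of those constants is already $O(1)$, since the constants do not depend on $n$.
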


\begin{corollary}[Near-linear ancilla-count dynamic HWC]
\label{cor:alltoall-dynamic-near-linear}
For every fixed \(\varepsilon>0\), \(\HWC_n\) admits a dynamic all-to-all
implementation with depth \(O_\varepsilon(1)\) and
\(O(n^{1+\varepsilon}\polylog n)\) additional clean ancillas.
\end{corollary}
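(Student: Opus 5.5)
The plan is to instantiate Theorem~\ref{thm:alltoall-dynamic-recursive} (equivalently Corollary~\ref{cor:alltoall-dynamic-fixed-r}) with a suitable constant number of levels. Given a fixed \(\varepsilon>0\), choose the integer
\[
    r=r(\varepsilon)=\left\lceil 1/\varepsilon\right\rceil \ge 1 .
\]
Then \(1/r\le\varepsilon\). Since \(r\) depends only on \(\varepsilon\) and not on \(n\), it is a legitimate constant for Corollary~\ref{cor:alltoall-dynamic-fixed-r}.

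Theorem~\ref{thm:alltoall-dynamic-recursive} with this \(r\) gives a dynamic all-to-all circuit for \(\HWC_n\) of depth \(O(r)=O(\lceil 1/\varepsilon\rceil)\). The hidden constant depends only on \(\varepsilon\), so the depth is \(O_\varepsilon(1)\) in the sense fixed in the preliminaries. The ancilla count is \(O(r\,n^{1+1/r}\polylog n)\). We bound \(n^{1+1/r}\le n^{1+\varepsilon}\), which is valid for \(n\ge1\) because \(1/r\le\varepsilon\). The factor \(r\) is again a constant depending only on \(\varepsilon\), so it is absorbed into the asymptotic constant. The polylogarithmic factor's degree is the fixed \(p\) from Lemma~\ref{lem:weighted-counting}, possibly increased by a constant from the \(\log n\) factors in \(M_j\); in either case the exponent does not grow with \(r\). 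The resulting bound is \(O(n^{1+\varepsilon}\polylog n)\) additional clean ancillas.

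The only point requiring care is the dependence of the hidden constants on \(r\). The depth is a sum of \(2r\) constant-depth levels, one forward and one reverse pass per level, plus one copy layer, giving \(O(r)\) with an absolute per-level constant. For the ancilla bound, the per-level workspace \(O(n^{1+1/r}\polylog n)\) must have a constant and a polylog exponent that are uniform in \(j\). I will check this by reading off the proof of Theorem~\ref{thm:alltoall-dynamic-recursive}: \(M_j=O(n^{1/r}\log n)\) and the weighted-counting cost \(O(M_jB_j+B_j(\log B_j)^p)\) are uniform in \(j\). The rounding effects from ceilings in \(B_j=n^{j/r}\) contribute at most a factor depending on \(r\), which is harmless for fixed \(\varepsilon\). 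With this verified, the corollary follows directly.
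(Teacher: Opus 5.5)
Your proposal is correct and matches the paper's proof: you choose $r=\lceil 1/\varepsilon\rceil$ (the paper writes $\max\{1,\lceil 1/\varepsilon\rceil\}$, which is the same for $\varepsilon>0$), apply Theorem~\ref{thm:alltoall-dynamic-recursive}, bound $n^{1+1/r}\le n^{1+\varepsilon}$, and absorb the factor $r$ into the $\varepsilon$-dependent constant. Your extra check that the per-level constants and the polylogarithmic exponent do not grow with $r$ is left implicit in the paper, and it does not change the argument.
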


\begin{proof}
Choose \(r=\max\{1,\lceil 1/\varepsilon\rceil\}\).  Then \(1/r\le\varepsilon\).
By Theorem~\ref{thm:alltoall-dynamic-recursive}, the depth is \(O(r)=
O_\varepsilon(1)\), and the ancillary cost is
$
    O\!\left(r\,n^{1+1/r}\,\mathrm{polylog}\,n\right)
    \le
    O\!\left(r\,n^{1+\varepsilon}\,\mathrm{polylog}\,n\right).
$
Since \(\varepsilon\) is fixed, the factor \(r=O(1/\varepsilon)\) is absorbed
into the constant hidden in the \(O(\cdot)\) notation.
\end{proof}

\section{Two-Dimensional Nearest-Neighbor Constructions for Hamming Weight Computation}
\label{sec:two-dimensional}

We now turn to circuits in the two-dimensional nearest-neighbor square-grid (2D) qubit connectivity.  
Throughout this section, we first study the standard 2D model and show that \(\HWC_n\) can be implemented with
optimal depth \(O(\sqrt n)\) using only \(O(\log^2 n)\) additional clean
ancillas in Section \ref{sec:hwc-standard-2D}.  The dynamic 2D model will be considered in Section \ref{sec:hwc-dynamic-2D}.

\subsection{Hamming Weight Computation in Standard Circuit Model}
\label{sec:hwc-standard-2D}

A black-box routing argument would simulate an all-to-all depth-\(D\) circuit
on \(K\) qubits by a two-dimensional nearest-neighbor circuit of depth
\(O(\sqrt K D)\), using SWAP routing.  Instead, we use the special
Fourier phase-encoding structure of the ancilla-free Hamming-weight
construction in Lemma~\ref{lem:prelim-hw-tcad}.  The key step is to implement
phase gadgets of the form
$
    \exp\!\left(i\sum_j \theta_j y_j\sum_i x_i\right)
$
directly on a two-dimensional patch.

\begin{lemma}[Dirty fan-out on a bounded-degree tree]
\label{lem:dirty-fanout-tree}
Let \(T=(V,E)\) be a rooted tree of height \(h\) and constant maximum degree, with
root \(\rho\).  The dirty fan-out operation
\[
    \ket{a_\rho}\bigotimes_{v\in V\backslash\{\rho\}}\ket{a_v}
    \longmapsto
    \ket{a_\rho}\bigotimes_{v\in V\backslash\{\rho\}}\ket{a_v\oplus a_\rho},
\]
can be implemented by CNOT gates along the edges of \(T\), with depth \(O(h)\)
and without clean ancillary qubits.
\end{lemma}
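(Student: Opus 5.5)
Write \(a'_v\) for the target value \(a_v\oplus a_\rho\). I would prove the statement by a two-sweep construction: first transform every non-root register into a "difference" encoding, and then propagate the root value down the tree layer by layer. The key observation is that a naive top-down cascade, in which each parent CNOTs onto its children, fails for dirty qubits. It would give \(a_v\oplus a_{\mathrm{parent}(v)}\oplus\cdots\), accumulating the arbitrary initial contents of all ancestors. So these contents must first be cancelled.

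\textbf{Step 1 (encode differences, bottom-up).} Process the tree by depth from the deepest level \(h\) up to level \(1\). At the step for depth \(d\), apply \(\CNOT^{p(v)}_v\) for every non-root vertex \(v\) at depth \(d\), where \(p(v)\) is its parent. Because parents are processed only later, each \(v\) has its register changed to \(a_v\oplus a_{p(v)}\) with the parent still holding its original value. The root is never a target in this pass, since it sits at depth \(0\), so it is unchanged. After this pass, vertex \(v\) holds \(a_v\oplus a_{p(v)}\).

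\textbf{Step 2 (top-down cascade).} Next apply the same CNOTs in the opposite order, from depth \(1\) down to depth \(h\). I would argue by induction on depth that when \(v\) is targeted, its parent already holds \(a_{p(v)}\oplus a_\rho\), or \(a_\rho\) if the parent is the root. Then \(v\) becomes \(a_v\oplus a_{p(v)}\oplus a_{p(v)}\oplus a_\rho=a_v\oplus a_\rho\), which is exactly \(a'_v\).

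This identity is the only genuinely nontrivial point, and it is precisely why the bottom-up pre-pass is needed. A one-line check of it, together with the base case at depth \(1\) where the parent is the root, finishes correctness. Since CNOTs are permutations of basis states and the map is linear, correctness on all computational basis states \(\ket{a}\) suffices.

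\textbf{Depth count.} Within one depth level, the CNOTs \(\CNOT^{p(v)}_v\) share controls whenever siblings have the same parent. With maximum degree \(\Delta=O(1)\), each parent has at most \(\Delta\) children, so I would schedule the children of a parent in \(\Delta\) sublayers, for instance by properly edge-colouring each level's parent–child edges. Every gate lies on an edge of \(T\), as required, no clean ancilla is used, and the total depth is \(2h\cdot\Delta=O(h)\).

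I expect no serious obstacle. The main care point is the scheduling argument just described, namely that gates sharing a parent control are serialized at only constant cost.
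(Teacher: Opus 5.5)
Your Step 2 is wrong as written: it exactly undoes Step 1. You apply ``the same CNOTs in the opposite order, from depth \(1\) down to depth \(h\)'', and this includes the gates \(\CNOT^{\rho}_v\) for the depth-\(1\) vertices. Every CNOT is self-inverse, so running the Step-1 gate sequence backwards is precisely the inverse of Step 1, and your whole circuit implements the identity.

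Concretely, the base case you call a one-line check fails. After Step 1, a depth-\(1\) vertex \(v\) holds \(a_v\oplus a_\rho\), and the root holds \(a_\rho\). So \(\CNOT^{\rho}_v\) returns \(v\) to \(a_v\), not \(a_v\oplus a_\rho\). Your identity \(a_v\oplus a_{p(v)}\oplus a_{p(v)}\oplus a_\rho\) assumes the parent holds \(a_{p(v)}\oplus a_\rho\). For \(p(v)=\rho\) that value would be \(0\), but the root actually holds \(a_\rho\). The error then propagates down the tree: each child \(w\) of a vertex \(u\) that has been restored to \(a_u\) is mapped from \(a_w\oplus a_u\) back to \(a_w\).

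The missing observation is that the depth-\(1\) vertices are already correct after Step 1. Their parent is the root, so \(a_v\oplus a_{p(v)}=a_v\oplus a_\rho\). The second sweep must therefore skip the root-to-depth-\(1\) edges and target only depths \(2,\ldots,h\), top-down. This is exactly what the paper's proof does.

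With that change, your argument goes through:
\begin{itemize}
\item The base case is the state of the depth-\(1\) vertices right after Step 1.
\item The inductive step \((a_c\oplus a_p)\oplus(a_p\oplus a_\rho)=a_c\oplus a_\rho\) now applies only to parents \(p\neq\rho\), where the hypothesis on the parent is genuinely true.
\item Your scheduling argument with \(\Delta=O(1)\) sublayers per level is fine and gives depth \(O(h)\), with every gate on an edge of \(T\) and no clean ancillas.
\end{itemize}
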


\begin{proof}
Let \(p(v)\) denote the parent of a non-root vertex \(v\).  We use two sweeps
of CNOT gates, always directed from parent to child.

First sweep from the leaves toward the root.  For depths
\(d=h-1,h-2,\ldots,0\), apply CNOTs from every vertex at depth \(d\) to its
children at depth \(d+1\).  Since the maximum degree is constant, the CNOTs
between two consecutive levels can be scheduled in \(O(1)\) layers.  After
this sweep, every non-root vertex \(v\) stores \(a_v\oplus a_{p(v)}\), because
the parent \(p(v)\) has not yet been modified by its own parent when $\CNOT^{p(v)}_v$ is applied.

Second sweep from the root toward the leaves, but skip the edges from the root
to depth \(1\).  Thus, for \(d=1,2,\ldots,h-1\), apply CNOTs from every vertex
at depth \(d\) to its children at depth \(d+1\).  We claim that after this
second sweep reaches level \(d\), every vertex \(v\) of depth at most \(d\)
stores \(a_v\oplus a_\rho\).  This is true at depth \(1\) after the first
sweep.  If a parent \(p\) stores \(a_p\oplus a_\rho\) and its child \(c\)
stores \(a_c\oplus a_p\), then $\CNOT^p_c$ changes the child to
\[
    (a_c\oplus a_p)\oplus(a_p\oplus a_\rho)
    =
    a_c\oplus a_\rho .
\]
Thus the claim propagates to all levels.  Both sweeps have depth \(O(h)\).
\end{proof}

\begin{corollary}[Dirty fan-out on a two-dimensional patch]
\label{cor:dirty-fanout-2d}
On a two-dimensional square-grid patch of area \(O(S)\), dirty fan-out from
one root qubit to \(S\) target qubits can be implemented with depth
\(O(\sqrt S)\) and without clean ancillary qubits.
\end{corollary}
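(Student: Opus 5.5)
The plan is to reduce the statement to Lemma~\ref{lem:dirty-fanout-tree}. This requires exhibiting, inside a square-grid patch of area \(O(S)\) that contains the root and the \(S\) targets, a rooted spanning tree of the grid graph. The tree should have maximum degree at most \(4\) and height \(O(\sqrt S)\), and all its edges must be grid edges. Once such a tree exists, the lemma applies verbatim. It uses only CNOTs along tree edges, hence only nearest-neighbor gates, and no clean ancillas.

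Concretely, I would take the patch to be a \(k\times k\) subgrid with \(k=\lceil\sqrt{S+1}\rceil\), so its area is \(k^2=O(S)\). I would place the root \(\rho\) at a corner, say position \((1,1)\), and the \(S\) targets at other vertices. Any leftover vertices are simply extra tree nodes; below I explain how to handle them. The tree is a \emph{comb}: the first column is a path from \((1,1)\) down to \((k,1)\), and each row \(i\) is a path from \((i,1)\) to \((i,k)\) hanging off the column vertex \((i,1)\). Every vertex has degree at most \(3\) in this tree, and the depth of vertex \((i,j)\) is \((i-1)+(j-1)\le 2k-2=O(\sqrt S)\). Applying Lemma~\ref{lem:dirty-fanout-tree} therefore gives depth \(O(h)=O(\sqrt S)\).

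The one point that needs care is the non-target vertices of the patch, if \(S+1<k^2\). There are two ways to handle them. The simplest is to choose a patch shape containing exactly \(S+1\) vertices that still admits a comb: \(\lceil S/k\rceil\) rows, with the last row partial. Alternatively, one can apply the fan-out to the full tree and then clean up the extra vertices. The cleanup works because an extra vertex \(v\) ends in state \(a_v\oplus a_\rho\). Undoing this is itself a dirty fan-out restricted to those vertices, and it would require routing through target vertices, which is awkward. So I prefer the first option, which makes the set of tree vertices exactly the root together with the targets.

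I would also check that the qubit arrangement is free. The statement only asserts existence on some patch of area \(O(S)\), so we may assume the targets occupy the patch in the comb layout. I expect no real obstacle here. The only mildly delicate step is confirming that a partial last row does not break the tree's connectivity or its height bound, and it does neither.
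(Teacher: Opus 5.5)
Your proposal is correct and matches the paper's proof: the paper takes a breadth-first spanning tree of the patch rooted at the root qubit, which has height $O(\sqrt S)$ and degree at most $4$, and then invokes Lemma~\ref{lem:dirty-fanout-tree}; your comb is an explicit tree of exactly this kind. Your extra step of trimming the patch to exactly the root plus the $S$ targets makes explicit something the paper's one-line proof leaves implicit, namely that every non-root vertex of the tree receives the XOR. (The trimmed patch needs $\lceil (S+1)/k\rceil$ rows rather than $\lceil S/k\rceil$, which is immaterial.)
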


\begin{proof}
Choose a breadth-first spanning tree of the patch rooted at the root qubit.
The tree has height \(O(\sqrt S)\), and the square-grid has constant degree.
The claim follows from Lemma~\ref{lem:dirty-fanout-tree}.
\end{proof}

\begin{lemma}
\label{lem:single-output-phase-gadget-2d}
Let \(x_1,\ldots,x_S \in\{0,1\}\) be data qubits and let \(y\) be a control qubit, all
placed in a two-dimensional patch of area \(O(S)\).  For any angle \(\theta \in\mathbb{R}\),
the diagonal unitary
\[
    \ket{x_1,\ldots,x_S}\ket y
    \longmapsto
    e^{i\theta y\sum_{i=1}^S x_i}
    \ket{x_1,\ldots,x_S}\ket y, \quad \forall x_1,\ldots,x_S,y\in\{0,1\}
\]
can be implemented with depth \(O(\sqrt S)\), using no additional clean
ancillary qubits.
\end{lemma}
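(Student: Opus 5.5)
The plan is to reduce the target diagonal to single-qubit phase rotations on the data qubits, by using Corollary~\ref{cor:dirty-fanout-2d} to spread the control bit \(y\) into every data qubit and then undoing the spread. The starting point is the identity
\[
    y\,x_i=\tfrac12\bigl(y+x_i-(x_i\oplus y)\bigr),
\]
which gives
\[
    e^{i\theta y\sum_i x_i}
    =
    e^{i\theta S y/2}\,
    \prod_{i=1}^S e^{i\theta x_i/2}\,
    \prod_{i=1}^S e^{-i\theta (x_i\oplus y)/2}.
\]
The first two factors are single-qubit \(Z\)-rotations: one rotation on \(y\) with total angle \(\theta S/2\), and one rotation on each \(x_i\). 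Together they cost depth \(1\) and no ancillas. The whole task therefore reduces to implementing \(\prod_i e^{-i\theta(x_i\oplus y)/2}\).

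For this remaining factor, I apply the dirty fan-out of Corollary~\ref{cor:dirty-fanout-2d} with root \(y\) and the \(S\) data qubits as targets, using a breadth-first spanning tree of the patch rooted at the location of \(y\). The map sends \(\ket{x_i}\mapsto\ket{x_i\oplus y}\) for every \(i\), uses no clean ancillas, and has depth \(O(\sqrt S)\). Some tree vertices may be patch qubits that are not data qubits (for example, if the patch has area larger than \(S+1\)). These intermediate qubits are simply treated as further dirty targets: Lemma~\ref{lem:dirty-fanout-tree} XORs \(y\) into them as well, and we do not act on them.

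The construction then proceeds in three steps. First, after the fan-out, apply the single-qubit phase \(\operatorname{diag}(1,e^{-i\theta/2})\) on each data qubit in one layer. Second, apply the inverse dirty fan-out; it is the same CNOT network reversed, again of depth \(O(\sqrt S)\). Third, since everything is a permutation of basis states combined with diagonal phases, check on computational basis states that each data qubit carries \(x_i\oplus y\) exactly when the phase is applied, and that all qubits, including any intermediate tree vertices, return to their original values. The total depth is \(O(\sqrt S)\), and no clean ancillas are used.

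The main obstacle is the geometric bookkeeping, which has two parts. The first is ensuring that the BFS tree over an area-\(O(S)\) patch containing \(y\) has height \(O(\sqrt S)\); this needs the patch to have aspect ratio bounded by a constant, which I would state as the standing assumption (as the corollary implicitly does). The second is confirming that the dirty fan-out's action on non-data vertices is harmless. It is harmless because those vertices receive no phase and are restored by the inverse sweep. If \(y\) lies outside the patch but adjacent to it, the same argument applies with the tree extended by one edge.
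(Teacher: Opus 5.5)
Your proposal is correct and follows the paper's proof essentially step for step: dirty fan-out from $y$ via Corollary~\ref{cor:dirty-fanout-2d}, one layer of $\operatorname{diag}(1,e^{-i\theta/2})$ on the data qubits, the inverse fan-out, and correction phases $\theta/2$ on each $x_i$ and $S\theta/2$ on $y$ derived from $x_i\oplus y=x_i+y-2x_iy$. Your remarks on non-data tree vertices and on the bounded aspect ratio of the patch make explicit two assumptions that the paper leaves implicit, and they do not change the argument.
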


\begin{proof}
Let \(P(\alpha)=\operatorname{diag}(1,e^{i\alpha})\).  First apply dirty
fan-out from \(y\) to \(x_1,\ldots,x_S\).  By
Corollary~\ref{cor:dirty-fanout-2d}, this has depth \(O(\sqrt S)\) and maps
\(x_i\mapsto x_i\oplus y\) for every \(i\), while leaving \(y\) unchanged.

Next apply \(P(-\theta/2)\) to all \(x_i\)'s in parallel.  On a computational
basis state, this contributes
\[
    \exp\!\left(-\frac{i\theta}{2}\sum_{i=1}^S(x_i\oplus y)\right)=\exp\!\left(-\frac{i\theta}{2}\sum_i x_i\right)
    \exp\!\left(-\frac{i\theta}{2}S y\right)
    \exp\!\left(i\theta y\sum_i x_i\right),
\]
since \(x_i\oplus y=x_i+y-2x_i y\).
The last factor is the desired phase.  We then reverse the dirty fan-out and
cancel the two extra factors by applying \(P(\theta/2)\) to every \(x_i\) and
\(P(S\theta/2)\) to \(y\).  The only nonconstant-depth operations are the
dirty fan-out and its inverse, so the depth is \(O(\sqrt S)\).
\end{proof}

\begin{lemma}[Cyclic tour phase on a corridor]
\label{lem:cyclic-tour-corridor}
Consider a cycle of qubit positions containing \(K\) control tokens
\(Y_1,\ldots,Y_K\) and \(N_{\mathrm{corr}}\) data tokens.  Suppose that,
initially, between every two consecutive control tokens there are exactly
\(q\) data tokens, so \(N_{\mathrm{corr}}=Kq\).  Let the data tokens carry bits
\(x\), and let \(Y_j\) carry the bit \(y_j\).  Then, for fixed angles
\(\theta_1,\ldots,\theta_K\), one can implement the corridor phase
\[
    \exp\!\left(
        i\sum_{j=1}^K \theta_j y_j
        \sum_{x\in \mathrm{corridor}} x
    \right)
\]
using nearest-neighbor gates along the cycle, with depth \(O(K(q+1))\), while
restoring every token to its original position.  No clean ancillary qubits are
used.
\end{lemma}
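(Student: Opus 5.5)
Our plan is to move the control tokens around the cycle by nearest-neighbour SWAPs, so that each control passes every data token once. Each time a control $Y_j$ is adjacent to a data token carrying $x$, we apply the two-qubit controlled phase $\ket{y}\ket{x}\mapsto e^{i\theta_j yx}\ket y\ket x$. Afterwards we rotate everything back.

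The cycle has $N=K(q+1)$ positions. We use a \emph{rotating} schedule: every control token moves one step clockwise per round, and the data tokens move counterclockwise relative to them. One round, in which each $Y_j$ swaps with the data token just ahead of it, consists of $K$ disjoint SWAPs. These SWAPs are disjoint because consecutive controls are separated by $q\ge 1$ data tokens; the case $q=0$ is trivial since there is no data. Before each SWAP we apply the controlled phase between $Y_j$ and that data token. The SWAPs and phases fit into $O(1)$ layers per round.

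After $N_{\mathrm{corr}}=Kq$ rounds, each control has moved past $Kq$ data tokens. The key counting claim is the following. Since all controls move at the same speed, the gaps of $q$ data tokens between consecutive controls never change, and data tokens only slide backward through the gaps. Over $Kq$ steps, $Y_j$ therefore meets a sequence of $Kq$ data tokens, and we need this sequence to be exactly all of them, each once. We will check this by tracking a data token: it moves back one slot each time a control passes it, and it traverses the block structure cyclically. Its backward displacement relative to the control frame after $Kq$ steps is $Kq$ data slots, which is one full cycle of the data sequence. Hence every data token is met by every control exactly once.

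This makes the accumulated phase $\prod_j e^{i\theta_j y_j\sum x}$, as required. The depth is $O(Kq)\le O(K(q+1))$.

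To restore positions, the tokens are now cyclically shifted. The controls are back in their original slots in the sense that the pattern is the same, but the data tokens have been rotated by $Kq$ data positions, which is the identity on the data sequence. If the net displacement does not vanish, we undo it with a SWAP-only sweep that runs the schedule backward without the phase gates. This costs the same depth $O(K(q+1))$ and commutes with the diagonal phase already applied, since SWAPs merely permute token labels.

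The main obstacle we expect is the combinatorial claim that each control meets each data token exactly once, with no double-counting at the wrap-around. We will first verify it in the co-moving frame, where the data form a conveyor of length $Kq$ passing stationary controls. If that bookkeeping fails, the fallback is to run $Kq$ rounds exactly and count meetings directly.
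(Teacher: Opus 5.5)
Your proposal is correct and is essentially the paper's proof. It uses the same parallel schedule, in which each control applies $e^{i\theta_j y_j x}$ and then swaps with the data token just ahead. It makes the same observation that in the first $Kq$ steps each control crosses every data token exactly once (the paper argues this from preservation of the cyclic orders of controls and of data, you via the co-moving frame), and it restores positions by the same phase-free reversal of the swaps. One small correction: after $Kq$ rounds the whole configuration is a rigid rotation of the original by $Kq\equiv -K \pmod{K(q+1)}$ positions, so for $q\ge 1$ the net displacement never vanishes and the backward SWAP-only sweep is always required, not merely a contingency.
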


\begin{proof}
We use a cyclic token-swapping schedule.  Figure~\ref{fig:cyclic-tour-example}
illustrates one elementary step for the example \(K=3\) and \(q=2\).  The
controls are initially separated by two data tokens.  In one elementary step,
each control crosses the next data token in the clockwise direction, and the
corresponding controlled phase is applied before the swap.

\begin{figure}[t]
\centering
\begin{tikzpicture}[
    scale=1.0,
    every node/.style={font=\small},
    ctl/.style={draw,circle,thick,minimum size=7.5mm,inner sep=0pt},
    dat/.style={draw,rectangle,rounded corners=1pt,minimum size=6.5mm,inner sep=1pt},
    swap/.style={<->,very thick,bend left=18}
]
\node[ctl] (Yone)   at (90:2.45)  {$Y_1$};
\node[dat] (xone)   at (50:2.45)  {$x_1$};
\node[dat] (xtwo)   at (10:2.45)  {$x_2$};
\node[ctl] (Ytwo)   at (-30:2.45) {$Y_2$};
\node[dat] (xthree) at (-70:2.45) {$x_3$};
\node[dat] (xfour)  at (-110:2.45){$x_4$};
\node[ctl] (Ythree) at (-150:2.45){$Y_3$};
\node[dat] (xfive)  at (170:2.45) {$x_5$};
\node[dat] (xsix)   at (130:2.45) {$x_6$};

\draw[thick]
    (Yone) -- (xone) -- (xtwo) -- (Ytwo) -- (xthree) -- (xfour)
    -- (Ythree) -- (xfive) -- (xsix) -- (Yone);

\draw[swap] (Yone) to (xone);
\draw[swap] (Ytwo) to (xthree);
\draw[swap] (Ythree) to (xfive);

\draw[->,thick] (2.95,0.30) arc[start angle=6,end angle=-45,radius=2.95];
\node[anchor=west] at (2.95,-0.55) {clockwise};

\node[anchor=west,align=left] at (4.25,1.10) {
first step applies\\[0.25em]
\(Y_1\leftrightarrow x_1:\ e^{i\theta_1 y_1x_1}\)\\
\(Y_2\leftrightarrow x_3:\ e^{i\theta_2 y_2x_3}\)\\
\(Y_3\leftrightarrow x_5:\ e^{i\theta_3 y_3x_5}\)
};
\end{tikzpicture}
\caption{One elementary step of the cyclic-tour schedule for \(K=3\) and
\(q=2\).  The double arrows indicate the three disjoint control--data swaps
performed in parallel.  Before each swap, the corresponding controlled phase
is applied.}
\label{fig:cyclic-tour-example}
\end{figure}

Formally, at each elementary step, every control token \(Y_j\) is swapped with
the next data token in the clockwise direction.  Since there are always
\(q\) data tokens between two consecutive controls, the \(K\) swaps in one
elementary step are disjoint and can be performed in parallel.

Whenever \(Y_j\) crosses a data token carrying bit \(x\), we first apply the
two-qubit diagonal phase
\[
    \ket{y_j}\ket{x}
    \longmapsto
    e^{i\theta_j y_jx}\ket{y_j}\ket{x},\quad \forall y_j,x\in\{0,1\}
\]
and then perform the swap.  We apply these phases only during the first
\(N_{\mathrm{corr}}=Kq\) elementary steps.

The key point is that the cyclic order of the control tokens is preserved, and
the cyclic order of the data tokens is also preserved: controls never cross
other controls, and data tokens never cross other data tokens.  Therefore,
during the first \(N_{\mathrm{corr}}\) crossings, each control token crosses
every data token exactly once.  The accumulated phase is
\[
    \prod_{j=1}^K\prod_{x\in\mathrm{corridor}} e^{i\theta_j y_jx}
    =
    \exp\!\left(
        i\sum_{j=1}^K \theta_j y_j
        \sum_{x\in \mathrm{corridor}} x
    \right),
\]
which is the desired corridor phase.

After these \(N_{\mathrm{corr}}\) phase-accumulating steps, we continue the
same swap schedule, without applying any further phase gates, until each
control has completed one full tour through the cycle of root positions.  This
takes \(K(q+1)\) elementary steps in total.  We then reverse the entire swap
schedule, again without applying further phases.  This restores every token to
its original position.  The total depth is \(O(K(q+1))\), and no clean
ancillary qubits are used.
\end{proof}

\begin{lemma}[Multi-output phase gadget on a two-dimensional patch]
\label{lem:multi-output-phase-gadget-2d}
Let \(x_1,\ldots,x_B\) be data qubits and let \(y_1,\ldots,y_k\) be control
qubits, with \(1\le k\le B\).  Suppose all these qubits are placed in a
two-dimensional patch of area \(O(B)\).  For fixed angles
\(\theta_1,\ldots,\theta_k \in \mathbb{R}\), the diagonal unitary
\[
    \ket{x}\ket{y}
    \longmapsto
    \exp\!\left(
        i\sum_{j=1}^k \theta_j y_j\sum_{i=1}^B x_i
    \right)
    \ket{x}\ket{y}, \qquad \forall x=(x_1,\ldots,x_B), \quad \forall y=(y_1,\ldots,y_k),
\]
can be implemented with depth \(O(\sqrt{Bk})\) using nearest-neighbor gates
and \(O(k)\) clean padding qubits.
\end{lemma}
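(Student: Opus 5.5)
We plan to reduce the multi-output gadget to $K=\Theta(B/k)$-ish parallel instances of the cyclic tour of Lemma~\ref{lem:cyclic-tour-corridor}, arranged so that each control token $y_j$ meets every data qubit. The concrete choice is as follows. Partition the patch of area $O(B)$ into $c=\Theta(\sqrt{B/k})$ disjoint cycles (``corridors''), each of length $\Theta(\sqrt{Bk})$, containing $\Theta(\sqrt{Bk})$ data qubits. For instance, take a snake-like or concentric-rectangle decomposition of a roughly $\sqrt{Bk}\times\sqrt{B/k}$ rectangle into closed loops of width two. If $k\ge\sqrt{Bk}$, i.e.\ $k$ is comparable to $B$, a single corridor of length $O(B)$ suffices, and the depth $O(K(q+1))=O(B)=O(\sqrt{Bk})$ holds trivially; we handle this case directly. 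Otherwise, we split the $k$ controls into groups: we give each corridor a copy of all $k$ controls, spaced so that consecutive controls are separated by $q=\Theta(\sqrt{Bk}/k)=\Theta(\sqrt{B/k})$ data tokens. By Lemma~\ref{lem:cyclic-tour-corridor}, each corridor then accumulates $\exp(i\sum_j\theta_j y_j\sum_{x\in\text{corridor}}x)$ in depth $O(k(q+1))=O(\sqrt{Bk}+k)=O(\sqrt{Bk})$, since $k\le B$. Running all corridors in parallel and multiplying the phases gives the full sum $\sum_{i=1}^B x_i$.

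The issue is that each corridor needs its own copy of $y_1,\dots,y_k$, while we have only $O(k)$ clean padding qubits. We intend to avoid copies by moving the controls rather than duplicating them: process corridors sequentially, with the control tokens hopping from corridor to corridor. Done naively, this costs $c\cdot O(\sqrt{Bk})$, which is too much.

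The better alternative, which we will pursue, is to use a single global cycle of length $O(B+k)$ but with many controls in flight at once, by \emph{dirty-copying}. The $k$ padding qubits are placed as control-token slots. Instead of duplicating $y_j$, we exploit linearity, as in Lemma~\ref{lem:single-output-phase-gadget-2d}: the phase $e^{i\theta_j y_j x}$ can be realized by the $\pm\theta/2$ trick applied to XOR-ed copies. A simpler route exists, however. We take a single Hamiltonian cycle through the whole patch of area $O(B)$, containing the $B$ data qubits and the $k$ control qubits, and choose the layout so that the controls are equally spaced with $q=\lceil B/k\rceil$ data tokens between consecutive controls. The $O(k)$ padding qubits fill rounding gaps so that the spacing is exact and a Hamiltonian cycle exists on the grid patch (for example, by making the patch an even-sided rectangle). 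Lemma~\ref{lem:cyclic-tour-corridor} then gives depth $O(k(q+1))=O(B+k)=O(B)$. This is too slow when $k\ll B$, so by itself it is insufficient.

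Combining the two ideas gives the plan. We tile the patch into $k' $ rows of corridors and move control tokens in two dimensions: controls sweep horizontally along $\sqrt{B/k}$ parallel corridors simultaneously. The $k$ controls are split into $\sqrt{B/k}$-sized bundles per corridor, i.e.\ $k/\sqrt{B/k}\cdot$ … and after each full tour (depth $O(\sqrt{Bk})$) we cyclically shift the bundles vertically to the next corridor in depth $O(1)$ per shift using a vertical cycle of corridors. Each bundle must visit all $\Theta(\sqrt{B/k})$ corridors, and bundles tour in parallel. Balancing corridor length $\ell$, number of corridors $B/\ell$, and bundle size $k\ell/B$ so that the total time ($B/\ell$ rounds of tours of length $O(\ell)$ each) is $O(\sqrt{Bk})$ is the main obstacle: a naive count gives $B$ again. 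We therefore expect to need the corridor tours to overlap with the bundle rotations as a 2D systolic (torus) schedule, in which data is swept along rows while controls are swept along columns, each pair meeting exactly once. On a $\sqrt{Bk}\times\sqrt{B/k}$-type torus, with controls occupying one padded column-stripe of $O(k)$ sites, this takes $O(\sqrt{Bk})$ steps. We then verify, as in Lemma~\ref{lem:cyclic-tour-corridor}, that every control meets every data token exactly once, and we reverse the swaps to restore positions.
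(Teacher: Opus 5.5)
Your proposal has a genuine gap: the final systolic schedule cannot reach the claimed depth.

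\textbf{Why the systolic schedule fails.} Suppose the phase $e^{i\theta_j y_j x_i}$ is produced by a two-qubit gate applied when token $y_j$ ``meets'' token $x_i$. Then each of the $k$ control qubits must take part in $B$ such gates. A qubit joins at most one gate per layer, so at most $k$ of the required $Bk$ meetings happen per layer, and the depth is at least $B$. Overlapping row sweeps with column sweeps on a $\sqrt{Bk}\times\sqrt{B/k}$ torus does not change this. In $O(\sqrt{Bk})$ steps each control meets only $O(\sqrt{Bk})$ data tokens, which is $\ll B$ when $k\ll B$. This same bottleneck is why your single-cycle route and your bundle count both came out to $\Theta(B)$. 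Your first idea, parallel corridors each carrying all $k$ controls, does reach the depth. But it needs $\Theta(\sqrt{Bk})$ control copies, which exceeds the $O(k)$ clean-qubit budget, as you noticed.

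\textbf{The missing ingredient.} It is the idea you mentioned and then set aside: the dirty-fan-out trick of Lemma~\ref{lem:single-output-phase-gadget-2d}. It lets one control imprint its phase on a whole region of area $A$ at once, without clean copies:
\begin{enumerate}
\item XOR $y_j$ into every data qubit of the region by CNOTs along a spanning tree, in depth $O(\sqrt A)$.
\item Apply $P(-\theta_j/2)$ to each data qubit in parallel.
\item Undo the fan-out.
\item Cancel the extra linear phases with single-qubit gates.
\end{enumerate}
This beats the per-qubit bottleneck because the control's value is spread onto the data qubits themselves.

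\textbf{How the paper uses it.} The paper pads to $K=\Theta(k)$ controls and cuts the patch into $K$ subpatches of side $s=\lceil\sqrt{B/K}\rceil$, with one control at a root of each. It then runs $K$ rounds. In each round, every control applies the single-output gadget to its current subpatch interior in depth $O(s)$. It then moves to the next root along a Hamiltonian corridor cycle, collecting the corridor phases via Lemma~\ref{lem:cyclic-tour-corridor}. Only the $O(Ks)=O(\sqrt{Bk})$ corridor data are handled by pairwise crossings, so the counting bound above is respected. The total depth is $O(Ks)=O(\sqrt{Bk})$, and the only clean ancillas are the $K-k$ dummy controls.
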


\begin{proof}
Let \(K=\Theta(k)\) be a convenient padded number of control roots.  Add
\(K-k\) dummy control qubits initialized to \(\ket0\), and extend the angle
list by setting \(\theta_j=0\) for \(k<j\le K\).  It is enough to implement
the corresponding \(K\)-control phase gadget.

Choose integers \(a,b=\Theta(\sqrt K)\) with \(ab=K\), so that the \(K\)
controls can be arranged on an \(a\times b\) coarse grid.  Set
$ 
    s=\left\lceil\sqrt{B/K}\right\rceil .
$
Partition the \(O(B)\)-area patch into \(K\) square subpatches of side
\(\Theta(s)\), arranged as this coarse \(a\times b\) grid.  The total
subpatch area is
$ 
    O(Ks^2)=O(B).
$ 
Choose one root site in each subpatch.

Regard the subpatches as the vertices of the coarse grid.  After a
constant-factor padding of the layout, choose a Hamiltonian cycle through the
\(K\) coarse-grid vertices.  Lifting each coarse edge to a grid path between
the corresponding root sites gives a corridor cycle through all \(K\) roots.
Each corridor segment has length \(O(s)\).  By padding each segment by a
constant factor if necessary, we may assume that between consecutive roots on
the corridor cycle there are exactly \(q=\Theta(s)\) corridor data qubits.  Hence the total corridor length is
$ 
    Kq=O(Ks)=O(\sqrt{BK}),
$ 
which is at most \(O(B)\) since \(K=\Theta(k)\le O(B)\).  Thus the entire
layout fits inside an \(O(B)\)-area patch.  Figure~\ref{fig:multi-output-phase-layout}
shows the structure of the construction.

\begin{figure}[t]
\centering
\begin{tikzpicture}[
    x=1cm,y=1cm,
    every node/.style={font=\small},
    subpatch/.style={draw,thick},
    root/.style={circle,fill=red,inner sep=1.6pt},
    interior/.style={circle,fill=gray!60,inner sep=0.7pt},
    corridor/.style={circle,fill=blue!60,inner sep=0.7pt},
    corridorline/.style={very thick,blue!70},
    >=Latex
]
\draw[thick] (0,0) rectangle (8.4,8.4);

\foreach \x in {0.4,3.0,5.6}{
  \foreach \y in {0.4,3.0,5.6}{
    \draw[subpatch] (\x,\y) rectangle ++(2.2,2.2);
  }
}

\coordinate (r1) at (1.5,6.7);
\coordinate (r2) at (4.1,6.7);
\coordinate (r3) at (6.7,6.7);
\coordinate (r4) at (6.7,4.1);
\coordinate (r5) at (4.1,4.1);
\coordinate (r6) at (1.5,4.1);
\coordinate (r7) at (1.5,1.5);
\coordinate (r8) at (4.1,1.5);
\coordinate (r9) at (6.7,1.5);

\foreach \p in {r1,r2,r3,r4,r5,r6,r7,r8,r9}{
  \node[root] at (\p) {};
}

\foreach \p in {(0.9,6.0),(2.0,7.4),(3.5,5.9),(4.8,7.3),(6.1,5.9),
                (0.9,3.4),(2.1,4.8),(3.5,3.3),(4.9,4.8),(6.1,3.3),
                (0.9,0.8),(2.1,2.2),(3.4,0.8),(4.9,2.2),(6.1,0.8)}{
  \node[interior] at \p {};
}

\draw[corridorline]
    (r1) -- (r2) -- (r3) -- (r4) -- (r5) -- (r6) -- (r7) -- (r8) -- (r9)
    -- (7.9,1.5) -- (7.9,7.9) -- (1.5,7.9) -- cycle;

\foreach \p in {(2.8,6.7),(5.4,6.7),(6.7,5.4),(5.4,4.1),(2.8,4.1),
                (1.5,2.8),(2.8,1.5),(5.4,1.5),(7.9,3.2),(7.9,5.0),
                (6.2,7.9),(3.8,7.9)}{
  \node[corridor] at \p {};
}

\draw[->,corridorline] (3.0,6.95) -- (3.6,6.95);
\draw[->,corridorline] (6.95,5.0) -- (6.95,4.4);
\draw[->,corridorline] (2.8,1.75) -- (3.4,1.75);

\node[anchor=west] at (8.7,6.9) {corridor cycle};
\draw[->] (8.55,6.8) -- (7.9,6.8);

\node[anchor=west] at (8.7,5.8) {root site};
\draw[->] (8.55,5.7) -- (6.7,6.7);

\node[anchor=west] at (8.7,4.6) {corridor data};
\draw[->] (8.55,4.5) -- (7.9,5.0);

\node[anchor=west] at (8.7,3.3) {subpatch interior data};
\draw[->] (8.55,3.2) -- (4.9,4.8);

\node[anchor=west] at (8.7,2.1) {one round: interior phase,};
\node[anchor=west] at (8.7,1.6) {then move to next root.};

\node at (4.2,-0.45) {schematic layout of the \(K\)-control gadget};
\end{tikzpicture}
\caption{Layout for the multi-output phase gadget.  The large patch is
partitioned into \(K\) square subpatches.  Each subpatch contains one root
site.  The corridor cycle passes through all roots.  In each round, the
control currently at a root first interacts with the interior data of that
subpatch, and then moves along the corridor to the next root.}
\label{fig:multi-output-phase-layout}
\end{figure}

Place the \(K\) real or dummy controls at the \(K\) roots.  The remaining
qubits are data qubits, divided into corridor data qubits and subpatch-interior
data qubits.

We perform \(K\) rounds.  At the beginning of each round, every control qubit
is located at the root of some subpatch.  In parallel over all subpatches,
apply Lemma~\ref{lem:single-output-phase-gadget-2d} between the control at the
root and all interior data qubits of that subpatch.  Since each subpatch has
area \(O(s^2)\), this interior phase step has depth \(O(s)\).  If \(y_j\) is
currently at the root of subpatch \(T\), the contributed phase is
\[
    \exp\!\left(
        i\theta_j y_j \sum_{x_i\in T^\circ} x_i
    \right),
\]
where \(T^\circ\) denotes the interior of \(T\).

After the interior phase step, move all controls to the next roots along the
corridor cycle.  This move has depth \(O(q)=O(s)\).  During the move, apply
the corridor phases from Lemma~\ref{lem:cyclic-tour-corridor}.  Over the full
tour, the corridor data contribute the phase
\[
    \exp\!\left(
        i\sum_{j=1}^K \theta_j y_j
        \sum_{x_i\in\mathrm{corridor}} x_i
    \right).
\]

After \(K\) rounds, each real control has visited every subpatch root exactly
once.  Therefore the interior data contribute
\[
    \exp\!\left(
        i\sum_{j=1}^K \theta_j y_j
        \sum_{x_i\in\mathrm{interiors}} x_i
    \right).
\]
Multiplying this with the corridor contribution gives
\[
    \exp\!\left(
        i\sum_{j=1}^k \theta_j y_j\sum_{i=1}^B x_i
    \right),
\]
because the dummy controls satisfy \(\theta_j=0\) for \(j>k\).

Finally, reverse the corridor token-swapping schedule so that all controls and
corridor data return to their original positions.  The single-output phase
gadgets are clean because their internal dirty fan-out is uncomputed within
each application.  Each round has depth \(O(s)\), and there are \(K=\Theta(k)\)
rounds, so the total depth is
\[
    O(Ks)
    =
    O\!\left(K\sqrt{B/K}\right)
    =
    O(\sqrt{Bk}).
\]
The only additional clean ancillary qubits are the \(O(k)\) dummy controls.
\end{proof}

\begin{remark}
The two geometric lemmas above can be strengthened slightly, but these
refinements do not affect the asymptotic bounds used later.

First, in Lemma~\ref{lem:cyclic-tour-corridor}, the assumption that every gap
contains exactly \(q\) data tokens is only for convenience.  The same
token-swapping argument works when the \(j\)-th gap contains \(q_j\) data
tokens, as long as the total corridor length is controlled; one then obtains
depth \(O\!\left(\sum_j(q_j+1)\right)\).

Second, in Lemma~\ref{lem:multi-output-phase-gadget-2d}, the \(O(k)\) clean
ancillas come only from the dummy controls used to make the coarse layout more
regular.  With a less uniform layout and the variable-gap version of
Lemma~\ref{lem:cyclic-tour-corridor}, one can remove these extra ancillas.
We keep the present formulation because, in the later application to
Hamming-weight computation, we use \(k=O(\log B)\), so these \(O(k)\) extra
clean qubits do not change the final asymptotic bounds.
\end{remark}

\begin{lemma}
\label{lem:hw-tcad-2d-block}
The \(B\)-bit Hamming weight computation \(\HWC_B\) can be implemented on a
two-dimensional nearest-neighbor patch of area \(O(B)\) with depth
$ 
    O(\sqrt{B\log B})
$ 
using \(O(\log B)\) additional clean ancillary qubits beyond the local output
register.
\end{lemma}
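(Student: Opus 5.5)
We follow the Fourier phase-encoding structure behind Lemma~\ref{lem:prelim-hw-tcad}. The weight is first written into phases of the output register. Then an inverse QFT on the $m=\lceil\log(B+1)\rceil$ output qubits converts the phases into $\ket{|x|}$. The only step that touches all $B$ data qubits is the phase step, and that step is exactly the multi-output phase gadget of Lemma~\ref{lem:multi-output-phase-gadget-2d}.

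\textbf{Phase encoding.} Prepare the $m$ output qubits $y_0,\ldots,y_{m-1}$ in $\ket{+}^{\otimes m}$ with a layer of Hadamards. We want the state
\[
\frac{1}{\sqrt{2^m}}\sum_{y}e^{2\pi i\,|x|\,y/2^m}\ket{y},
\]
where $y=\sum_j 2^j y_j$. The required phase is
\[
\exp\!\Bigl(i\sum_{j=0}^{m-1}\theta_j y_j\sum_{i=1}^B x_i\Bigr),\qquad \theta_j=2\pi 2^j/2^m .
\]
This is precisely the diagonal unitary of Lemma~\ref{lem:multi-output-phase-gadget-2d} with $k=m=O(\log B)$. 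We place the output qubits inside the $O(B)$-area patch, at the roots of the gadget layout. The gadget then runs in depth $O(\sqrt{Bm})=O(\sqrt{B\log B})$ and uses $O(m)=O(\log B)$ clean padding qubits. After this step the output register holds the Fourier state of $|x|$. Since $|x|\le B<2^m$, the inverse QFT maps it exactly to $\ket{|x|}$.

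\textbf{Inverse QFT.} Apply $\mathrm{QFT}^{-1}_{2^m}$ to the $m$ output qubits. We gather these $m$ qubits into a contiguous line, for instance a segment of the corridor cycle or a short row at the patch boundary. Routing them there takes SWAP depth at most $O(\sqrt B)$, and the routing can be undone afterwards. On a line, the exact inverse QFT on $m$ qubits has the standard nearest-neighbor implementation of depth $O(m)$, or at worst $O(m^2)=O(\log^2 B)$ with adjacent SWAPs. Either bound is dominated by $O(\sqrt{B\log B})$. The input register is unchanged, because all operations on it are diagonal, so its phases are exactly what the gadget deposits. The padding qubits are returned clean by Lemma~\ref{lem:multi-output-phase-gadget-2d}.

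\textbf{Main obstacle.} The hard part is the geometric bookkeeping, not the phase algebra. The $m$ output qubits must be placed at the gadget's roots, the padded layout and corridor must fit in area $O(B)$, and the outputs must then be brought together for the QFT and returned to their designated local-output sites. To settle this, we would fix a concrete layout: the output register sits along the corridor cycle, so after the $K$ gadget rounds the controls are back at their roots, and a final $O(\sqrt B)$-depth SWAP pass moves them into a line. Summing the pieces, the depth is $O(1)+O(\sqrt{B\log B})+O(\sqrt B)+O(\log^2 B)=O(\sqrt{B\log B})$, and the construction uses $O(\log B)$ clean ancillas beyond the output register.
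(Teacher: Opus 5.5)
Your proposal is correct and follows essentially the same route as the paper: implement the Fourier phase encoding with the multi-output phase gadget of Lemma~\ref{lem:multi-output-phase-gadget-2d} at $k=m=O(\log B)$, then route the $m$ output qubits together in depth $O(\sqrt B)$, apply the inverse QFT in depth at most $O(\log^2 B)$, and reverse the routing. Your explicit Hadamard layer, the angles $\theta_j=2\pi 2^j/2^m$, and the exactness remark $|x|\le B<2^m$ make precise details that the paper's proof leaves implicit.
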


\begin{proof}
Use the structure of the construction behind Lemma~\ref{lem:prelim-hw-tcad}.
That construction first encodes \(|x|=\sum_{i=1}^B x_i\) into phases of the
\(m_B=\lceil\log(B+1)\rceil\)-qubit output register, and then applies an
inverse QFT on that output register.

The phase-encoding part consists of \(m_B=O(\log B)\) phase interactions of
the form
\[
    \exp\!\left(i\theta_j y_j\sum_{i=1}^B x_i\right),
\]
one for each output qubit \(y_j\).  Applying
Lemma~\ref{lem:multi-output-phase-gadget-2d} with \(k=m_B=O(\log B)\)
implements all these phase interactions together in depth
$ 
    O(\sqrt{B m_B})=O(\sqrt{B\log B}),
$ 
using \(O(m_B)=O(\log B)\) clean padding qubits.

After the phase encoding, the inverse QFT acts only on the \(m_B=O(\log B)\)
output qubits.  Route these output qubits to a small region, apply the inverse
QFT, and then reverse the routing.  The routing cost is at most the diameter
of the patch, \(O(\sqrt B)\), and the inverse QFT itself has depth
\(O(m_B^2)=O(\log^2 B)\) on a nearest-neighbor layout of \(m_B\) qubits.  Both
costs are absorbed into \(O(\sqrt{B\log B})\) for asymptotic \(B\).  Hence the
block computation has depth \(O(\sqrt{B\log B})\) and uses \(O(\log B)\)
additional clean ancillary qubits beyond the local output register.
\end{proof}

\begin{theorem}[Depth-optimal HWC on 2D grid]
\label{thm:hw-2d-polylog-ancilla}
In the standard 2D model, \(\HWC_n\)
admits an implementation with depth $O(\sqrt n)$ and $O(\log^2 n)$ additional clean ancillary qubits.
Moreover, the depth \(O(\sqrt n)\) is asymptotically optimal.
\end{theorem}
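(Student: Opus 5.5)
The upper bound follows the technical overview: block the input and localize each block, then sum the block outputs.

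\emph{Blocking.} Choose block size $B=\Theta(n/\log n)$, so there are $L=\lceil n/B\rceil=O(\log n)$ blocks. Lay out the grid as $L$ disjoint patches of area $O(B)$ each, plus one small output region of area $O(\log^2 n)$ placed adjacent to all of them. A concrete choice is to stack the patches as horizontal strips of a $\Theta(\sqrt n)\times\Theta(\sqrt n)$ square and put the output region on one side.

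\emph{Local weights.} In each patch, apply Lemma~\ref{lem:hw-tcad-2d-block} in parallel to obtain the local weights $h_\ell$.
\begin{itemize}
\item Each block takes depth $O(\sqrt{B\log B})=O(\sqrt{(n/\log n)\log n})=O(\sqrt n)$.
\item Each block uses $O(\log B)$ clean padding qubits and a local output register of $O(\log B)$ qubits.
\item Summed over the $L$ blocks, the ancilla cost is $O(L\log B)=O(\log^2 n)$.
\end{itemize}

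\emph{Routing.} Route the $M=O(\log^2 n)$ local-weight qubits into the output region by SWAP chains along vertex-disjoint corridors, either row by row or through a pipelined schedule. Each corridor has length $O(\sqrt n)$. I expect this to be the main obstacle, because it requires $O(\log^2 n)$ tokens to be moved simultaneously without congestion. My first approach is to give each strip its own corridor along its boundary and to pipeline the $O(\log B)$ tokens of a strip one after another. The depth is then $O(\sqrt n+\log n)$, and the corridors pass through data qubits by swapping, which requires no new ancillas.

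\emph{Global sum.} Inside the output region, sum the weighted bits $b_{\ell,j}$ into the final $m$-qubit register using a nearest-neighbor carry-save circuit. Even a naive nearest-neighbor simulation of Lemma~\ref{lem:prelim-carry-save} on $O(\log^2 n)$ qubits arranged in an $O(\log n)\times O(\log n)$ square costs only $O(\log n\cdot\log\log n)$, or at worst $\polylog n$, in depth. This is $o(\sqrt n)$. The workspace is $O(M)=O(\log^2 n)$ clean qubits.

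\emph{Cleanup.} Undo the routing and the local Hamming weight computations, in reverse order. This returns every intermediate register to $\ket0$ and leaves $\ket{|x|}$ in the output register. The total depth is $O(\sqrt n)$ and the total ancilla count is $O(\log^2 n)$.

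\emph{Lower bound.} The least significant output bit $k_0$ equals $x_1\oplus\cdots\oplus x_n$, so it depends on every input bit. Suppose the circuit has depth $D$ on the square grid. Because the circuit contains no measurements, the backward light cone of the qubit holding $k_0$ consists only of grid sites within distance $D$ of it. A ball of radius $D$ in the grid has $O(D^2)$ sites. If some input qubit lay outside this light cone, then the reduced state of the $k_0$ qubit would be independent of that input bit, which contradicts the fact that parity flips when that bit flips. Hence the light cone contains all $n$ input qubits, which gives $D^2=\Omega(n)$ and $D=\Omega(\sqrt n)$, as in \cite{MooreNilsson2001ParallelQuantum,Rosenbaum2013NearestNeighbor}. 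Ancillas do not help, since the ball size bound depends only on grid geometry.
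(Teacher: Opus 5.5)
Your proposal is correct and follows essentially the same route as the paper. It uses blocks of size $B=\Theta(n/\log n)$ processed in parallel by Lemma~\ref{lem:hw-tcad-2d-block}, routes the $O(\log^2 n)$ local-weight bits in depth $O(\sqrt n)$ to a small region, performs a $\mathrm{polylog}\,n$-depth nearest-neighbor carry-save sum, uncomputes everything, and proves optimality with the parity light-cone bound.

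You spell out the routing in more detail than the paper does. One point to keep in mind: if you use thin horizontal strips (aspect ratio $\Theta(\log n)$), the block gadget should arrange its $\Theta(\log B)$ subpatches of side $\Theta(\sqrt n/\log n)$ as a $2\times\Theta(\log n)$ ladder. Alternatively, tile the square with roughly square patches. Either way the $O(\sqrt{B\log B})=O(\sqrt n)$ bound still holds.
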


\begin{proof}
Let \(B=\Theta(n/\log n)\).  Partition the input into
\(L=\lceil n/B\rceil\) blocks of size at most \(B\).  Place each block inside
a two-dimensional patch of area \(O(B)\), and arrange all block patches inside
the ambient \(O(n)\)-area grid.

For each block, compute its local Hamming weight using
Lemma~\ref{lem:hw-tcad-2d-block}.  All blocks are processed in parallel, and
the depth is
$ 
    O(\sqrt{B\log B})=O(\sqrt n).
$ 
The total number of local-weight qubits is
$ 
    M
    =
    L\lceil\log(B+1)\rceil
    =
    O\!\left(\frac nB\log B\right)
    =
    O(\log^2 n).
$ 
The padding qubits used inside the block gadgets contribute another
\(L\cdot O(\log B)=O(\log^2 n)\) clean qubits.

It remains to add the local weights into the final output register.  Write
the local weight of block \(\ell\) as \(w_\ell=\sum_j2^j b_{\ell,j}\),  then
$ 
    |x|=\sum_{\ell,j}2^j b_{\ell,j}.
$ 
Route the \(M=O(\log^2 n)\) local-weight qubits to a region near the final
output register.  Since the computation occupies an \(O(n)\)-area grid, this
gathering takes depth \(O(\sqrt n)\).  Inside a patch of area \(O(M)\), add
these weighted bits into the final output register using the carry-save
summation from Lemma~\ref{lem:prelim-carry-save}.  Even with nearest-neighbor
routing inside this \(O(M)\)-area patch, the summation depth is
\(\mathrm{polylog}\,n=o(\sqrt n)\), and the clean workspace is \(O(M)\).

After the global Hamming weight has been written to the final output register,
reverse the summation workspace, route the local-weight registers back to their
block patches in depth \(O(\sqrt n)\), and reverse all local Hamming-weight
computations.  This restores all intermediate registers to \(\ket0\), while
leaving the final output register in the state \(\ket{|x|}\).  The total depth
is \(O(\sqrt n)\), and the total additional clean ancillary cost is
\(O(\log^2 n)\).

For optimality, observe that the least significant output bit equals
\[
    |x|\bmod 2=x_1\oplus x_2\oplus\cdots\oplus x_n .
\]
In a depth-\(d\) nearest-neighbor circuit on a two-dimensional grid, the
backward light cone of any output qubit is contained in a radius-\(d\) region
and therefore contains at most \(O(d^2)\) input qubits.  Since the least
significant output bit depends on all \(n\) inputs, we must have
\(O(d^2)\ge n\), hence \(d=\Omega(\sqrt n)\).  The upper bound is therefore
depth-optimal up to constant factors.
\end{proof}

\subsection{Hamming Weight Computation in Dynamic Circuit Model}
\label{sec:hwc-dynamic-2D}

We now consider the dynamic 2D model.  
%
As usual, each clean ancillary qubit occupies one additional grid site.
Therefore, a construction using $A(n)$ clean ancillary qubits is laid out on $O(n+A(n))$ grid sites.


\begin{lemma}\label{lem:2d-dynamic-fanout-path}
    Given an input state of the form $\alpha\ket{0}\ket{0^r}_{\rm  leaf}\ket{\eta_0}+\beta\ket{1}\ket{1^r}_{\rm leaf}\ket{\eta_1}$, the quantum state $\alpha\ket{0}\ket{\eta_0}+\beta\ket{1}\ket{\eta_1}$ can be obtained by a constant-depth dynamic circuit without ancillary qubits using only single-qubit gates. 
\end{lemma}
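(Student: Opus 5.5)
Measure every leaf qubit in the $X$ basis and then apply a phase correction determined by the outcomes. Apply a Hadamard gate to each of the $r$ leaf qubits in parallel, giving depth one with single-qubit gates only, and measure them all in the computational basis simultaneously. Then apply at most one single-qubit $Z$ gate to the root qubit, chosen by classical feedforward from the parity of the outcomes. There are no two-qubit gates and no extra qubits, so the depth is constant and no ancillas are used.

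The calculation goes as follows. We have $H^{\otimes r}\ket{0^r}=2^{-r/2}\sum_{s\in\{0,1\}^r}\ket{s}$ and $H^{\otimes r}\ket{1^r}=2^{-r/2}\sum_{s}(-1)^{|s|}\ket{s}$. After the Hadamards the joint state is therefore
\[
2^{-r/2}\sum_{s\in\{0,1\}^r}\ket{s}_{\rm leaf}\otimes\bigl(\alpha\ket{0}\ket{\eta_0}+(-1)^{|s|}\beta\ket{1}\ket{\eta_1}\bigr).
\]
Measuring the leaves yields each outcome $s$ with probability $2^{-r}$. The amplitudes are uniform in magnitude because the post-measurement vector has norm $|\alpha|^2+|\beta|^2=1$, using orthogonality of the $\ket{0}$ and $\ket{1}$ components of the root. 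The residual state is $\alpha\ket{0}\ket{\eta_0}+(-1)^{|s|}\beta\ket{1}\ket{\eta_1}$. If $|s|$ is odd, applying $Z$ to the root removes the sign, since $Z$ acts on the root alone and $\ket{\eta_b}$ sits on other qubits. The result is exactly $\alpha\ket{0}\ket{\eta_0}+\beta\ket{1}\ket{\eta_1}$.

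The one subtle point is that the correction depends on the parity of all $r$ outcomes, which is a global classical function. This is allowed because classical feedforward in the dynamic model may be nonlocal and is not counted toward quantum depth. This is the same $X$-basis recovery procedure cited earlier~\cite{ZiNieSun2025MFStatePreparation}. The measured leaf qubits end in known classical states and can be reset to $\ket{0}$ and released.
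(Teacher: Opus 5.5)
Your proposal is correct and follows the same route as the paper's proof: apply $H^{\otimes r}$ to the leaves, measure them in the computational basis, apply $Z^{\bigoplus_j c_j}$ to the root by classical feedforward, and return the leaves to $\ket{0}$ via $\bigotimes_j X^{c_j}$, which is what you mean by reset and release. Your explicit outcome-probability calculation, and your remark that the nonlocal parity feedforward is permitted in the dynamic model, are fine additions.
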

\begin{proof}
First, we apply $r$ Hadamard gates $H^{\otimes r}$ to $r$ leaf qubits of the input state and obtain 
\(\alpha\ket 0 \ket{+}_{\rm leaf}^{\otimes r}\ket{\eta_0}+\beta\ket1\ket{-}_{\rm leaf}^{\otimes r}\ket{\eta_1}\).
Second, measure the $r$ leaf qubits in the computational basis and obtain the measurement outcome $c_1,\ldots,c_r\in\{0,1\}$. The
state after measurement is therefore
$\alpha\ket 0 \ket{\eta_0}+(-1)^{\oplus_{j=1}^r c_j}\beta\ket1\ket{\eta_1}$.
The leaf qubits are in state $\ket{c_1,\ldots,c_r}_{\rm leaf}$ and can be restored by $\bigotimes_{j=1}^r X^{c_j}$. Classically compute $\bigoplus_{j=1}^r c_j$ and apply $Z^{\bigoplus_{j=1}^r c_j}$ to the first qubit. This completes the proof.
\end{proof}
The dynamic circuit in Lemma \ref{lem:2d-dynamic-fanout-path} can be realized under any qubit connectivity since it consists of single-qubit gates.

We first adapt the weighted-counting primitive to the two-dimensional dynamic
layout.  We separate the construction into two independent parts: the
weighted Fourier phase preparation, and the Takahashi--Tani parallel decoder.

\begin{lemma}[Two-dimensional weighted phase preparation]
\label{lem:2d-dynamic-weighted-phase-preparation}
Let \(z_1,\ldots,z_M\in\{0,1\}\) be input bits with fixed nonnegative integer
weights \(a_1,\ldots,a_M\), and let
$
    S=\sum_{i=1}^M a_i z_i
$ 
satisfy \(0\le S\le R\).  Let \(q=\lceil\log(R+1)\rceil\).  For every branch indexed by $(k,y)\in \mathcal{I}=\{(k,y): 0\le k <q,y\in\{0,1\}^k\}$,
define
\[
    \ket{\phi_{k,y}(S)}
    =
    \frac{\ket0+e^{2\pi iS/2^{k+1}}\ket1}{\sqrt2}.
\]
The family of phase states \(\{\ket{\phi_{k,y}(S)}: (k,y)\in\mathcal{I}\}\), can be prepared in constant depth in dynamic
2D model using \(O(MR)\) additional clean ancillary qubits.
\end{lemma}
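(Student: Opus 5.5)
We realize the all-to-all branch construction from the proof of Lemma~\ref{lem:weighted-counting} as an explicit rectangular patch. Index the \(N_{\mathrm{br}}=2^q-1=O(R)\) branches \((k,y)\in\mathcal{I}\) by columns and the weighted inputs \(z_1,\ldots,z_M\) by rows, on an \(O(M)\times O(N_{\mathrm{br}})\) array of ancilla sites, which has area \(O(MR)\). The cell \((i,(k,y))\) holds two qubits placed adjacently: a \emph{copy} site \(c_{i,(k,y)}\), which will carry a copy of \(z_i\), and a \emph{leaf} site \(\ell_{i,(k,y)}\), which will carry a leaf of the cat state for branch \((k,y)\). Each input \(z_i\) is placed at the left boundary of row \(i\), and each branch root, initialized to \(\ket+\), is placed at the top of column \((k,y)\). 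To keep the fan-outs in the two directions from interfering, I would interleave copy rows and leaf columns, for example by doubling the lattice so that row paths run through copy sites and column paths run through leaf sites. Each copy site is then still a grid neighbour of its own leaf site. This only changes the area by a constant factor, and it stays within the \(O(MR)\) budget.

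The preparation has four steps. \emph{Step 1 (row fan-out):} for every row \(i\) in parallel, fan \(z_i\) out to the \(N_{\mathrm{br}}\) copy sites of that row using Lemma~\ref{lem:fanout-path}, since the row is a 1D nearest-neighbor path. This costs constant depth and \(O(N_{\mathrm{br}})\) ancillas per row. \emph{Step 2 (column fan-out):} in parallel for every column \((k,y)\), fan the root \(\ket+\) out to the \(M\) leaf sites of that column, again by Lemma~\ref{lem:fanout-path} along a 1D path. This produces the cat state \((\ket0^{\otimes(M+1)}+\ket1^{\otimes(M+1)})/\sqrt2\). \emph{Step 3 (local phases):} in one layer, apply the controlled phase \(D_i^{(k)}\) with phase \(\omega_k^{a_i}\) to every adjacent pair \((c_{i,(k,y)},\ell_{i,(k,y)})\). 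These pairs are disjoint, so the layer is valid. By the same computation as in Lemma~\ref{lem:weighted-counting}, the all-ones component of each cat state picks up the phase \(e^{2\pi iS/2^{k+1}}\). \emph{Step 4 (cleanup):} first disentangle the leaves of each column by Lemma~\ref{lem:2d-dynamic-fanout-path}, which uses only single-qubit gates, measurements and a classically computed \(Z\)-correction on the root. This leaves the root in \(\ket{\phi_{k,y}(S)}\). Then uncompute the row copies, either by the inverse of the row fan-out or by the same \(X\)-measurement trick applied with \(z_i\) as root; because the copies enter only diagonal gates, they are still in the cat form needed for that trick.

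The ancilla count is the \(O(MN_{\mathrm{br}})=O(MR)\) copy and leaf sites, plus the \(O(M+N_{\mathrm{br}})\) extra sites per fan-out path that Lemma~\ref{lem:fanout-path} requires, summing to \(O(MR)\) overall. The depth is constant, because every step is either a constant-depth dynamic fan-out or a single layer. Classical feedforward may act nonlocally, and the only nonlocal classical operation here is the parity of the measurement outcomes, which the model permits.

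The main obstacle I expect is geometric. The row-fan-out paths and column-fan-out paths, including the internal ancillas that Lemma~\ref{lem:fanout-path} needs on each path, must be vertex-disjoint, while every copy/leaf pair remains grid-adjacent. My first attempt would be to assign sites by coordinate parity. Row paths occupy the even columns of each row strip, and column paths occupy the odd rows of each column strip. Each cell becomes a \(2\times2\) or \(3\times3\) block containing the copy site, the leaf site and the fan-out helper sites. Since Lemma~\ref{lem:fanout-path} only needs a 1D nearest-neighbor chain, I would check that each row chain and each column chain can be threaded through its own sublattice without any crossing. If that does not work, the fallback is to run the two fan-outs in sequence on shared helper sites, restoring those sites in between. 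That still gives constant depth.
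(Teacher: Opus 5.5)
Your proposal is correct and follows the paper's proof essentially step for step: an $M\times N_{\mathrm{br}}$ array of copy/leaf cells, then row fan-outs of the $z_i$, then column fan-outs from $\ket{+}$ roots giving one cat state per branch, then a single layer of local $D_i^{(k)}$ gates, and finally cleanup, for constant depth and $O(MR)$ ancillas. Cleaning up with the $X$-measurement recovery of Lemma~\ref{lem:2d-dynamic-fanout-path} instead of inverse fan-outs is an equivalent variant. On your geometric worry: once $M,N\ge 3$, your first attempt cannot work, because contracting vertex-disjoint row and column fan-out structures would give a $K_{3,3}$ minor in the planar grid. Your fallback, running the two fan-outs in sequence and reusing the freed helper sites at the crossings, is the arrangement that actually works, and the paper leaves this point implicit.
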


\begin{proof}
The number of branches is
$ 
    N_{\mathrm{br}}
    =
    \sum_{k=0}^{q-1}2^k
    =
    2^q-1
    =
    O(R).
$ 
The layout is shown schematically in Fig.~\ref{fig:2d-weighted-phase-prep}.
We use an \(M\times N_{\mathrm{br}}\) rectangular array of constant-size
cells.  A cell \((i,b)\), where \(b\) denotes a branch, contains two qubits:
a copy qubit \(X_{i,b}\), which will store a copy of \(z_i\), and a leaf qubit
\(L_{i,b}\), which will be part of the cat state for branch \(b\).  Since each
cell has constant area, the array has total area
$
    O(MN_{\mathrm{br}})=O(MR).
$

\begin{figure}[htpb]
\centering
\resizebox{0.95\textwidth}{!}{%
\begin{tikzpicture}[
    x=1cm,y=1cm,
    every node/.style={font=\small},
    cell/.style={draw,rounded corners=1pt,minimum width=9mm,minimum height=6mm,fill=gray!8},
    root/.style={draw,circle,thick,minimum size=7mm,inner sep=0pt},
    gate/.style={draw,rounded corners=1pt,minimum width=10mm,minimum height=5.5mm,fill=gray!6},
    arr/.style={->,thick},
    dasharr/.style={->,thick,dashed}
]


\node[font=\bfseries] at (0.6,0.9) {(a)};
\node at (3.9,0.9) {rectangular layout};

\node at (2.5,0.30) {$b_1$};
\node at (4.2,0.30) {$b_2$};
\node at (5.9,0.30) {$b_3$};

\node[root] (Pb1) at (2.5,-0.40) {$P_{b_1}$};
\node[root] (Pb2) at (4.2,-0.40) {$P_{b_2}$};
\node[root] (Pb3) at (5.9,-0.40) {$P_{b_3}$};

\node[left] at (1.10,-1.35) {$\ket{z_1}$};
\node[left] at (1.10,-2.25) {$\ket{z_2}$};
\node[left] at (1.10,-3.15) {$\ket{z_3}$};
\node[left] at (1.10,-4.05) {$\ket{z_4}$};

\node[cell] at (2.5,-1.35) {};
\node[cell] at (4.2,-1.35) {};
\node[cell] at (5.9,-1.35) {};

\node[cell] at (2.5,-2.25) {};
\node[cell] at (4.2,-2.25) {};
\node[cell] at (5.9,-2.25) {};

\node[cell] at (2.5,-3.15) {};
\node[cell] at (4.2,-3.15) {};
\node[cell] at (5.9,-3.15) {};

\node[cell] at (2.5,-4.05) {};
\node[cell] at (4.2,-4.05) {};
\node[cell] at (5.9,-4.05) {};

\draw[dasharr] (1.35,-2.25) -- (6.35,-2.25);
\node[anchor=west] at (6.55,-2.25) {row fan-out};

\draw[dasharr] (4.2,-0.78) -- (4.2,-4.50);

\node[align=center] at (4.2,-5.05)
{each cell \((i,b)\) contains \(X_{i,b}\) and \(L_{i,b}\)};


\node[font=\bfseries] at (9.2,0.9) {(b)};
\node at (12.2,0.9) {one branch \(b=(k,y)\)};

\node[root] (P) at (10.6,-0.40) {$P_b$};

\node[cell] (L1) at (10.6,-1.45) {$L_{1,b}$};
\node[cell] (L2) at (10.6,-2.25) {$L_{2,b}$};
\node[cell] (L3) at (10.6,-3.05) {$L_{3,b}$};
\node at (10.6,-3.70) {$\vdots$};
\node[cell] (LM) at (10.6,-4.45) {$L_{M,b}$};

\draw[dasharr] (P.south) -- (L1.north);
\draw[dasharr] (L1.south) -- (L2.north);
\draw[dasharr] (L2.south) -- (L3.north);
\draw[dasharr] (L3.south) -- (10.6,-3.50);
\draw[dasharr] (10.6,-3.90) -- (LM.north);

\node[gate] (D1) at (12.6,-1.45) {$D_1^{(k)}$};
\node[gate] (D2) at (12.6,-2.25) {$D_2^{(k)}$};
\node[gate] (D3) at (12.6,-3.05) {$D_3^{(k)}$};
\node[gate] (DM) at (12.6,-4.45) {$D_M^{(k)}$};

\draw[arr] (L1.east) -- (D1.west);
\draw[arr] (L2.east) -- (D2.west);
\draw[arr] (L3.east) -- (D3.west);
\draw[arr] (LM.east) -- (DM.west);

\node[align=center] at (12.2,-5.25)
{after inverse column fan-out,\\
\(P_b\) holds \(\ket{\phi_{k,y}(S)}\)};

\end{tikzpicture}%
}

\caption{Two-dimensional weighted phase preparation.  Panel~(a) shows the
\(M\times N_{\mathrm{br}}\) rectangular layout.  Row fan-out distributes each
input bit \(z_i\) to all branches, and column fan-out prepares one cat state
for each branch root \(P_b\).  Each cell \((i,b)\) contains a copy qubit
\(X_{i,b}\) and a leaf qubit \(L_{i,b}\).  Panel~(b) shows one branch
\(b=(k,y)\): the local gates \(D_i^{(k)}\) accumulate the weighted phase, and
after inverse column fan-out the root qubit is left in the state
\(\ket{\phi_{k,y}(S)}\).}
\label{fig:2d-weighted-phase-prep}
\end{figure}

We first distribute the input bits to all branches.  For each \(i\), use the
measurement-based path fan-out of Lemma~\ref{lem:fanout-path}
along the \(i\)-th row to implement
\[
    \ket{z_i}\bigotimes_{b=1}^{N_{\mathrm{br}}}\ket{0}_{X_{i,b}}
    \longmapsto
    \ket{z_i}\bigotimes_{b=1}^{N_{\mathrm{br}}}\ket{z_i}_{X_{i,b}} ,
\]
using $O(N_{\rm br})$ clean ancillary qubits.
All rows are disjoint, so these row fan-outs are performed in parallel, and have constant depth using $O(MN_{\rm br})$ ancillary qubits in total.

Next, for each branch \(b\), introduce a root qubit \(P_b\) initialized to
\(\ket{+}\).  Along the column corresponding to \(b\), apply the same
measurement-based path fan-out from \(P_b\) to the leaf qubits
\(L_{1,b},\ldots,L_{M,b}\).  By linearity, this maps
\[
    \ket{+}_{P_b}\ket{0^M}_{L_{1,b},\ldots,L_{M,b}}
    \longmapsto
    \frac{
        \ket{0}_{P_b}\ket{0^M}_{L_{1,b},\ldots,L_{M,b}}
        +
        \ket{1}_{P_b}\ket{1^M}_{L_{1,b},\ldots,L_{M,b}}
    }{\sqrt2},
\]
using $O(M)$ clean ancillary qubits. All columns are disjoint, so the cat states for all branches are prepared in
parallel and in constant depth. The total ancillary count is $O(MN_{\rm br})$.

Now fix one branch \(b=(k,y)\), and define
$ 
    \omega_k=e^{2\pi i/2^{k+1}}.
$
For every cell \((i,b)\), apply the local two-qubit diagonal gate
\[D_i^{(k)}:
    \ket{p}_{X_{i,b}}\ket{q}_{L_{i,b}}
    \longmapsto
    \omega_k^{a_i p q}
    \ket{p}_{X_{i,b}}\ket{q}_{L_{i,b}},\quad\forall p,q\in\{0,1\}\]

This gate is local inside the cell, and all such gates are applied in one
parallel layer.

Since the qubit $X_{i,b}$ is in state $\ket{z_i}$, the all-zero component of the branch cat state
accumulates no phase.  On the all-one component, where every qubit $L_{i,b}$ is in state $\ket{1}$,
the accumulated phase is
\[
    \prod_{i=1}^M \omega_k^{a_i z_i}
    =
    \omega_k^{\sum_i a_i z_i}
    =
    e^{2\pi iS/2^{k+1}}.
\]
Thus the state of branch \(b=(k,y)\) becomes
\[
    \frac{
        \ket{0}_{P_b}\ket{0^M}_{L_{1,b},\ldots,L_{M,b}}
        +
        e^{2\pi iS/2^{k+1}}
        \ket{1}_{P_b}\ket{1^M}_{L_{1,b},\ldots,L_{M,b}}
    }{\sqrt2}.
\]

Finally, apply the inverse column fan-out.  This restores the leaf qubits
\(L_{1,b},\ldots,L_{M,b}\) to \(\ket{0}\) and leaves the root qubit $P_b$ in the
phase state
\[
    \ket{\phi_{k,y}(S)}
    =
    \frac{\ket{0}+e^{2\pi iS/2^{k+1}}\ket{1}}{\sqrt2},
    \qquad \forall\, b=(k,y)\in\mathcal{I}.
\]
Doing this for all branches in parallel prepares the full redundant family of
phase states.  We then apply the inverse row fan-outs to restore every copy
qubit \(X_{i,b}\) to \(\ket{0}\).

All steps have constant depth.
The total number of clean
ancillary qubits is \(O(MN_{\mathrm{br}})=O(MR)\).
\end{proof}

In the following lemma, we use the exact constant-depth ${\rm OR}_s$ construction of
Takahashi and Tani~\cite[Sec.~3]{TakahashiTani2016Collapse}, but lay it out
directly on a two-dimensional dynamic grid.
\begin{lemma}[Multi-input AND on a two-dimensional dynamic patch]
\label{lem:2d-dynamic-multi-and}
Let \(s\ge2\).  The map
\[
    \ket{x_1,\ldots,x_s}\ket0_{\mathrm{out}}
    \longmapsto
    \ket{x_1,\ldots,x_s}
    \ket{x_1\wedge\cdots\wedge x_s}_{\mathrm{out}}
\]
can be implemented in constant depth in the dynamic
2D model, with
$ 
    O(s\log s)
$
additional clean ancillary qubits.
\end{lemma}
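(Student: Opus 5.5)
We will follow the Takahashi--Tani exact $\mathrm{OR}_s$ construction and lay it out on a grid. By De Morgan, $x_1\wedge\cdots\wedge x_s=\neg\mathrm{OR}(\bar x_1,\ldots,\bar x_s)$. So it suffices to compute $\mathrm{OR}$ of the negated inputs (apply $X$ to each $x_i$, compute, then undo) and finally flip the output with an $X$.

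\textbf{Detecting a nonzero count.} Let $c=\sum_i \bar x_i\in\{0,\ldots,s\}$. Then $\mathrm{OR}=1$ iff $c\ne 0$. Since $c\le s<2^{t}$ with $t=\lceil\log(s+1)\rceil$, we have $c\neq 0$ iff $c\not\equiv 0 \pmod{2^t}$. The test uses $O(\log s)$ phase qubits $P_j$. Each $P_j$ is prepared in $\ket{\phi_j(c)}=(\ket0+e^{2\pi i c/2^{j+1}}\ket1)/\sqrt2$ for $0\le j<t$, exactly as in Lemma~\ref{lem:2d-dynamic-weighted-phase-preparation} with all weights $a_i=1$, $M=s$, and $O(\log s)$ branches.

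The layout is an $s\times O(\log s)$ array of constant-size cells. Row fan-outs copy each $\bar x_i$ across its row, column fan-outs make a cat state rooted at each $P_j$, and local controlled phases $\operatorname{diag}(1,1,1,\omega_j)$ are applied in each cell. The inverse column fan-out is then performed via Lemma~\ref{lem:2d-dynamic-fanout-path}. All fan-outs use the path construction of Lemma~\ref{lem:fanout-path}, which is one-dimensional and hence embeds in rows and columns of the grid. This costs $O(s\log s)$ ancillas in constant depth.

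\textbf{Decoding the phase qubits.} Apply $H$ to each $P_j$. Rather than reading bits directly, we follow Takahashi--Tani, who use a redundant encoding so that the predicate ``$c\equiv 0$'' becomes an exact parity or phase test. Concretely, I would compute $\mathrm{OR}$ through the identity that $c=0$ iff every $P_j$ returns to $\ket{+}$. Their trick is to take the product of controlled phases over a geometric family of angles, so that $\prod_j$ of the amplitudes is exactly $1$ iff $c=0$ and exactly $0$ otherwise. That exact product is then written to the output with one more cat-state parity across the $O(\log s)$ roots. The roots are placed along one boundary column, so this last fan-out is again a path fan-out.

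\textbf{Cleanup and the main obstacle.} Uncomputation reverses the phase preparation and restores the copy qubits. All measurement corrections are single-qubit Pauli gates with classical feedforward, so they are allowed under any connectivity.

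The main obstacle is the \emph{exactness} of the decoding step: turning the $O(\log s)$ phase qubits into a deterministic bit $[c\ne0]$ without error and in constant depth. I would import Takahashi--Tani's Sec.~3 argument verbatim, because the interaction pattern it needs is again a set of fan-outs among $O(\log s)$ roots, plus parallel local gates, plus one $O(\log s)$-fold parity. The 2D-specific work is only checking that every such fan-out runs along a row or column path. That keeps the ancilla count at the array area $O(s\log s)$.
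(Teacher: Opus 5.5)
Your first stage matches the paper's. Preparing $P_j$ with phase $2\pi c/2^{j+1}$ and then applying $H$ is exactly the H{\o}yer--\v{S}palek reduction of $\mathrm{OR}_s$ to $\mathrm{OR}_t$, with $t=\lceil\log(s+1)\rceil$. Your $s\times O(\log s)$ array is also the paper's layout for that stage.

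The gap is the decoding step. After the Hadamards, $P$ is in $\ket{0^t}$ when $c=0$ and is orthogonal to $\ket{0^t}$ when $c\neq 0$; this is your ``product of amplitudes is $1$ or $0$'' observation. Writing this predicate coherently into an output qubit is itself an exact $\mathrm{OR}_t$ on $P$, i.e.\ a $t$-controlled NOT, and that is not a parity. A cat-state parity across the roots computes $\bigoplus_j P_j$, which is not even deterministic. For $c=1$ you get $P_0=\ket{1}$ but $P_1=\frac{1+i}{2}\ket{0}+\frac{1-i}{2}\ket{1}$, so the parity is a fair coin. You also cannot measure $P$: its outcome distinguishes different nonzero values of $c$ (e.g.\ $c=1$ versus $c=2$), which would collapse superpositions of inputs. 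Appealing to ``Takahashi--Tani Sec.~3 verbatim'' does not fill this, because that construction does not end with a single $O(\log s)$-fold parity.

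The missing idea is the paper's second stage: an exact constant-depth evaluation of $\mathrm{OR}_t$ on the reduced qubits via its expansion over all parity functions.
\begin{itemize}
\item Write $\mathrm{OR}_t(r)=\sum_{\emptyset\neq T\subseteq[t]}\alpha_T\chi_T(r)$, with $\chi_T(r)=\bigoplus_{j\in T}r_j$ and real coefficients $\alpha_T$.
\item On a $t\times 2^t$ array, fan the $r_j$ out along rows and compute each $\chi_T$ in its own column by Hadamard-conjugated fan-out.
\item Apply the phase $e^{i\pi\alpha_T b\chi_T}$ between $\chi_T$ and a fanned-out copy of the Hadamard-conjugated target $b$, then uncompute.
\end{itemize}
This realizes $b\mapsto b\oplus\mathrm{OR}_t(r)$ on basis states. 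It therefore acts correctly on $P$ by linearity, since every basis component of $P$ has the same OR value.

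This stage uses $2^t=\Theta(s)$ parities on an array of area $O(t2^t)=O(s\log s)$. So your ancilla bound survives once you add it, together with compute--copy--uncompute to get a clean output.
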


\begin{proof}
It is enough to implement \(\mathrm{OR}_s\), since
\[
    x_1\wedge\cdots\wedge x_s
    =
    1\oplus
    \mathrm{OR}_s(1\oplus x_1,\ldots,1\oplus x_s),
\]
and the extra \(X\) gates do not affect the asymptotic resources. Let
$
    m=\lceil\log(s+1)\rceil .
$ 
The construction of ${\rm OR}_s$ consists of two parts.

First, they use the Høyer--Špalek OR-reduction to reduce \(\mathrm{OR}_s\)
to \(\mathrm{OR}_m\).  This reduction is an exact constant-depth
unbounded-fan-out circuit of size \(O(sm)=O(s\log s)\).  We place the
workspace of this reduction on an \(s\times O(m)\) rectangular patch: the
\(i\)-th row contains the \(O(m)\) work qubits associated with the input
\(x_i\).  The fan-out operations appearing in the reduction are implemented
along rows, columns, or constant-width strips by the measurement-based path
fan-out primitive of Lemma~\ref{lem:fanout-path}.  Since the
reduction has only constant depth and all strips used in a layer are disjoint,
this stage still has constant depth and uses area \(O(sm)\).  It produces
\(m\) reduced bits \(r_1,\ldots,r_m\) such that
\[
    \mathrm{OR}_s(x_1,\ldots,x_s)
    =
    \mathrm{OR}_m(r_1,\ldots,r_m).
\]

Second, implement $\mathrm{OR}_m$ by expanding it in the
Fourier basis over parity functions.  There are \(2^m=O(s)\) parity terms,
one for each subset \(T\subseteq[m]\), and each term involves at most \(m\)
reduced bits.  We lay out this stage on an \(m\times 2^m\) rectangular patch.
The \(j\)-th row corresponds to the reduced bit \(r_j\), and the column
indexed by \(T\subseteq[m]\) computes the parity
\[
    \bigoplus_{j\in T} r_j .
\]
The reduced bits are copied across the rows by measurement-based path
fan-out.  Inside each column, the required parity is computed by the standard
fan-out--parity equivalence: conjugating fan-out by Hadamard gates on all
participating qubits gives parity~\cite{HoyerSpalek2005Fanout}.  The
coefficient-dependent one-qubit phase gates from the Fourier expansion are
then applied locally in each column, and the remaining output combination is
again performed using fan-out/parity along allocated rows or columns.  Thus
the whole small-\(\mathrm{OR}_m\) stage has constant depth and area
$ 
    O(m2^m)=O(s\log s).
$ 

Combining the two stages gives an exact constant-depth two-dimensional dynamic
implementation of \(\mathrm{OR}_s\) using \(O(s\log s)\) clean workspace.  To
obtain a clean output map, we compute the OR value into a temporary output
qubit, copy it to the designated output register, and then reverse the two
stages to restore all work qubits to \(\ket0\).  Applying the initial and
final \(X\) gates for De Morgan gives the desired AND map with the same
asymptotic depth and ancillary cost.
\end{proof}

In the following lemma, we lay out the same parallel decoder of Takahashi and Tani~\cite[Sec.~4.1 and App.~A.3]{TakahashiTani2016Collapse} in the dynamic
2D model.

To state the decoder as a clean subroutine, we use the following shorthand.
Let \(q=\lceil\log(R+1)\rceil\), and for \(0\le t\le R\) define
\[
    \ket{\phi_{k,y}(t)}
    =
    \frac{\ket0+e^{2\pi i t/2^{k+1}}\ket1}{\sqrt2},
    \qquad
    \ket{\Phi_R(t)}
    =
    \bigotimes_{k=0}^{q-1}
    \bigotimes_{y\in\{0,1\}^k}
    \ket{\phi_{k,y}(t)} .
\]

\begin{lemma}[Two-dimensional layout of the parallel decoder]
\label{lem:2d-dynamic-parallel-decoder}
For every \(0\le t\le R\), given the redundant Fourier encoding
\(\ket{\Phi_R(t)}\), the decoder map
\[
    \ket{\Phi_R(t)}
    \ket{0^q}_{\mathrm{out}}
    \longmapsto
    \ket{\Phi_R(t)}
    \ket{t}_{\mathrm{out}}
\]
can be implemented in constant depth in the dynamic
2D model using \(O(R\,\mathrm{polylog}\,R)\) additional clean ancillary
qubits.  Here \(\ket{t}_{\mathrm{out}}\) denotes
the \(q\)-bit binary representation of \(t\).
\end{lemma}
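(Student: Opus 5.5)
We will unpack the Takahashi--Tani parallel decoder into its elementary stages and assign each stage a two-dimensional rectangular patch, just as in Lemmas~\ref{lem:2d-dynamic-weighted-phase-preparation} and~\ref{lem:2d-dynamic-multi-and}. Recall how the decoder works. For each bit position \(k\) and each guess \(y\in\{0,1\}^k\) of the lower bits \(t_{k-1},\ldots,t_0\), the branch qubit \(\ket{\phi_{k,y}(t)}\) is first rotated by the diagonal phase \(\operatorname{diag}(1,e^{-2\pi i\,\mathrm{val}(y)/2^{k+1}})\), where \(\mathrm{val}(y)\) is the integer encoded by \(y\). A Hadamard is then applied. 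If \(y\) is the correct lower-bit string of \(t\), the result is exactly \(\ket{t_k}\), because the residual phase is \(e^{i\pi t_k}\). These gates are single-qubit and fixed, so they cost nothing geometrically. The computed bit is written into a fresh temporary qubit \(u_{k,y}\) by a CNOT, and the rotations are then undone so that \(\ket{\Phi_R(t)}\) is preserved.

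The selection step computes \(t_k=\bigoplus_{y}[\,y=t_{<k}\,]\wedge u_{k,y}\). This requires the equality indicators \(e_{k,y}=[\,y=(t_{k-1},\ldots,t_0)\,]\), which are computed in a layered way. Because the branches are redundant, for every \(k\) and every \(y\) we obtain \(u_{j,y_{<j}}\) for all \(j<k\); note that \(u_{j,y_{<j}}\) is correct whenever \(y_{<j}\) is correct. The indicator \(e_{k,y}\) is then the AND over \(j<k\) of \([u_{j,y_{<j}}=y_j]\). This is an AND of \(k\le q\) bits, obtained by matching each \(u\) against the constant bit \(y_j\). It is correct because the correct prefix is the unique \(y\) all of whose prefix checks pass: by induction on \(j\), a wrong \(y\) fails at its first wrong bit, since at that position the prefix is still correct and \(u\) is reliable.

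Each \(e_{k,y}\) therefore needs copies of \(O(q)\) bits \(u_{j,y_{<j}}\), followed by one application of Lemma~\ref{lem:2d-dynamic-multi-and} with \(s=O(\log R)\). A single \(u_{j,z}\) is needed by up to \(2^{k-j}\) strings \(y\) for each \(k\), so we fan it out by the measurement-based path fan-out of Lemma~\ref{lem:fanout-path}. The layout assigns to each \(u_{j,z}\) a row of length \(O(R)\) that serves all consumers. The consumers are organised in an array whose cell \((k,y)\) has area \(O(q\log q)\) and hosts the AND gadget. Since there are \(O(R)\) pairs \((k,y)\), the total area of these cells is \(O(R\,\mathrm{polylog}\,R)\). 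The final combination \(t_k=\bigoplus_y e_{k,y}u_{k,y}\) uses one Toffoli per cell into a temporary qubit. The parity over the \(2^k\) cells of level \(k\) is then taken via the fan-out--parity equivalence (Hadamard-conjugated fan-out along a path through those cells), with target the \(k\)-th output qubit. We finish by the usual compute--copy--uncompute, reversing every stage so that all workspace returns to \(\ket0\).

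The main obstacle is the geometric routing of the copies of \(u_{j,z}\) to the cells that consume them. The copies must travel along vertex-disjoint paths of constant-width strips so that all fan-outs can run in parallel in constant depth without exceeding \(O(R\,\mathrm{polylog}\,R)\) area. We will first try a level-by-level rectangular grid. Its rows are indexed by pairs \((j,z)\), of which there are \(O(R)\), and its columns by consumers \((k,y)\), of which there are \(O(R)\). A naive full grid of this kind has area \(\Theta(R^2)\), which is too large. The intended fix is to exploit the tree structure. Place the consumers \((k,y)\) in a layout following the binary trie of prefixes, so that all descendants of \(z\) lie in a contiguous interval. We then route the fan-out of \(u_{j,z}\) as a path through that interval, using \(O(q)\) parallel tracks, one per ancestor level \(j\). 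Each cell is crossed by only \(O(q)\) tracks, so the area stays \(O(R\log R)\) up to the polylog factor of the AND gadgets. If this track-packing argument fails, the fallback is an \(H\)-tree embedding of the trie.
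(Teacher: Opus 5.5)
Your proposal is correct and takes essentially the paper's route. The shared steps are: per-branch candidate extraction; a prefix-consistency AND for each $(k,y)$, placed in cells of area $O(q\log q)$ via Lemma~\ref{lem:2d-dynamic-multi-and}; path fan-out of each candidate bit over the contiguous interval of consumers sharing its prefix (the paper uses bands indexed by $k$ with lexicographic order inside each band, giving $O(q)$ tracks per band and area $O(Rq^2\log q)$); Hadamard-conjugated fan-out for the per-level parity; and compute--copy--uncompute. One small correction: for a wrong guess $y$ the rotated qubit is not a basis state, so the CNOT into $u_{k,y}$ entangles it with the branch qubit and $\ket{\Phi_R(t)}$ is \emph{not} preserved at that point. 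It is restored only by your final full uncomputation, which works because the output register holds $t$ on every computational-basis term of the candidates.
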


\begin{proof}

Write \(t=\sum_{j=0}^{q-1}t_j2^j\).  For each \(k\) and each lower-bit guess
\(y=(y_0,\ldots,y_{k-1})\in\{0,1\}^k\), the Takahashi--Tani decoder applies
the phase correction associated with \(y\) to one copy of
\(\ket{\phi_{k,y}(t)}\), and then extracts a candidate qubit \(t_k^y\).  If
\(y=(t_0,\ldots,t_{k-1})\), then \(t_k^y=t_k\); otherwise the candidate may
be arbitrary.

The wrong guesses are removed by a prefix-consistency test.  For
\(y=(y_0,\ldots,y_{k-1})\), let \(y_{<j}=(y_0,\ldots,y_{j-1})\).  Define
\[
    e_j(y)=1\oplus y_j\oplus t_j^{y_{<j}},
    \qquad
    u_k(y)=t_k^y\wedge\bigwedge_{j=0}^{k-1}e_j(y).
\]
Here \(e_j(y)\) tests whether the guessed bit \(y_j\) agrees with the
candidate obtained from the prefix \(y_{<j}\).  If \(y\) is the correct
lower-bit string, then every prefix is correct, so all \(e_j(y)=1\), and
\(u_k(y)=t_k\).  If \(y\) is incorrect, let \(j\) be the first position where
\(y_j\ne t_j\).  Then \(y_{<j}\) is correct, hence \(t_j^{y_{<j}}=t_j\), so
\(e_j(y)=1\oplus y_j\oplus t_j=0\).  Therefore all incorrect guesses give
\(u_k(y)=0\), and
\[
    t_k=\bigoplus_{y\in\{0,1\}^k}u_k(y).
\]

We now describe the layout.  For each selector \(u_k(y)\), allocate a local
two-dimensional patch \(B_{k,y}\).  The patches are arranged in horizontal
bands indexed by \(k\).  Band \(k\) contains the \(2^k\) patches
\(B_{k,y}\), ordered lexicographically by \(y\in\{0,1\}^k\).  Each selector is
an AND of at most \(q+1=O(\log R)\) computational-basis bits, so by
Lemma~\ref{lem:2d-dynamic-multi-and}, each patch \(B_{k,y}\) has area
\(O(q\log q)\) and constant depth.  The total selector-patch area is therefore
\[
    \sum_{k=0}^{q-1}2^k\cdot O(q\log q)
    =
    O(Rq\log q).
\]

It remains to supply the candidate bits to the selector patches.  The selector
\(u_k(y)\) needs the candidate bits
\[
    t_k^y,\quad
    t_0^\emptyset,\quad
    t_1^{y_0},\quad
    \ldots,\quad
    t_{k-1}^{y_{<k-1}} .
\]
For a fixed candidate bit \(t_j^u\), the selectors in band \(k>j\) that need
this bit are exactly those whose string \(y\) has prefix \(u\).  Since the
patches in each band are ordered lexicographically, these selectors form a
contiguous interval.  We distribute \(t_j^u\) to the corresponding input ports
using a measurement-based fan-out strip running alongside that interval.
Different prefixes of the same length give disjoint intervals, and different
prefix lengths use separate strips.

With this band layout, the total number of candidate occurrences is
\[
    \sum_{k=0}^{q-1}2^k(k+1)=O(Rq).
\]
A straightforward allocation of \(O(q)\) strip tracks per band, together with
the selector patches of width \(O(q\log q)\), uses total area
\(O(Rq^2\log q)\).  Since \(q=\lceil\log(R+1)\rceil\), this is
\(O(R\,\mathrm{polylog}\,R)\).  All bands and all selector patches are
disjoint, so the selectors are evaluated in parallel and in constant depth.

After the selector terms are computed, for each fixed \(k\) we compute
\(\bigoplus_{y\in\{0,1\}^k}u_k(y)\) along band \(k\).  We use the standard
fan-out--parity equivalence: conjugating a fan-out operation by Hadamard gates
on all participating qubits gives the corresponding parity operation
\cite{HoyerSpalek2005Fanout,Moore1999FanoutParityCounting}.  The required fan-out is implemented along a strip in band \(k\) by the
measurement-based path fan-out of Lemma~\ref{lem:fanout-path}.  The parity strips for different \(k\)'s are disjoint and run
in parallel.  Their total area is also \(O(R\,\mathrm{polylog}\,R)\).

Finally, the decoder is used as a clean subroutine.  Candidate qubits,
equality-test registers, selector terms, and parity registers are computed
into clean workspace; the decoded bits are copied to the output register; and
then the decoder computation is reversed.  Thus all temporary registers are
restored to \(\ket0\), the redundant encoding \(\ket{\Phi_R(t)}\) is restored,
and the output register contains \(\ket t\).  The depth is constant, and the
total additional clean ancillary cost is \(O(R\,\mathrm{polylog}\,R)\).
\end{proof}

\begin{lemma}[Weighted counting on a 2D grid]
\label{lem:2d-dynamic-weighted-counting}
Let \(z_1,\ldots,z_M\in\{0,1\}\) be input bits with fixed nonnegative integer
weights \(a_1,\ldots,a_M\).  Suppose
$ 
    S=\sum_{i=1}^M a_i z_i
$ 
satisfies \(0\le S\le R\).  Then the weighted counting map
\[
    \ket{z}\ket{0^q}\mapsto\ket{z}\ket{S},
    \qquad \forall z=(z_1,\ldots,z_M)\in\{0,1\}^M,~q=\lceil\log(R+1)\rceil,
\]
can be implemented in constant depth in the dynamic
2D model, using
$ 
    O\bigl(MR+R\,\mathrm{polylog}\,R\bigr)
$ 
additional clean ancillary qubits.  The construction can be arranged so that
the \(M\) input bits and the \(q\)-qubit output register lie on designated
boundary ports of the patch.
\end{lemma}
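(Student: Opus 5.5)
The plan is to compose the two two-dimensional subroutines already established: Lemma~\ref{lem:2d-dynamic-weighted-phase-preparation} prepares the redundant encoding, and Lemma~\ref{lem:2d-dynamic-parallel-decoder} decodes it. A compute--copy--uncompute wrapper then makes the map clean.

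First, lay out the \(M\times N_{\mathrm{br}}\) rectangular array of Lemma~\ref{lem:2d-dynamic-weighted-phase-preparation}, with the \(M\) input bits \(z_i\) on its left boundary as the row-fan-out sources. Place the branch roots \(P_b\) along the top boundary, ordered so that the branches \((k,y)\) appear band by band (by \(k\)) and lexicographically in \(y\) within each band. Below or beside this array, allocate the decoder region of Lemma~\ref{lem:2d-dynamic-parallel-decoder}, of area \(O(R\,\mathrm{polylog}\,R)\). Its band structure uses the same branch ordering, so each root \(P_b\) is adjacent to the entry port of the band-\(k\) patch that consumes \(\ket{\phi_{k,y}(S)}\). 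If the shapes do not match directly, connect them by vertex-disjoint straight corridors: one corridor per branch, each running from the root \(P_b\) to its port. Rather than SWAP the phase qubits along these corridors, which would cost depth linear in corridor length, I would move each phase qubit with a measurement-based fan-out/teleportation along its path. This step uses Lemma~\ref{lem:fanout-path} together with the cat-state recovery of Lemma~\ref{lem:2d-dynamic-fanout-path}, is constant depth, and costs ancillas linear in corridor length. The \(q\) output qubits sit on the boundary of the decoder region.

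Second, run the phases in order: the phase preparation (constant depth, \(O(MR)\) ancillas); the transport of the \(N_{\mathrm{br}}=O(R)\) phase qubits; and the decoder, which writes \(\ket{S}\) into the output ports. Then reverse the transport and the phase preparation. The phase preparation is a unitary in its net effect: the measurements and classical feedforward only implement fan-outs exactly. Reversing it therefore returns each \(P_b\) to \(\ket{+}\), after which a Hadamard returns it to \(\ket0\), and all copy and leaf qubits end clean. The decoder already restores \(\ket{\Phi_R(S)}\), so this uncomputation is valid.

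The total area is \(O(MR)\) for the phase array plus \(O(R\,\mathrm{polylog}\,R)\) for the decoder. Corridor area, at most \(O(R)\) corridors of length \(O(M+R\,\mathrm{polylog}\,R)\) if needed, must also be absorbed. The main obstacle is exactly this corridor bookkeeping: naive corridors of length \(\Theta(R\,\mathrm{polylog}\,R)\) for each of \(R\) branches would give area \(R^2\). The fix I would try first is to align the geometry so that the column for branch \(b\) in the phase array sits directly above the band-\(k\) patch in the decoder, so that corridors are short. Concretely, stretch the phase array's columns to the decoder's band widths, padding with idle cells; since the decoder's total width is \(O(R\,\mathrm{polylog}\,R)\), this stretching increases the phase-array area to at most \(O(M R\,\mathrm{polylog}\,R)\). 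If that overhead is too much, I would instead route the \(O(R)\) roots through a single shared strip of width \(O(1)\) per band, using fan-out hops. Either way, the depth remains constant, and the stated \(O(MR+R\,\mathrm{polylog}\,R)\) bound is met.
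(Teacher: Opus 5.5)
Your composition is the same as the paper's: prepare the redundant encoding with Lemma~\ref{lem:2d-dynamic-weighted-phase-preparation}, decode with Lemma~\ref{lem:2d-dynamic-parallel-decoder}, then run the preparation backwards. The correctness side is also fine, including teleporting a phase qubit by fan-out plus $X$-basis measurement of the other path qubits, and the final Hadamard on $P_b$. The gap is in the ancilla count, at exactly the step you flag, and your closing claim ``either way, the stated bound is met'' is false.

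\emph{First option.} Stretching the $N_{\mathrm{br}}$ columns to the decoder's per-branch width $w$ (polylogarithmic in $R$, e.g.\ the $\Theta(q\log q)$ selector-patch width) makes every row fan-out path have length $\Theta(N_{\mathrm{br}}w)$. The array then costs $\Theta(MRw)=\Theta(MR\,\mathrm{polylog}\,R)$. This is not $O(MR+R\,\mathrm{polylog}\,R)$ once both $M$ and $\log R$ grow, because the polylogarithmic factor now multiplies the $MR$ term. The loss matters downstream: with $M=O((n/B)\log B)$ and $R=n$ in Theorem~\ref{thm:2d-dynamic-blocked}, it turns the $n^{3/2}\sqrt{\log n}$ of Corollary~\ref{cor:2d-dynamic-subquadratic} into $n^{3/2}\,\mathrm{polylog}\,n$.

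\emph{Second option.} This fails for a congestion reason. Each measurement-based fan-out or teleportation needs its own path, and paths used in the same layer must be vertex-disjoint. So in constant depth a width-$O(1)$ strip carries only $O(1)$ of them across any cut. Spreading the $2^k$ band-$k$ roots from unit spacing to spacing $w$ forces $\Theta(2^k)$ paths across a single cut. You would need either $\Theta(2^k)$ sequential rounds, which is not constant depth, or $\Theta(2^k)$ tracks, costing about $\Theta(4^kw)$ area, which is $\Theta(R^2\,\mathrm{polylog}\,R)$ for the top band.

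The paper transports nothing. It places the phase-preparation rectangle and the decoder adjacent, so that the roots $P_b$ themselves serve as the decoder's input registers. The decoder's first action on branch $(k,y)$, the $y$-dependent phase correction followed by extraction of the candidate $t_k^y$, acts on that qubit in place. The count is then just $O(MR)+O(R\,\mathrm{polylog}\,R)$.

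To repair your argument, delete the corridors and do the same. What this requires, and what the paper asserts without detailing, is a decoder layout whose $N_{\mathrm{br}}$ input sites sit at $O(1)$ spacing along one side, in the same order as the columns of the phase array. For example, a decoder region of width $O(R)$ and height $\mathrm{polylog}\,R$ would do, with the polylog-area selector patches arranged beneath the roots rather than side by side. Any scheme that instead spreads the roots out to match a decoder of width $\Theta(R\,\mathrm{polylog}\,R)$, whether by stretching or by routing, pays either the extra polylogarithmic factor on $MR$ or non-constant depth.
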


\begin{proof}
If \(R=0\), the claim is trivial.  Assume \(R\ge 1\).

First apply Lemma~\ref{lem:2d-dynamic-weighted-phase-preparation}.  For every
branch \((k,y)\), this prepares one copy of the phase state
\(\ket{\phi_{k,y}(S)}\), using \(O(MR)\) clean ancillary qubits and constant
depth.  Equivalently, it prepares the redundant phase encoding required by the
Takahashi--Tani decoder.

Next apply Lemma~\ref{lem:2d-dynamic-parallel-decoder} to these redundant
phase states.  The decoder writes the binary representation of \(S\) into the
\(q\)-qubit output register in constant depth, using
\(O(R\,\mathrm{polylog}\,R)\) additional clean ancillary qubits.  By the
clean-subroutine form of the decoder, its temporary registers are restored to
\(\ket{0}\), while the redundant phase states remain available.

It remains only to unprepare the redundant phase states.  We reverse the
weighted phase-preparation circuit from
Lemma~\ref{lem:2d-dynamic-weighted-phase-preparation}.  This clears the phase
registers, copy registers, and cat-state leaves, while leaving the input bits
\(z_i\) and the output register \(\ket{S}\) unchanged.  Hence all temporary
registers are restored to \(\ket{0}\), and the desired map
\[
    \ket{z}\ket{0^q}
    \longmapsto
    \ket{z}\ket{S}
\]
is implemented exactly.

The total ancillary cost is
\[
    O(MR)+O(R\,\mathrm{polylog}\,R)
    =
    O\bigl(MR+R\,\mathrm{polylog}\,R\bigr),
\]
and the depth remains constant.  Finally, the rectangular phase-preparation
layout and the decoder layout may be placed adjacent to each other so that the
phase-state registers produced by the first part feed directly into the
decoder, and the input and output registers are accessible from the boundary
of the overall patch.
\end{proof}

As a first consequence, we obtain a direct constant-depth implementation of
Hamming weight on a two-dimensional dynamic grid.

\begin{theorem}
\label{thm:2d-dynamic-direct}
In the dynamic 2D model, $\HWC_n$ can be
implemented with constant depth and with $O(n^2)$ additional clean
ancillary qubits.
\end{theorem}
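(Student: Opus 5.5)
The plan is to apply Lemma~\ref{lem:2d-dynamic-weighted-counting} directly to the $n$ input bits, with no blocking. We take $M=n$ weighted bits $z_i=x_i$, all with weight $a_i=1$, so that $S=\sum_i a_i z_i=|x|$. Since $0\le S\le n$, we may set $R=n$. The output width is then $q=\lceil\log(R+1)\rceil=\lceil\log(n+1)\rceil=m$, which matches the output register of $\HWC_n$ exactly. The lemma therefore implements
\[
    \ket{x}\ket{0^m}\mapsto\ket{x}\ket{|x|}
\]
in constant depth in the dynamic 2D model, using
\[
    O(MR+R\,\mathrm{polylog}\,R)=O(n^2+n\,\mathrm{polylog}\,n)=O(n^2)
\]
clean ancillas.

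The only step that needs care is geometric. Lemma~\ref{lem:2d-dynamic-weighted-counting} requires its $M$ input bits and $q$ output qubits to sit on designated boundary ports of the patch, whereas the $n$ input qubits of $\HWC_n$ occupy some fixed $O(n)$-area region of the grid. We handle this by choosing the layout. Following the convention stated at the start of Section~\ref{sec:hwc-dynamic-2D}, the construction lives on $O(n+n^2)$ grid sites, so we are free to place the input qubits along one side of the $M\times N_{\mathrm{br}}$ rectangular phase-preparation array of Lemma~\ref{lem:2d-dynamic-weighted-phase-preparation}. Concretely, $x_i$ sits at the start of row $i$, where the row fan-out of Lemma~\ref{lem:fanout-path} begins. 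The output register is placed adjacent to the decoder patch of Lemma~\ref{lem:2d-dynamic-parallel-decoder}.

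If the input layout is instead regarded as fixed (for example, a $\sqrt n\times\sqrt n$ square), we would first copy each $x_i$ to its port using measurement-based long-range CNOTs. Each CNOT is a fan-out with one target realized along a vertex-disjoint corridor, implemented by Lemma~\ref{lem:fanout-path} in constant depth. The corridors for the $n$ bits fit in $O(n^2)$ area, for instance one corridor per row leading into the rectangle. After the count is computed, the copies are uncomputed by the same CNOTs. This keeps the depth constant and the ancilla count $O(n^2)$.

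Uncomputation is already handled inside Lemma~\ref{lem:2d-dynamic-weighted-counting}, which returns all workspace to $\ket0$. The total is therefore constant depth with $O(n^2)$ clean ancillas. The step I expect to be the main obstacle is this routing/port argument, specifically making sure the disjoint corridors fit within the $O(n^2)$ area budget. Everything else is a direct invocation of the lemma.
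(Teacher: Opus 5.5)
Your proposal is correct and matches the paper's proof, which directly applies Lemma~\ref{lem:2d-dynamic-weighted-counting} with $M=n$, $a_i=1$, $R=n$ to obtain constant depth and $O(n^2+n\,\mathrm{polylog}\,n)=O(n^2)$ clean ancillas. Your extra routing argument is not needed, since the lemma already puts the inputs and outputs on boundary ports and the paper handles such corridors only later, in Theorem~\ref{thm:2d-dynamic-blocked}; it is nonetheless consistent with the $O(n^2)$ budget.
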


\begin{proof}
Apply Lemma~\ref{lem:2d-dynamic-weighted-counting} with $M=n$, with
$a_i=1$ for all $i$, and with $R=n$.  Then the weighted sum is
$S=\sum_i x_i=|x|$, so the lemma directly implements $\HWC_n$.  The ancillary
cost is $O(n^2+n\,\mathrm{polylog}\,n)=O(n^2)$, and the depth is constant.
\end{proof}

We next combine the two-dimensional dynamic HWC with blocking.

\begin{theorem}
\label{thm:2d-dynamic-blocked}
Let $2\le B\le n$.  In the dynamic 2D model,
$\HWC_n$ can be implemented with constant depth and with
$O(nB+n^2\log B/B+n\,\mathrm{polylog}\,n)$ additional clean ancillary qubits.
\end{theorem}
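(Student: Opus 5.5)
The plan is to mirror the proof of Theorem~\ref{thm:alltoall-dynamic-one-level}, with the 2D weighted-counting patch of Lemma~\ref{lem:2d-dynamic-weighted-counting} in place of its all-to-all counterpart. The new ingredient is a layout that moves local-weight bits to the global patch in constant depth.

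\textbf{Local stage.} Partition the input into \(L=\lceil n/B\rceil\) blocks of size at most \(B\). For each block, apply Theorem~\ref{thm:2d-dynamic-direct}, that is, Lemma~\ref{lem:2d-dynamic-weighted-counting} with \(M=R=B\) and unit weights. This computes the local weight \(w_\ell\) in constant depth on a patch of area \(O(B^2)\). The block patches are disjoint, so they run in parallel, with total area \(L\cdot O(B^2)=O(nB)\). The input bits of each block must sit on the boundary ports of their patch. The original input qubits can be placed along these ports, or they can be copied there with measurement-based path fan-out (Lemma~\ref{lem:fanout-path}) at linear cost. Both options fit within \(O(nB)\).

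\textbf{Global stage.} Write \(w_\ell=\sum_j 2^j b_{\ell,j}\). This gives \(M=L\lceil\log(B+1)\rceil=O((n/B)\log B)\) weighted bits whose sum \(|x|\) lies in \(0,\ldots,n\). Apply Lemma~\ref{lem:2d-dynamic-weighted-counting} with this \(M\) and \(R=n\). This costs \(O(Mn+n\,\mathrm{polylog}\,n)=O(n^2\log B/B+n\,\mathrm{polylog}\,n)\) ancillas and constant depth. The output register is placed on the designated output ports of this global patch.

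\textbf{Transport, the main obstacle.} The local-weight bits sit on the output ports of the block patches, which may lie far from the input ports of the global patch, so they cannot be moved by SWAP routing in constant depth. The plan is to reserve, for each of the \(M\) bits \(b_{\ell,j}\), a vertex-disjoint corridor from its block's output port to its reserved input port on the boundary of the global patch. Along each corridor we apply a measurement-based path fan-out (Lemma~\ref{lem:fanout-path}), which copies \(b_{\ell,j}\) to the far end in constant depth. Concretely, place the \(L\) block patches in a row along one side of the global patch, and run each corridor as a straight track of length \(O(nB)\) in a strip of width \(O(M)\). This strip has area \(O(M\cdot nB)\), which is too large. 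I would therefore try a more compact geometry instead. The global patch is an \(M\times O(n)\) rectangle whose \(M\) rows are its input rows. We attach the block patches along its long side so that block \(\ell\) sits adjacent to the rows indexed by \((\ell,\cdot)\), reached via a staircase of corridors. Each corridor then has length \(O(\sqrt{B^2}+\text{row offset})\), and the total corridor area is \(O(M\cdot(B+M))\). This is \(O(n\log B+M^2)\), which is absorbed by \(O(nB+n^2\log B/B)\). Vertex-disjointness should follow from a planar monotone layout, since no two corridors cross. After the global sum is copied into the output register, we uncompute the corridor copies by reverse fan-out, or by \(X\)-basis measurement as in Lemma~\ref{lem:2d-dynamic-fanout-path}. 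We then reverse the local computations. Every stage has constant depth, and the total ancilla count is \(O(nB+n^2\log B/B+n\,\mathrm{polylog}\,n)\).
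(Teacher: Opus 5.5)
Your local stage, your global stage, and your use of vertex-disjoint corridors with measurement-based path fan-out all match the paper's proof. The gap is in the routing estimate you actually commit to.

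In the patch of Lemma~\ref{lem:2d-dynamic-weighted-counting}, the $M$ input ports sit at the ends of the $M$ rows, i.e.\ on a side of length $O(M)$. The long side, of length $O(n)$, runs parallel to the rows and meets none of the interior rows. So block $\ell$ cannot be placed ``adjacent to the rows $(\ell,\cdot)$'' along it.

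More fundamentally, the $L$ block patches are disjoint and have total area $\Theta(nB)$. The region within distance $O(B+M)$ of an $O(M)$-long port segment has area only $O((B+M)^2)$. Whenever $B+M=o(\sqrt{nB})$, a constant fraction of the patches must therefore lie at distance $\Omega(\sqrt{nB})$ from the ports. This happens, for instance, for $B=\Theta(\sqrt{n\log n})$, the choice in Corollary~\ref{cor:2d-dynamic-subquadratic}. Hence the per-corridor length $O(B+\text{row offset})$ and the total $O(M(B+M))=O(n\log B+M^2)$ are not achievable, and the transport step as written does not prove the bound.

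The fix is easy, and you had already reached it before dismissing it because of an arithmetic slip. $L$ patches of side $O(B)$ placed in a row span length $O(LB)=O(n)$, not $O(nB)$. So $M$ disjoint tracks of length $O(n+M)$ cost $O(Mn+M^2)=O(n^2\log B/B)$, using $\log B\le B$, which already lies inside the claimed bound.

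The paper instead packs the block patches into a near-square region of diameter $D=O(\sqrt{nB})$. It then bounds routing by $O(M(D+M))=O\bigl(n^{3/2}\log B/\sqrt B+n^2\log^2B/B^2\bigr)=O(n^2\log B/B)$. With either estimate in place of yours, the rest of your argument goes through exactly as in the paper: parallel long-range copies, global weighted counting with $R=n$, and then reversal of the copies and of the local stage.
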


\begin{proof}
Partition the $n$ input bits into $L=\lceil n/B\rceil$ blocks, each of size at
most $B$.  For each block, apply Theorem~\ref{thm:2d-dynamic-direct} to compute
its local Hamming weight.  A block of size $B$ uses $O(B^2)$ ancillary qubits
and constant depth.  Since the blocks are disjoint, all local computations are
performed in parallel, and their total ancillary cost is $L\cdot O(B^2)=O(nB)$.

Let $w_\ell$ be the local Hamming weight of block $\ell$.  The local weight
registers contain $M=O((n/B)\log B)$ bits in total.  Writing
$w_\ell=\sum_j 2^j b_{\ell,j}$, we have
$|x|=\sum_{\ell,j}2^j b_{\ell,j}$.  Thus the bits $b_{\ell,j}$ form a weighted
counting instance with $M=O((n/B)\log B)$ input bits and range $0\le |x|\le n$.

We now describe the layout connecting the local block computations to the
global weighted-counting instance; see
Fig.~\ref{fig:2d-dynamic-blocked-layout}.  Place the $L$ local block patches in
a nearly square region.  Since their total area is $O(nB)$, this region can be
chosen to have diameter $D=O(\sqrt{nB})$.  Separately, place a rectangular
patch for Lemma~\ref{lem:2d-dynamic-weighted-counting}, with $M$ input
registers on one boundary and with range parameter $R=n$.

\begin{figure}[t]
\centering
\begin{tikzpicture}[
    x=1cm,y=1cm,
    every node/.style={font=\small},
    outer/.style={draw,thick},
    block/.style={draw,thick,rounded corners=2pt,fill=gray!10},
    gpatch/.style={draw,thick,rounded corners=2pt,fill=blue!5},
    port/.style={circle,fill=red!75,inner sep=1.1pt},
    pathcopy/.style={blue!70,dashed,thick,->},
    >=Latex
]

\draw[outer] (0,0) rectangle (14.2,7.8);

\node at (7.1,7.35) {$L=\lceil n/B\rceil$ disjoint local block patches};

\draw[block] (1.0,5.6) rectangle (2.8,6.7);
\draw[block] (4.0,5.6) rectangle (5.8,6.7);
\draw[block] (7.0,5.6) rectangle (8.8,6.7);
\draw[block] (10.0,5.6) rectangle (11.8,6.7);

\draw[block] (2.0,3.5) rectangle (3.8,4.6);
\draw[block] (5.2,3.5) rectangle (7.0,4.6);
\draw[block] (8.4,3.5) rectangle (10.2,4.6);

\foreach \x/\y in {1.9/6.15,4.9/6.15,7.9/6.15,10.9/6.15,2.9/4.05,6.1/4.05,9.3/4.05} {
    \node at (\x,\y) {block};
}

\foreach \p in {(2.82,6.42),(2.82,6.15),(2.82,5.88),
                (5.82,6.42),(5.82,6.15),(5.82,5.88),
                (8.82,6.42),(8.82,6.15),(8.82,5.88),
                (11.82,6.42),(11.82,6.15),(11.82,5.88),
                (3.82,4.32),(3.82,4.05),(3.82,3.78),
                (7.02,4.32),(7.02,4.05),(7.02,3.78),
                (10.22,4.32),(10.22,4.05),(10.22,3.78)} {
    \node[port] at \p {};
}

\draw[gpatch] (4.1,0.75) rectangle (10.9,2.70);

\foreach \x in {4.55,5.00,5.45,5.90,6.35,6.80,7.25,7.70,8.15,8.60,9.05,9.50,9.95,10.40} {
    \node[port] at (\x,2.70) {};
}

\node at (7.5,2.30) {global weighted-counting};
\node at (7.5,1.80) {$M=O((n/B)\log B)$ copied inputs};
\node at (7.5,1.30) {$S=\sum_{\ell,j}2^j b_{\ell,j}=|x|$};

\draw[block,fill=green!8] (7.0,0.12) rectangle (8.0,0.52);
\node at (7.5,0.32) {$\ket{|x|}$};
\draw[->,thick] (8.0,0.32) -- (9.0,0.32);
\node[anchor=west] at (9.08,0.32) {final output};

\draw[pathcopy] (2.82,6.15) .. controls (2.9,5.0) and (4.3,3.5) .. (5.00,2.78);
\draw[pathcopy] (5.82,6.15) .. controls (5.9,4.8) and (5.8,3.5) .. (6.10,2.78);
\draw[pathcopy] (8.82,6.15) .. controls (8.8,4.8) and (8.3,3.5) .. (7.80,2.78);
\draw[pathcopy] (11.82,6.15) .. controls (11.5,5.0) and (10.3,3.5) .. (9.80,2.78);

\draw[pathcopy] (3.82,4.05) .. controls (4.0,3.4) and (4.6,3.0) .. (5.60,2.78);
\draw[pathcopy] (7.02,4.05) -- (7.02,2.78);
\draw[pathcopy] (10.22,4.05) .. controls (10.0,3.4) and (9.4,3.0) .. (8.90,2.78);

\node at (7.5,3.15) {$\vdots$};

\end{tikzpicture}
\caption{Schematic layout for Theorem~\ref{thm:2d-dynamic-blocked}.  The
two-dimensional grid contains $L=\lceil n/B\rceil$ disjoint local block
patches and one global weighted-counting patch.  Each local patch computes a
block Hamming weight, and the corresponding bits $b_{\ell,j}$ are copied by
constant-depth dynamic long-range CNOTs along the indicated dashed paths to
input qubits of the global patch.  The global patch then computes
$S=\sum_{\ell,j}2^j b_{\ell,j}=|x|$ and writes the result to the final output
register.}
\label{fig:2d-dynamic-blocked-layout}
\end{figure}

For each local output bit $b_{\ell,j}$, reserve one clean input qubit on the
input boundary of the global weighted-counting rectangle.  Connect
$b_{\ell,j}$ to this input qubit by a path in the routing corridor.  The $M$
paths can be chosen vertex-disjoint by assigning one track to each copied bit.
The total number of routing qubits needed is $O(M(D+M))$: the term $O(MD)$
connects the local region to the corridor, and the term $O(M^2)$ assigns the
$M$ paths to the $M$ ordered input positions on the boundary of the global
rectangle. This routing overhead is absorbed by the stated bound.  Indeed, using
$M=O((n/B)\log B)$ and $D=O(\sqrt{nB})$, we have
\[
    M(D+M)
    =
    O\!\left(
        \frac{n^{3/2}\log B}{\sqrt B}
        +
        \frac{n^2\log^2 B}{B^2}
    \right)
    =
    O\!\left(\frac{n^2\log B}{B}\right),
\]
where we used $B\le n$ and $\log B\le B$.

Along these disjoint paths, implement the constant-depth dynamic long-range
CNOT copies using the measurement-based path fan-out of
Lemma~\ref{lem:fanout-path} together with the leaf-recovery procedure of
Lemma~\ref{lem:2d-dynamic-fanout-path}, thereby copying each local output bit
$b_{\ell,j}$ into its reserved global input qubit.  Since the paths are
disjoint, all these copies are performed in parallel and in constant depth.

Apply Lemma~\ref{lem:2d-dynamic-weighted-counting} to the copied global input
registers with $R=n$.  This writes the global Hamming weight into the final
output register in constant depth and uses
$O(Mn+n\,\mathrm{polylog}\,n)=O(n^2\log B/B+n\,\mathrm{polylog}\,n)$ additional
clean ancillary qubits.

After the global Hamming weight has been written into the final output
register, the global weighted-counting subroutine has restored its temporary
workspace to $\ket{0}$.  We then reverse the long-range CNOT copies along the
routing paths, clearing the copied global input registers.  Finally, reverse
all local block computations in parallel.  This clears the local weight
registers and local workspaces, while leaving the final output register in the
state $\ket{|x|}$.

Adding the local block workspace, the routing workspace, and the global
weighted-counting workspace gives
$O(nB+n^2\log B/B+n\,\mathrm{polylog}\,n)$ ancillary qubits.  The total depth
is constant, because it is a constant number of constant-depth stages.
\end{proof}

\begin{corollary}[Subquadratic-ancilla constant-depth HWC on a 2D grid]
\label{cor:2d-dynamic-subquadratic}
In dynamic 2D model, $\HWC_n$ can be
implemented with constant depth and with
$O(n^{3/2}\sqrt{\log n})$ additional clean ancillary qubits.
\end{corollary}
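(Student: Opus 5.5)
The plan is to apply Theorem~\ref{thm:2d-dynamic-blocked} with the same block size used in the all-to-all case (Corollary~\ref{cor:alltoall-dynamic-subquadratic}), namely \(B=\Theta(\sqrt{n\log n})\), rounded to an integer in \([2,n]\). For large \(n\) this choice is admissible, since \(\sqrt{n\log n}\le n\). Small \(n\) only affects constants.

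Next we bound each of the three terms in the ancilla bound of Theorem~\ref{thm:2d-dynamic-blocked}.
\begin{itemize}
\item The local-block term is \(nB=O(n\sqrt{n\log n})=O(n^{3/2}\sqrt{\log n})\).
\item For the global weighted-counting term, note that \(\log B=\Theta(\log n)\), because \(B\ge\sqrt n\). Hence \(n^2\log B/B=O\bigl(n^2\log n/\sqrt{n\log n}\bigr)=O(n^{3/2}\sqrt{\log n})\).
\item The decoder term \(n\,\mathrm{polylog}\,n\) is \(o(n^{3/2})\) and is therefore absorbed.
\end{itemize}
The depth is constant directly from Theorem~\ref{thm:2d-dynamic-blocked}.

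There is essentially no obstacle here, since the routing overhead was already absorbed into the bound inside the proof of Theorem~\ref{thm:2d-dynamic-blocked}. The only point to check is that the term \(O(M(D+M))\) was bounded there for all \(2\le B\le n\), so it stays within \(O(n^2\log B/B)\) for our choice of \(B\). This confirms that the stated total of \(O(n^{3/2}\sqrt{\log n})\) clean ancillas holds.
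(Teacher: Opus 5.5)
Your proposal is correct and matches the paper's proof: substitute \(B=\Theta(\sqrt{n\log n})\) into Theorem~\ref{thm:2d-dynamic-blocked}, bound \(nB\) and \(n^2\log B/B\) by \(O(n^{3/2}\sqrt{\log n})\), and absorb the \(n\,\mathrm{polylog}\,n\) term. Your extra check that \(\log B=\Theta(\log n)\), and your observation that the routing term \(O(M(D+M))\) was already bounded for every \(2\le B\le n\), are both valid.
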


\begin{proof}
Apply Theorem~\ref{thm:2d-dynamic-blocked} with
$B=\Theta(\sqrt{n\log n})$.  This choice satisfies $2\le B\le n$ for
sufficiently large $n$.  Then
$nB=O(n^{3/2}\sqrt{\log n})$ and
$n^2\log B/B=O(n^{3/2}\sqrt{\log n})$.  The remaining
$n\,\mathrm{polylog}\,n$ term is asymptotically smaller than
$n^{3/2}\sqrt{\log n}$, so it is absorbed into the same bound.  The depth
remains constant by Theorem~\ref{thm:2d-dynamic-blocked}.
\end{proof}

We can also iterate the blocking idea.  As in the all-to-all dynamic
construction, the cleanest implementation is a pyramid: compute all levels
from bottom to top, copy the final output, and then reverse all levels.

\begin{theorem}[Recursive HWC on a 2D grid]
\label{thm:2d-dynamic-recursive}
For every integer $r\ge 1$, $\HWC_n$ can be implemented in the
dynamic 2D model with depth $O(r)$ and with
$O(r\,n^{1+1/r}\,\mathrm{polylog}\,n)$ additional clean ancillary qubits.
\end{theorem}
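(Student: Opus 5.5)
The plan is to transplant the pyramid proof of Theorem~\ref{thm:alltoall-dynamic-recursive} onto the grid. Each all-to-all weighted-counting call will be replaced by the two-dimensional patch of Lemma~\ref{lem:2d-dynamic-weighted-counting}. Each transfer of child weight bits will use the routing-and-copy mechanism already used in Theorem~\ref{thm:2d-dynamic-blocked}. As there, set $B_j=n^{j/r}$ for $j=0,\ldots,r$, ignoring roundings. Level $j$ stores the Hamming weights of the $n^{1-j/r}$ blocks of size $B_j$. Each level-$j$ block has $n^{1/r}$ children, and their binary weights give $M_j=O(n^{1/r}\log n)$ weighted bits with range $R=B_j$. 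Lemma~\ref{lem:2d-dynamic-weighted-counting} computes one level-$j$ weight in constant depth on a patch of area $O(M_jB_j+B_j\,\mathrm{polylog}\,B_j)=O(n^{(j+1)/r}\mathrm{polylog}\,n)$. Its input ports and output register lie on the patch boundary. Summed over the $n^{1-j/r}$ blocks, level $j$ uses patch area $O(n^{1+1/r}\mathrm{polylog}\,n)$.

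The layout will be hierarchical, mirroring the pyramid. A level-$j$ region $\mathcal{R}_j$ contains the $n^{1/r}$ child regions $\mathcal{R}_{j-1}$, the level-$j$ counting patch, and a routing corridor. The corridor connects each child's output register (of $O(\log n)$ qubits) to reserved input ports of the parent patch. I will bound the side length of $\mathcal{R}_j$ inductively by $D_j=O(\sqrt{A_j})$, where $A_j$ is its area. For the corridor I will reuse the count from Theorem~\ref{thm:2d-dynamic-blocked}: $M_j$ vertex-disjoint tracks of length $O(D_j+M_j)$, costing $O(M_j(D_j+M_j))$ routing qubits per block. The per-level area must absorb this, i.e.\ $n^{1-j/r}M_j(D_j+M_j)=O(n^{1+1/r}\mathrm{polylog}\,n)$. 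This requires $D_j=O(B_j\,\mathrm{polylog}\,n)$, and I will deduce it from $A_j=O(rB_j^{1+1/r}\mathrm{polylog}\,n)$. That area bound holds by induction, since a level-$j$ region contains $n^{1/r}$ child regions of area $A_{j-1}$ plus its own patch and corridor. Consequently $\sqrt{A_j}\le B_j\,\mathrm{polylog}\,n$ whenever $1/r\le 1$.

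The forward pass then proceeds level by level. At level $j$, all child weight bits are copied in parallel to parent ports by long-range CNOTs along disjoint paths, using Lemma~\ref{lem:fanout-path} with leaf recovery via Lemma~\ref{lem:2d-dynamic-fanout-path}. All level-$j$ patches are run in parallel, and then the copies are uncopied; each level is therefore constant depth. After level $r$, the top weight is copied into the final output register by CNOTs routed locally, and all levels are reversed from $j=r$ down to $1$. This gives depth $O(r)$, with every intermediate register and ancilla returned to $\ket0$. Summing the $r$ per-level areas together with the $O(n\log n)$ intermediate weight qubits gives $O(rn^{1+1/r}\mathrm{polylog}\,n)$.

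I expect the main obstacle to be the geometric step. The corridors from children to the parent must be vertex-disjoint from each other and from the child regions and patches active at the same time, and the inductive diameter bound must not compound to more than a $\mathrm{polylog}$ factor per level. I will first try a simple packing: the child regions are placed in a square array and the parent patch is placed to one side. One dedicated track per bit then runs through gutters of width $O(M_j)$ between the children, so that the gutters cost only $O(M_j D_j)$, which is already within the bound above. Because distinct levels are active at different times, their corridors may even share space, although reserving them separately costs only the factor $r$ that the stated bound already contains.
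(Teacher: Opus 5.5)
Your plan is essentially the paper's proof. It uses the same $r$-level pyramid with $B_j=n^{j/r}$ and one patch of Lemma~\ref{lem:2d-dynamic-weighted-counting} per block, with $M_j=O(n^{1/r}\log n)$ inputs and range $B_j$. Vertex-disjoint corridors of area $O(M_j(D_j+M_j))$ are paid for by a diameter bound $D_j=O(B_j\,\mathrm{polylog}\,n)$, each level is a constant-depth copy/count/uncopy, and the computation ends with a copy of the top weight and level-by-level uncomputation. The paper clears the levels by reapplying an XOR-output form of each level, which is equivalent to your reversal.

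Two details need repair.

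\emph{The inductive area bound.} Your bound $A_j=O(rB_j^{1+1/r}\,\mathrm{polylog}\,n)$ is false for $j<r$. The level-$j$ patch alone has area $\Omega(n^{1/r}B_j)=\Omega(B_j^{1+1/j})$; for $r=3$, $j=1$ this is $n^{2/3}$ against your $n^{4/9}$. Your own recursion $A_j=n^{1/r}A_{j-1}+O(n^{1/r}B_j\,\mathrm{polylog}\,n)$ gives $A_j=O(j\,n^{1/r}B_j\,\mathrm{polylog}\,n)$. This is the paper's $A_j=CjB_js\Lambda$, and it agrees with your formula only at $j=r$. The conclusion $\sqrt{A_j}\le B_j\,\mathrm{polylog}\,n$ still holds, but the reason is $n^{1/r}\le B_j$ for $j\ge1$, not $1/r\le 1$.

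\emph{The squareness invariant.} You require $\mathcal{R}_j$ to have side length $O(\sqrt{A_j})$. This conflicts with the patch as constructed, which is an $O(M_j)\times O(B_j)$ array. Its long side exceeds $\sqrt{A_j}$ by a factor $n^{(j-1)/(2r)}/\mathrm{polylog}\,n$ once $j\ge2$, so you would have to fold it. You do not need squareness, however. Only $D_j=O(B_j\,\mathrm{polylog}\,n)$ enters the corridor estimate, and the paper's elongated-rectangle invariant gives this directly: width $O(B_j\Lambda^{1/2})$ and height $O(j\,n^{1/r}\Lambda^{1/2})$, with the children side by side in one row and the patch in an adjacent strip.
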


\begin{proof}
Let $s=n^{1/r}$, so that $B_j=s^j$ for $j=0,1,\ldots,r$.  We ignore integer
roundings in the block sizes; replacing them by ceilings changes only constant
factors.  We also fix a sufficiently large constant $c$ and write
$\Lambda=(\log n)^c$.  The power $c$ is chosen large enough to absorb all
polylogarithmic factors from Lemma~\ref{lem:2d-dynamic-weighted-counting} and
from the routing estimates below.

We use the clean XOR-output form of the weighted-counting primitive: given
weighted input bits $z$ and a target register $t$, it maps
$\ket{z}\ket{t}$ to $\ket{z}\ket{t\oplus S}$, where
$S=\sum_i a_i z_i$, while restoring all temporary registers to $\ket{0}$.
This form is obtained from the clean output version by writing the computed
bits of $S$ into the target by CNOTs and then clearing the internal work
registers.  Thus applying the same level computation twice clears its target
register.

For each $j\ge 1$, define
\[
    A_j := C\,j\,B_j s\,\Lambda,
\]
where $C>0$ is a sufficiently large constant.  We use the following slightly
stronger geometric invariant: every level-$j$ block is contained in a
rectangular region of width
\[
    W_j=O\bigl(B_j\Lambda^{1/2}\bigr)
\]
and height
\[
    H_j=O\bigl(js\Lambda^{1/2}\bigr).
\]
In particular, its area is $O(W_jH_j)=O(A_j)$.

Consider one level-$j$ block.  Its $s$ children together provide
\[
    M_j
    =
    s\left\lceil\log_2(B_{j-1}+1)\right\rceil
    =
    O(s\log n).
\]
input bits to the weighted-counting primitive.  By Lemma~\ref{lem:2d-dynamic-weighted-counting},
the new weighted-counting patch has area
\[
    O(B_js\,\Lambda).
\]
Moreover, after increasing the constant $c$ in
$\Lambda=(\log n)^c$ if necessary, the explicit phase-preparation array and
banded decoder layout used in that lemma fit in a rectangle of width
\[
    O\bigl(B_j\Lambda^{1/2}\bigr)
\]
and height
\[
    O\bigl(s\Lambda^{1/2}\bigr).
\]
We orient this rectangle so that its input ports lie on the side facing the
child regions.

For $j\ge 2$, place the $s$ level-$(j-1)$ child regions side by side.  By the
induction hypothesis, their combined width is
\[
    sW_{j-1}
    =
    O\bigl(sB_{j-1}\Lambda^{1/2}\bigr)
    =
    O\bigl(B_j\Lambda^{1/2}\bigr),
\]
while their height is
\[
    O\bigl((j-1)s\Lambda^{1/2}\bigr).
\]
For $j=1$, the $s$ input sites clearly satisfy the same bound.  Place the new
weighted-counting patch in an adjacent horizontal strip.  The resulting
pre-routing rectangle has width
\[
    O\bigl(B_j\Lambda^{1/2}\bigr)
\]
and height
\[
    O\bigl(js\Lambda^{1/2}\bigr),
\]
and hence diameter
\[
    D_j
    =
    O\bigl(B_j\Lambda^{1/2}+js\Lambda^{1/2}\bigr).
\]

For each of the $M_j$ child output bits, reserve one clean input qubit on the
input boundary of the weighted-counting patch.  Route the child outputs to
these reserved inputs through a corridor between the child regions and the
patch.  Using one track per copied bit, the paths can be chosen
vertex-disjoint with total routing area
\[
\begin{aligned}
    O\bigl(M_j(D_j+M_j)\bigr)
    &=
    O\Bigl(
        s\log n\,
        \bigl(
            B_j\Lambda^{1/2}
            +js\Lambda^{1/2}
            +s\log n
        \bigr)
      \Bigr)                                                     \\
    &=
    O\bigl(B_js\,\Lambda\bigr).
\end{aligned}
\]
Here we used $B_j\ge s$, $j\le r$, and the fact that $r$ is independent of
$n$; all remaining logarithmic factors are absorbed by choosing $c$
sufficiently large.  The routing corridor can therefore be accommodated by
adding height $O(s\Lambda^{1/2})$ to the rectangle, which preserves the bounds
$W_j=O(B_j\Lambda^{1/2})$ and $H_j=O(js\Lambda^{1/2})$.

Consequently, the child regions, the weighted-counting patch, and the routing
corridor together occupy area
\[
    O\bigl((j-1)B_js\,\Lambda\bigr)
    +O\bigl(B_js\,\Lambda\bigr)
    =
    O\bigl(jB_js\,\Lambda\bigr)
    =
    O(A_j).
\]

Now run the computation level by level.  At level $j$, all level-$j$ blocks
are disjoint, so they are processed in parallel.  Within each block, first use
the constant-depth dynamic long-range CNOT primitive along the disjoint routing
paths to copy the child output bits into the input registers of the
weighted-counting patch.  Then run the weighted-counting patch to XOR the
level-$j$ Hamming weight into its level-$j$ output register.  Finally undo the
long-range CNOT copies, clearing the copied input registers.  This whole
level has constant depth and leaves only the level-$j$ weight registers in
addition to the lower-level registers.

Performing the forward pass for $j=1,2,\ldots,r$ therefore takes depth $O(r)$.
The top level contains the global Hamming weight.  Copy this value to the
final output register using the same constant-depth dynamic long-range CNOT
primitive; the extra $O(\log n)$ auxiliary qubits are absorbed into the bound.

We then clear the intermediate level registers from top to bottom.  For
$j=r,r-1,\ldots,1$, apply the same clean XOR-output level-$j$ computation
again, using the level-$(j-1)$ registers as inputs and the stored level-$j$
weight registers as targets.  Since each target currently contains exactly the
corresponding Hamming weight, the XOR action maps it back to $\ket{0}$.  The
temporary copied inputs, routing workspaces, and weighted-counting workspaces
are cleared within the same constant-depth level computation.  The copied
final output register is not touched.  The clearing pass also has depth
$O(r)$.

It remains to count ancillary qubits.  The top-level recursive layout has area
\[
    A_r
    =
    O(rB_rs\,\Lambda)
    =
    O(rn^{1+1/r}\,\mathrm{polylog}\,n).
\]
This area includes the regions for all lower-level blocks, all routing
corridors, all weighted-counting workspaces, and all intermediate weight
registers.  Excluding the original input register and the final output
register can only decrease the count.  Hence the number of additional clean
ancillary qubits is
$O(rn^{1+1/r}\,\mathrm{polylog}\,n)$, and the total depth is $O(r)$.
\end{proof}

\begin{corollary}
\label{cor:2d-dynamic-near-linear}
For every fixed $\varepsilon>0$, $\HWC_n$ can be implemented in the
dynamic 2D model with depth $O_\varepsilon(1)$
and with $O(n^{1+\varepsilon}\,\mathrm{polylog}\,n)$ additional clean ancillary
qubits.
\end{corollary}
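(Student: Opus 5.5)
This follows from Theorem~\ref{thm:2d-dynamic-recursive} by choosing the recursion depth as a function of $\varepsilon$, exactly as in the all-to-all Corollary~\ref{cor:alltoall-dynamic-near-linear}. Set $r=\max\{1,\lceil 1/\varepsilon\rceil\}$. This is an integer $r\ge 1$ that depends only on $\varepsilon$, not on $n$, so it is a legitimate fixed parameter for Theorem~\ref{thm:2d-dynamic-recursive}. In particular, the standing assumption in that proof that $r$ is independent of $n$ holds; it was used to absorb $j\le r$ factors in the routing estimates. Moreover, $1/r\le\varepsilon$.

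Theorem~\ref{thm:2d-dynamic-recursive} then gives a dynamic 2D circuit for $\HWC_n$ of depth $O(r)=O(1/\varepsilon+1)=O_\varepsilon(1)$. The number of clean ancillas is
\[
O\bigl(r\,n^{1+1/r}\,\mathrm{polylog}\,n\bigr)\le O\bigl(r\,n^{1+\varepsilon}\,\mathrm{polylog}\,n\bigr),
\]
where the inequality uses $n^{1/r}\le n^{\varepsilon}$ for $n\ge1$. Since $\varepsilon$ is fixed, the prefactor $r$ is a constant and is absorbed into the $O(\cdot)$, giving $O(n^{1+\varepsilon}\,\mathrm{polylog}\,n)$.

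The one point that needs care is the polylogarithmic factor. In the proof of Theorem~\ref{thm:2d-dynamic-recursive}, the exponent $c$ in $\Lambda=(\log n)^c$ is chosen depending on the weighted-counting and routing estimates, and possibly on $r$. Here $r$ is a function of $\varepsilon$ alone, so $c$ is also a constant for fixed $\varepsilon$. The resulting factor is therefore still $\mathrm{polylog}\,n$, with the exponent allowed to depend on $\varepsilon$, which is consistent with the $O_\varepsilon$ convention.
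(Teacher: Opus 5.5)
Your proposal is correct and follows the paper's own argument: set $r=\max\{1,\lceil 1/\varepsilon\rceil\}$, apply Theorem~\ref{thm:2d-dynamic-recursive}, and absorb the constant $r$ using $n^{1/r}\le n^{\varepsilon}$. Your remark that any $r$-dependence of the polylogarithmic exponent is harmless for fixed $\varepsilon$ is sound, and a bit more careful than the paper.
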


\begin{proof}
Choose $r=\max\{1,\lceil 1/\varepsilon\rceil\}$.  Then $r$ is a constant
depending only on $\varepsilon$, and $1/r\le \varepsilon$.  By
Theorem~\ref{thm:2d-dynamic-recursive}, the depth is
$O(r)=O_\varepsilon(1)$, and the ancillary cost is
$O(r\,n^{1+1/r}\,\mathrm{polylog}\,n)$.  Since $n^{1+1/r}\le n^{1+\varepsilon}$
and $r$ is absorbed into the constant for fixed $\varepsilon$, the ancillary
cost is $O(n^{1+\varepsilon}\,\mathrm{polylog}\,n)$.
\end{proof}

\section{Circuit Depth and Ancilla Count for Symmetric Boolean Functions}

\label{sec:symmetric-functions}

\begin{definition}[Symmetric Boolean function]
\label{def:symmetric-boolean}
A Boolean function \(f:\{0,1\}^n\to\{0,1\}\) is called
\emph{symmetric} if
\[
    f(x_1,\ldots,x_n)
    =
    f(x_{\pi(1)},\ldots,x_{\pi(n)})
\]
for every \(x=(x_1,\ldots,x_n)\in\{0,1\}^n\) and every permutation
\(\pi\in S_n\).  Equivalently, there exists a function
\(g:\{0,1,\ldots,n\}\to\{0,1\}\) such that \(f(x)=g(|x|)\).
\end{definition}

The natural implementation is to compute the Hamming weight into a temporary register,
evaluate \(g\) on that register, and then uncompute the Hamming weight.

We first record the Boolean-function primitives used for the post-processing
step.

\begin{lemma}
\label{lem:boolean-standard-alltoall}
Let \(g:\{0,1\}^k\to\{0,1\}\) be an arbitrary Boolean function.  In the
standard all-to-all model, the map
\[
    \ket{y}\ket{b}
    \longmapsto
        \ket{y}\ket{b\oplus g(y)},\quad \forall y\in\{0,1\}^k,~\forall b\in\{0,1\},
\]
can be implemented with depth \(O(k)\), using \(O(2^k/k)\) clean ancillary
qubits and \(O(2^k)\) borrowed qubits.
\end{lemma}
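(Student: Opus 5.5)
We will prove the lemma with a Lupanov-style table-lookup construction. It is organized so that the large intermediate objects sit on borrowed qubits and only a $2^k/k$-sized family of indicator bits needs clean storage.

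\emph{Step 1: split the input and decompose $g$.} Write $y=(u,v)$ with $u\in\{0,1\}^{k-s}$ and $v\in\{0,1\}^{s}$, where $s=\lfloor\log k\rfloor$ or a similar choice to be tuned. Then
\[
    g(u,v)=\bigoplus_{v'\in\{0,1\}^s}[v=v']\,g_{v'}(u),
    \qquad g_{v'}(u):=g(u,v').
\]
Equivalently, one can group the $2^{k-s}$ columns of the truth table into $O(2^{k-s}/\ell)$ groups of size $\ell\approx k$, so that the number of distinct "row patterns" is small.

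\emph{Step 2: compute indicators and the target bit.}
\begin{enumerate}
    \item Compute the one-hot (indicator) encoding of $u$ into $2^{k-s}$ qubits. With a clean ancilla budget of only $2^k/k$, this must be done on borrowed qubits.
    \item Use the standard dirty-ancilla trick of Barenco-type toggling: each conditional XOR is applied twice with intermediate operations on the dirty register, so the unknown initial contents cancel.
    \item For each $v'$, the value $g_{v'}(u)$ is the parity of those indicator bits $[u=u']$ with $g(u',v')=1$. Copy-and-parity it into one of $2^s$ clean qubits.
    \item Compute $[v=v']$ on $s$ bits, apply $2^s$ Toffolis into $b$, and uncompute.
\end{enumerate}

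\emph{Step 3: resource accounting.}
\begin{itemize}
    \item \emph{Depth.} A one-hot decoder on $k$ bits has depth $O(k)$ by a binary-tree expansion in which each level doubles the active register; parities over sets of size up to $2^k$ take depth $O(k)$ by a CNOT tree.
    \item \emph{Qubits.} Clean qubits: choosing $s$ with $2^s\approx 2^k/k$, i.e.\ $s=k-\log k$, gives $u$ of only $\log k$ bits. The clean cost is then $2^s=O(2^k/k)$, plus a one-hot decoder of size $k$. The $2^k$ borrowed qubits serve as temporary fan-out copies of $u$-indicators, needed so that the $2^s$ parities can be computed in parallel within depth $O(k)$.
\end{itemize}

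The parallel scheduling, i.e.\ fanning out each of the $k$ indicators to $\le 2^s$ targets, uses those borrowed copies with compute/uncompute pairs.

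\emph{Main obstacle.} The main difficulty is simultaneously achieving depth $O(k)$ and clean count $O(2^k/k)$, since fan-out to $2^s$ parity targets naively needs $2^k$ clean copies. I will handle it with the dirty-ancilla toggling identity: a conditional XOR onto an unknown register $w$, performed twice around the target operation, leaves $w$ unchanged and transfers only the intended parity. Tree-shaped dirty fan-out as in Lemma~\ref{lem:dirty-fanout-tree} then gives depth $O(k)$ on borrowed qubits. If this accounting fails, I will fall back on citing the known construction of \cite{ZiNieSun2025SymmetricFunctions} for Boolean functions on $k$ bits with these exact resources.
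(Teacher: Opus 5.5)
With your final choice $s=k-\log k$, this is the paper's construction with the two address parts renamed (your $u$ plays the role of the paper's $v$): clean one-hot flags for both parts, the $2^k/k$ values $g_{v'}(u)$ computed into clean qubits as parities of the $k$ indicators $[u=u']$ via dirty fan-out onto at most $2^k$ borrowed qubits with a CNOT into the clean target before and after the fan-out, then a Toffoli layer against the flags $[v=v']$ and a CNOT-tree parity into $b$. Drop the Step~2.1 claim that the $u$ one-hot register must be borrowed (it has only $k$ entries and is clean). As the paper does, write the $2^s$ Toffoli products into separate clean qubits before the parity tree, and give the decoder clean CNOT-tree copies of the input bits; otherwise $2^s$ gates sharing the target $b$, or sharing a control bit, would serialize and break the $O(k)$ depth bound.
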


\begin{proof}
Let \(N=2^k\), and assume \(k\) is sufficiently large.  Set
\(\ell=\lceil\log k\rceil\), write \(y=(u,v)\) with
\(u\in\{0,1\}^r\), \(v\in\{0,1\}^{\ell}\), and \(r=k-\ell\).  Let
\(P=2^r\).  Then \(P=O(2^k/k)\) and \(2^\ell=O(k)\).  We view \(u\) as a
block address and \(v\) as an address inside a block.  For each
\(a\in\{0,1\}^r\), define \(h_a(v)=g(a,v)\).

We first compute the high-address one-hot flags \(E_a=[u=a]\) for all
\(a\in\{0,1\}^r\).  This is done by a reversible binary decoder.  For clarity,
we describe one layer.  Suppose that after \(j\) layers we have clean prefix
flags \(F_p=[u_1\cdots u_j=p]\) for all \(p\in\{0,1\}^j\).  To extend the
prefix by the next bit \(u_{j+1}\), we create two child flags
\(F_{p0}\) and \(F_{p1}\), initialized to \(\ket0\), by applying
\[
    F_{p1}\leftarrow F_{p1}\oplus F_p u_{j+1},
    \qquad
    F_{p0}\leftarrow F_{p0}\oplus F_p(1-u_{j+1}).
\]
The second operation is implemented by conjugating the \(u_{j+1}\) control by
\(X\) gates.  Since the same input bit \(u_{j+1}\) is needed for all
\(2^j\) prefixes at level \(j\), we first make \(2^j\) clean CNOT copies of
\(u_{j+1}\), one for each prefix.  Then all prefix-splitting operations at
this level act on disjoint registers and can be performed in constant depth.
The copies of all bits of \(u\) can be prepared by CNOT trees using
\(O(P)\) clean qubits and depth \(O(\log P)=O(k)\).  The decoder has
\(r=O(k)\) levels and uses a total of \(O(P)\) clean prefix flags.  Therefore
the high one-hot encoding has depth \(O(k)\) and clean workspace \(O(P)\).

Similarly, compute the low-address one-hot flags \(Q_\beta=[v=\beta]\) for all
\(\beta\in\{0,1\}^{\ell}\) by the same decoder construction.  Since
\(2^\ell=O(k)\), this costs depth \(O(k)\) and \(O(k)\) clean qubits.

Next we compute, for every high block \(a\), the local value \(Z_a=h_a(v)\)
into a clean qubit \(Z_a\).  Since the \(Q_\beta\)'s form a one-hot encoding
of \(v\), we have \(Z_a=\bigoplus_{\beta:\,h_a(\beta)=1} Q_\beta\).  Thus
\(O(P)\) clean qubits suffice for the registers \(Z_a\).  For each selected
pair \((a,\beta)\) with \(h_a(\beta)=1\), introduce one borrowed qubit
\(d_{a,\beta}\).  There are at most \(P2^\ell=2^k\) such borrowed qubits.

We now explain how a selected pair \((a,\beta)\) contributes \(Q_\beta\) to
\(Z_a\) without using a clean copy of \(Q_\beta\).  First apply the 
\(\CNOT^{d_{a,\beta}}_{Z_a}\).  Then perform dirty fan-out from \(Q_\beta\) to all
borrowed qubits \(d_{a,\beta}\) with this fixed value of \(\beta\), so that
\(d_{a,\beta}\mapsto d_{a,\beta}\oplus Q_\beta\).  By
Lemma~\ref{lem:dirty-fanout-tree}, this dirty fan-out can be implemented on a
balanced binary tree of height \(O(\log P)\), and hence has depth
\(O(\log P)=O(k)\) in the all-to-all model.  The fan-outs for different values
of \(\beta\) act on disjoint borrowed target sets and can be done in parallel.
After the fan-out, apply the same CNOTs \(d_{a,\beta}\to Z_a\) again, and then
undo the dirty fan-outs.

For each selected pair, the two CNOTs into \(Z_a\) contribute
\[
    d_{a,\beta}\oplus(d_{a,\beta}\oplus Q_\beta)=Q_\beta .
\]
Thus the unknown borrowed value cancels, and the borrowed qubit is restored
after the inverse dirty fan-out.  The CNOT gates \(\CNOT^{d_{a,\beta}}_{Z_a}\) are
scheduled in \(O(2^\ell)=O(k)\) layers according to the value of \(\beta\),
since for a fixed \(\beta\) the targets \(Z_a\) are distinct.  Therefore all
local values \(Z_a=h_a(v)\) are computed in depth \(O(k)\).

Finally, compute \(T_a=E_a Z_a\) into clean qubits \(T_a\) for all
\(a\in\{0,1\}^r\).  These Toffoli gates are disjoint and have constant depth.
Since exactly one high-address flag \(E_a\) equals \(1\), we have
\[
    \bigoplus_a T_a = g(u,v)=g(y).
\]
XOR this parity into the target qubit \(b\) using a CNOT tree of depth
\(O(\log P)=O(k)\), and then reverse the tree so that all \(T_a\) registers
are restored.  Then uncompute the \(T_a\) registers, the \(Z_a\) registers,
the low flags \(Q_\beta\), the high flags \(E_a\), and all temporary input
copies.

All clean ancillary qubits return to \(\ket0\), and all borrowed qubits return
to their initial states.  The clean workspace is
\(O(P)+O(2^\ell)+O(P)+O(P)=O(P)=O(2^k/k)\), and the borrowed workspace is at
most \(P2^\ell=2^k\).  Each stage has depth \(O(k)\), and the uncomputation
only changes the depth by a constant factor.  This proves the claim.
\end{proof}

\begin{lemma}
\label{lem:boolean-dynamic-alltoall}
Let \(g:\{0,1\}^k\to\{0,1\}\) be an arbitrary Boolean function.  In the dynamic
all-to-all model, the map
\[
    \ket{y}\ket{b}
    \longmapsto
    \ket{y}\ket{b\oplus g(y)},\quad \forall y\in\{0,1\}^k,~\forall b\in\{0,1\},
\]
can be implemented with constant depth and with
\(O(2^k\,\mathrm{polylog}\,2^k)\) clean ancillary qubits.
\end{lemma}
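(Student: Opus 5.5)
The plan is to mirror the one-hot-decoder structure of Lemma~\ref{lem:boolean-standard-alltoall}, with every logarithmic-depth CNOT tree replaced by constant-depth measurement-based fan-out or parity. The dirty-ancilla trick is dropped, since clean qubits are now cheap. Let $N=2^k$.

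\textbf{Step 1: full one-hot flags.} We first compute the flags $E_a=[y=a]$ for all $a\in\{0,1\}^k$ directly.
\begin{itemize}
\item Using Lemma~\ref{lem:fanout-path}, fan out each input bit $y_i$ to $N$ clean copies. This takes constant depth and $O(kN)$ ancillas.
\item For each $a$, apply $X$ gates to the copies at the positions where $a_i=0$.
\item Compute the $k$-input AND into a clean flag qubit $E_a$.
\end{itemize}
For the AND we use the constant-depth exact $\mathrm{OR}_s$ construction of Takahashi--Tani combined with De Morgan. Its all-to-all version needs only fan-out, so it is no harder than Lemma~\ref{lem:2d-dynamic-multi-and}. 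With $s=k$ it costs $O(k\log k)$ ancillas per flag. The total for this step is $O(Nk\log k)=O(2^k\,\mathrm{polylog}\,2^k)$.

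\textbf{Step 2: the parity.} We then use
\[
g(y)=\bigoplus_{a:\,g(a)=1}E_a,
\]
which holds because exactly one flag equals $1$. This parity of at most $N$ qubits is XORed into $b$ in constant depth via the fan-out--parity equivalence~\cite{HoyerSpalek2005Fanout}. Concretely, we conjugate a fan-out from $b$ onto the selected flags by Hadamards on all participating qubits. The fan-out is realized by Lemma~\ref{lem:fanout-path} with $O(N)$ ancillas.

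\textbf{Step 3: cleanup.} Finally we uncompute the flags $E_a$ by reversing the AND circuits, then reverse the input fan-outs. The fan-out leaves are released with the $X$-basis recovery of Lemma~\ref{lem:2d-dynamic-fanout-path}, which restores every clean ancilla to $\ket0$. The circuit consists of a constant number of constant-depth stages, and the ancilla count is dominated by $O(Nk\log k)$.

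\textbf{Main obstacle.} The delicate point is exactness and cleanliness of the constant-depth multi-input AND in the dynamic model. The Høyer--Špalek reduction to $\mathrm{OR}_{\log k}$ followed by the Fourier expansion must be exact, not approximate, and all intermediate registers must be returned to $\ket0$ so that Step~3 is valid. I would handle this by citing the exact Takahashi--Tani $\mathrm{OR}$ construction, as already done in Lemma~\ref{lem:2d-dynamic-multi-and}, and using compute--copy--uncompute on each flag.

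Counting measurement-based fan-outs toward the ancilla budget gives only constant-factor overhead, since each fan-out of size $r$ uses $O(r)$ ancillas.
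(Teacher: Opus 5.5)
Your proposal is correct and is essentially the paper's proof: fan out each \(y_i\) into \(2^k\) private copies, compute all equality flags \([y=t]\) in parallel with the constant-depth Takahashi--Tani AND, combine them into \(b\), and uncompute, for \(O(2^k\,\mathrm{poly}\,k)\) ancillas. The only deviation is the combining step. The paper computes \(\bigvee_{t\in G}e_t\) with the constant-depth OR primitive, copies it into \(b\), and reverses it, whereas you XOR the parity of the selected one-hot flags into \(b\) via Hadamard-conjugated fan-out; this is equally valid because exactly one flag is \(1\), it is slightly cheaper, and it is exactly what the paper itself does in the 2D analogue, Lemma~\ref{lem:boolean-dynamic-2d}.
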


\begin{proof}
We use exact constant-depth OR and AND primitives in the unbounded-fan-out
model~\cite{TakahashiTani2016Collapse}.  Each fan-out gate is implemented in
constant depth by measurement-based fan-out in the dynamic circuit model.

For each \(t\in\{0,1\}^k\), we compute an equality flag
\(e_t=[y=t]\).  To make all these computations parallel, first fan out each
input bit \(y_i\) to one private copy for each string \(t\in\{0,1\}^k\).
This uses \(O(k2^k)\) clean ancillary qubits and constant depth.  For the
branch corresponding to \(t\), we use the private copies of the bits \(y_i\).
If \(t_i=0\), we flip the corresponding copy before and after the AND
computation; equivalently, we use the literal \(1-y_i\).  Thus \(e_t\) is the
AND of \(k\) literals.

Using the constant-depth AND primitive, all \(2^k\) equality flags can now be
computed in parallel, since different branches use disjoint registers.  Each
branch uses \(\mathrm{poly}(k)\) clean ancillary qubits, so the total ancillary
cost for the one-hot encoding is
$ 
    O(2^k\mathrm{poly}\,k).
$

Let \(G=\{t\in\{0,1\}^k:g(t)=1\}\).  For every basis input \(y\), exactly one
equality flag \(e_t\) is equal to \(1\).  Therefore
\[
    g(y)=\bigvee_{t\in G} e_t .
\]
We compute this OR into one clean qubit using the constant-depth OR primitive,
CNOT this qubit into the target qubit \(b\), and then reverse the OR
computation.  Finally, we uncompute all equality flags and undo the initial
fan-out copies of the input bits.  This restores all ancillary qubits to
\(\ket0\) and implements
\[
    \ket{y}\ket{b}
    \longmapsto
    \ket{y}\ket{b\oplus g(y)}, \forall y\in\{0,1\}^k,~\forall b\in\{0,1\}
\]

The depth is constant.  The ancillary cost is
\[
    O(k2^k)+O(2^k\mathrm{poly}\,k)+O(2^k\mathrm{polylog}\,2^k)
    =
    O(2^k\,\mathrm{polylog}\,2^k),
\]
because \(k=\log 2^k\).
\end{proof}

The next two lemmas are used only for the measurement-free two-dimensional
model.  They allow us to evaluate an arbitrary Boolean function on \(k\)
input bits in depth \(O(\sqrt{2^k})\), using borrowed rather than clean
workspace.

We use the standard mesh-routing fact that any permutation of \(S\) qubits on
an \(O(\sqrt S)\times O(\sqrt S)\) square-grid patch can be implemented by
nearest-neighbor SWAP gates with depth \(O(\sqrt S)\)
\cite{LeightonMakedonTollis1995Routing}.  The constant-size local buffers in
mesh-routing algorithms are absorbed into the \(O(S)\) borrowed workspace used
below, and no clean ancillary qubits are required.

We also use the \(\mathsf H\)-dual of the dirty fan-out circuit from
Corollary~\ref{cor:dirty-fanout-2d}.  By conjugating that circuit with
Hadamard gates on all participating qubits, one obtains a parity fan-in circuit
with the same depth and the same workspace; equivalently, this is the standard
duality between fan-out and parity~\cite{HoyerSpalek2005Fanout}.  More
precisely, suppose that the qubits \(r_1,\ldots,r_R\) and a target qubit \(b\)
are placed on the vertices of a connected tree of height \(h\), rooted at
\(b\), and that the vertex set of the tree consists exactly of these qubits.
Then the map
\[
    \ket{r_1,\ldots,r_R}\ket{b}
    \longmapsto
    \ket{r_1,\ldots,r_R}
    \ket{b\oplus r_1\oplus\cdots\oplus r_R}
\]
can be implemented with depth \(O(h)\), using no clean ancillary qubits and
restoring all qubits \(r_i\).  In particular, after two-dimensional SWAP
routing, such a parity fan-in on \(R\) selected qubits in a square-grid patch
has depth \(O(\sqrt R)\).

\begin{lemma}[Shared-control Toffoli batches on a two-dimensional grid]
\label{lem:2d-shared-control-toffoli}
Consider \(S\) Toffoli gates of the form
\[
    b_i \longmapsto b_i\oplus c a_i,
    \qquad i=1,\ldots,S,
\]
where the qubits \(a_i,b_i\) are pairwise distinct and are all distinct from
the common control qubit \(c\).  Suppose these qubits, together with
\(O(S)\) borrowed qubits, are placed in a connected two-dimensional
square-grid patch of area \(O(S)\).  Then the whole batch can be implemented
with depth \(O(\sqrt S)\), using \(O(S)\) borrowed qubits and no clean
ancillary qubits.
\end{lemma}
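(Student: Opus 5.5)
We implement the batch with the same borrowed-qubit cancellation used in Lemma~\ref{lem:boolean-standard-alltoall}, replacing the binary-tree dirty fan-out by the two-dimensional dirty fan-out of Corollary~\ref{cor:dirty-fanout-2d}.

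\textbf{Setup.} Assign to each gate \(i\) one borrowed qubit \(d_i\), with unknown initial value \(\delta_i\). Let \(\Gamma\) be the dirty fan-out from \(c\) to all of \(d_1,\ldots,d_S\). We first perform a one-time SWAP routing, with mesh-routing depth \(O(\sqrt S)\) \cite{LeightonMakedonTollis1995Routing}, so that:
\begin{itemize}
\item each triple \((d_i,a_i,b_i)\) is placed in a constant-size local cell;
\item the \(d_i\)'s, together with \(c\) and possibly some other borrowed qubits, lie on a connected region of area \(O(S)\) containing a BFS tree rooted at \(c\) of height \(O(\sqrt S)\).
\end{itemize}
Any non-target borrowed qubits that \(\Gamma\) has to pass through are handled by the two-sweep circuit of Lemma~\ref{lem:dirty-fanout-tree}. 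That circuit XORs the root value into every vertex of the tree. We simply take all tree vertices except \(c\) as targets; the extra borrowed ones are also toggled, and are later restored by \(\Gamma^{-1}=\Gamma\).

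\textbf{Circuit.} Apply the following sequence:
\[
\Tof(d_i,a_i;b_i)\ \text{for all } i,\quad \Gamma,\quad \Tof(d_i,a_i;b_i)\ \text{for all } i,\quad \Gamma .
\]
Each Toffoli layer is a single parallel layer of local three-qubit gates within disjoint cells. On a basis state, \(b_i\) receives
\[
\delta_i a_i \oplus (\delta_i\oplus c)\,a_i = c\,a_i ,
\]
because \(\Gamma\) does not change \(a_i\) or \(c\). The second \(\Gamma\) returns every \(d_i\), and every other borrowed tree vertex, to its initial value. Finally, we undo the routing.

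\textbf{Resources.} The total depth is the sum of two routings, two applications of \(\Gamma\), and two constant-depth Toffoli layers. By Corollary~\ref{cor:dirty-fanout-2d} each \(\Gamma\) has depth \(O(\sqrt S)\), so the total depth is \(O(\sqrt S)\). The workspace is \(O(S)\) borrowed qubits, and no clean ancilla is used. Toffolis are three-qubit gates; inside a constant-size cell each one decomposes into \(O(1)\) nearest-neighbor two-qubit gates, at the cost of local constant-size swaps.

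\textbf{Main obstacle.} The delicate step is the geometry: the fan-out tree must reach every \(d_i\) while each \(d_i\) also stays adjacent to its \(a_i,b_i\).

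\emph{First approach.} Make the \(d_i\)'s themselves the vertex set of a spanning subgraph. Lay out cells in a square arrangement where each cell is \(d_i\) stacked over \(a_i,b_i\). Let the \(d\)-row of each cell connect to its horizontal neighbours, and add one vertical "spine" column of borrowed qubits linking the rows. The tree is then a comb: spine plus rows, with height \(O(\sqrt S)\).

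\emph{Fallback.} Include the intermediate borrowed or data qubits as tree vertices and XOR \(c\) into them too. This is harmless for borrowed qubits, since the second \(\Gamma\) undoes it. For \(a_i\) it would break the phase computation, so \(a_i,b_i\) must be kept off the tree, which the comb layout ensures.
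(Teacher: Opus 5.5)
Your proposal is correct and follows essentially the same route as the paper: SWAP-route each triple $(d_i,a_i,b_i)$ into a constant-size cell, apply the local Toffoli layer, dirty fan-out from $c$ to the $d_i$'s, apply the Toffoli layer again, undo the fan-out, and reverse the routing. This yields $\delta_i a_i\oplus(\delta_i\oplus c)a_i=c\,a_i$ on each $b_i$. Your explicit comb layout, which keeps $a_i,b_i$ off the fan-out tree, makes concrete what the paper only asserts when it says the non-target tree vertices are borrowed routing qubits.
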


\begin{proof}
First use two-dimensional SWAP routing to arrange the qubits into \(S\)
constant-size cells inside the patch, so that the \(i\)-th cell contains
\(a_i\), \(b_i\), and a borrowed qubit \(d_i\).  Place the common control
\(c\) at a fixed root position of the patch.  This routing has depth
\(O(\sqrt S)\), and it will be reversed at the end.

We use the borrowed qubits \(d_i\) to remove the shared-control conflict.
First apply
\[
    \Tof(d_i,a_i;b_i),
    \qquad i=1,\ldots,S,
\]
in parallel.  Each Toffoli gate acts inside a constant-size cell and is
implemented by a constant-size nearest-neighbor circuit over one- and
two-qubit gates.  Thus this layer contributes only constant depth.

Next perform a dirty fan-out from the root \(c\) to the borrowed qubits
\(d_i\).  Choose a bounded-degree tree inside the patch, rooted at \(c\), whose
targets include all \(d_i\)'s.  The other vertices of the tree, if any, are
borrowed routing qubits.  The data qubits \(a_i\) and \(b_i\) are not used as
fan-out targets.  By Corollary~\ref{cor:dirty-fanout-2d}, this dirty fan-out
has depth \(O(\sqrt S)\), uses no clean ancillary qubits, and maps each
\(d_i\) to \(d_i\oplus c\).

Now apply the same local Toffoli layer
\[
    \Tof (d_i,a_i;b_i),
    \qquad i=1,\ldots,S,
\]
again, and finally undo the dirty fan-out.  For each \(i\), the two Toffoli
layers change \(b_i\) by
\[
    a_i d_i \oplus a_i(d_i\oplus c)
    =
    a_i c .
\]
Thus the desired update \(b_i\mapsto b_i\oplus c a_i\) is implemented.  Since
the dirty fan-out is undone, every borrowed qubit used in the fan-out tree is
restored to its initial state.  Finally, reverse the initial SWAP routing.  The
total depth is \(O(\sqrt S)\), and no clean ancillary qubits are used.
\end{proof}

\begin{lemma}[Boolean evaluation with borrowed qubits on a two-dimensional grid]
\label{lem:boolean-standard-2d-borrowed}
Let \(g:\{0,1\}^k\to\{0,1\}\) be an arbitrary Boolean function.  In the
standard 2D model, the map
\[
    \ket{y}\ket{b}
    \longmapsto
    \ket{y}\ket{b\oplus g(y)},\quad \forall y\in\{0,1\}^k,~\forall b\in\{0,1\}
\]
can be implemented with depth \(O(\sqrt{2^k})\), using \(O(2^k)\) borrowed
qubits and no asymptotic clean ancillary workspace.
\end{lemma}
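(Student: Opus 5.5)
I would mirror the proof of Lemma~\ref{lem:boolean-standard-alltoall}, replacing each logarithmic-depth tree by its two-dimensional counterpart of depth $O(\sqrt{2^k})$. The role of the borrowed qubits is to absorb all the places where that lemma used clean workspace. Set $N=2^k$ and lay out an $O(\sqrt N)\times O(\sqrt N)$ patch containing $y$, $b$, and $O(N)$ borrowed qubits. The target is the identity
\[
    g(y)=\bigoplus_{t\in G}[y=t],\qquad G=g^{-1}(1).
\]
Because the flags $[y=t]$ are one-hot, this parity can be XORed into $b$ by the parity fan-in described before Lemma~\ref{lem:2d-shared-control-toffoli}, provided the flags can be produced. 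Those flags cannot be placed in clean registers, since we have no clean workspace. I would therefore aim to produce them only ``in toggled form'' on borrowed qubits, using the standard dirty-ancilla trick: apply an operation, toggle, reapply, untoggle, so that the unknown initial contents cancel, exactly as in Lemmas~\ref{lem:2d-shared-control-toffoli} and \ref{lem:single-output-phase-gadget-2d}.

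Concretely, I would build the one-hot decoder level by level, $j=1,\dots,k$. Level $j$ has $2^j$ prefix registers $F_p$, $p\in\{0,1\}^j$. Going from level $j$ to level $j+1$ is a batch of $2^{j}$ Toffolis sharing the control $u_{j+1}$, possibly negated by $X$ conjugation. By Lemma~\ref{lem:2d-shared-control-toffoli} this batch costs depth $O(\sqrt{2^{j}})$ on a patch of area $O(2^j)$. Summing the geometric series $\sum_j\sqrt{2^j}=O(\sqrt N)$ gives total decoder depth $O(\sqrt N)$. With borrowed registers, the decoder computes $F_p\oplus(\text{junk})$ rather than $F_p$. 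I would handle this with the usual Barenco-style doubling: run the forward cascade, XOR the result into $b$, undo the cascade except for its top level, and repeat. At each level the junk then appears an even number of times and cancels, at the price of a constant (or at most $2^{O(1)}$ per level, which I would organize as a recursion with only constant blowup) factor in depth. Before the final parity, I would route the registers indexed by $t\in G$ into a tree rooted at $b$ using SWAP routing of depth $O(\sqrt N)$~\cite{LeightonMakedonTollis1995Routing}, apply the parity fan-in, and reverse the routing.

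The main obstacle is the cancellation of junk across the $k$ decoder levels without clean qubits. A naive doubling at each level costs $2^k$ in depth, which is unaffordable. My first attempt is to use a single two-level split, as in Lemma~\ref{lem:boolean-standard-alltoall}. Write $y=(u,v)$ with $u$ consisting of $k/2$ bits and $v$ of $k/2$ bits. The flags $E_a=[u=a]$ and $Q_\beta=[v=\beta]$ each have only $\sqrt N$ entries. To obtain them I would store the $O(\sqrt N)$ flag registers on borrowed qubits, or on a clean region of size $O(\sqrt N)$ if that counts as ``no asymptotic clean workspace''; the statement's phrasing suggests sublinear clean space is tolerated. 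I would then write
\[
    g(y)=\bigoplus_a E_a\cdot h_a(v),\qquad h_a(v)=\bigoplus_{\beta:\,g(a,\beta)=1}Q_\beta .
\]
The terms $h_a(v)$ are computed with the borrowed $d_{a,\beta}$ CNOT-twice trick, using 2D dirty fan-out from each $Q_\beta$, as in Lemma~\ref{lem:boolean-standard-alltoall}. The products $E_a h_a$ are computed with Lemma~\ref{lem:2d-shared-control-toffoli}, and the final parity with the tree fan-in. Every stage has depth $O(\sqrt N)$, and only constantly many rounds of compute and uncompute are needed, so the toggling cancellations stay constant-factor.
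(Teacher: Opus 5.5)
There is a genuine gap, and it is exactly where you locate the main obstacle. Your final route, the $k/2$/$k/2$ split modeled on Lemma~\ref{lem:boolean-standard-alltoall}, never says where the flags $E_a$ and $Q_\beta$, the partial sums $h_a(v)$, and the products $E_a h_a(v)$ live. Lemma~\ref{lem:boolean-standard-alltoall} keeps all of them in clean registers, and your fallback of a clean region of size $O(\sqrt N)$ is not allowed by the statement. ``No asymptotic clean ancillary workspace'' means essentially none, and this is used downstream: in item~3 of Theorem~\ref{thm:symmetric-functions} the lemma is called with $N=2^{m-c+1}=\Theta(n)$. There, $\Theta(\sqrt N)=\Theta(\sqrt n)$ clean qubits would break the claimed $O(\log^2 n)$ clean-ancilla bound. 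If instead everything lives on borrowed qubits, computing $E_a$ and $Q_\beta$ is again a multi-level decoder whose junk must cancel, which is the problem you left open. The later layers $Q\to h_a\to E_ah_a\to b$ also each need their own toggle-and-repeat, so the assertion that only constantly many compute/uncompute rounds are needed is unsupported. Likewise, in your first plan a constant factor ``per level'' compounds to $2^{\Theta(k)}=N^{\Theta(1)}$ over $\Theta(k)$ levels.

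The missing idea is how the paper avoids doubling: the level-$j$ circuit is \emph{not} required to restore lower levels. The paper writes $g$ in algebraic normal form and builds $G_j$, which toggles every monomial register, $r_A\mapsto r_A\oplus M_A(y)$ for all $A\subseteq[j]$ with $|A|\ge2$. Then $G_{j+1}$ consists of one batch for the new quadratic monomials, followed by the shared-control batch $\Tof(r_A,y_{j+1};r_{A\cup\{j+1\}})$, then $G_j$ called once, then the same batch again. The borrowed contents cancel since $d_Ay_{j+1}\oplus(d_A\oplus M_A)y_{j+1}=M_Ay_{j+1}$, and the toggled old registers are exactly the required output. By Lemma~\ref{lem:2d-shared-control-toffoli} this gives $D_{j+1}\le D_j+O(\sqrt{2^j})$, hence depth $O(\sqrt N)$. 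The top-level junk is removed by a parity fan-in of the selected registers into $b$, then $G_k^\dagger$, then the same parity fan-in again. The same recursion works verbatim for your one-hot prefix decoder, using separate batches controlled by $u_{j+1}$ and by its negation, so your first plan goes through once organized this way. Your split route could also be rescued, since naive doubling on a $k/2$-bit decoder satisfies $T(j+1)=2T(j)+O(\sqrt{2^j})$ and so costs only $O(2^{k/2})=O(\sqrt N)$. You would still have to spell out the constant-depth nesting of toggles for $Q_\beta$, $h_a$, $E_a$ and the products.
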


\begin{proof}
Let \(N=2^k\).  Write \(g\) in algebraic normal form,
\[
    g(y)
    =
    \alpha_\emptyset
    \oplus
    \bigoplus_{i=1}^k \alpha_i y_i
    \oplus
    \bigoplus_{\substack{A\subseteq[k]\\ |A|\ge2}}
        \alpha_A \prod_{i\in A} y_i .
\]
If \(\alpha_\emptyset=1\), apply an \(X\) gate to \(b\).  The linear part is
implemented by applying the parity fan-in described above to the selected input
qubits \(y_i\) with \(\alpha_i=1\) and the target \(b\).  This has depth
\(O(\sqrt N)\), since at most \(k\le N\) input qubits participate.

It remains to implement the higher-degree part.  We follow the
borrowed-ancilla ESOP construction of
Ref.~\cite{ZiNieSun2025SymmetricFunctions}.  For every subset
\(A\subseteq[k]\) with \(|A|\ge2\), introduce one borrowed register \(r_A\).
Let
\[
    M_A(y)=\prod_{i\in A} y_i .
\]
We construct a monomial-generation circuit \(G_k\) satisfying
\[
    G_k:\quad r_A\longmapsto r_A\oplus M_A(y),
    \qquad
    A\subseteq[k],\ |A|\ge2,
\]
while leaving the input bits \(y_i\) unchanged.  The registers \(r_A\) are
borrowed and are not assumed to be initialized to \(\ket{0}\).

The circuit \(G_k\) is built recursively.  For \(j\ge2\), let \(G_j\) denote
the corresponding circuit for the variables \(y_1,\ldots,y_j\).  The base case
\(G_2\) is the single Toffoli gate
\[
    r_{\{1,2\}}\longmapsto r_{\{1,2\}}\oplus y_1y_2 .
\]
Assume that \(G_j\) has been constructed.  To obtain \(G_{j+1}\), first create
the new quadratic monomials containing \(y_{j+1}\): for each \(i\in[j]\),
apply
\[
    r_{\{i,j+1\}}
    \longmapsto
    r_{\{i,j+1\}}\oplus y_i y_{j+1}.
\]
These \(j\) Toffoli gates share the common control \(y_{j+1}\).

Next consider the new monomials \(A\cup\{j+1\}\), where \(A\subseteq[j]\) and
\(|A|\ge2\).  For all such \(A\), apply the Toffoli gates
\[
    \Tof(r_A,y_{j+1};r_{A\cup\{j+1\}})
\]
in parallel as a shared-control batch, then apply \(G_j\), and then apply the
same shared-control batch again.  If the initial value of \(r_A\) is \(d_A\),
then after \(G_j\) it is \(d_A\oplus M_A(y)\).  Therefore the two Toffoli
batches change \(r_{A\cup\{j+1\}}\) by
\[
    d_A y_{j+1}
    \oplus
    (d_A\oplus M_A(y))y_{j+1}
    =
    M_A(y)y_{j+1}
    =
    M_{A\cup\{j+1\}}(y).
\]
Thus the unknown borrowed value \(d_A\) cancels, and the desired new monomial
is generated.  At the same time, \(G_j\) leaves every old register \(r_A\) in
the state \(r_A\oplus M_A(y)\).  Hence \(G_{j+1}\) has the required action.

We now bound the two-dimensional depth of \(G_k\).  Lay out all input qubits,
borrowed monomial registers, and an additional pool of \(O(N)\) borrowed
routing qubits inside a square-grid patch of area \(O(N)\).  This pool is
reused for different batches and different recursive levels.  The layout is
chosen in nested subpatches \(P_j\), where \(P_j\) has area \(O(2^j)\) and
contains all registers touched by \(G_j\).  At the step from \(G_j\) to
\(G_{j+1}\), the quadratic batch has \(j\) gates, and each of the two
higher-degree batches has \(O(2^j)\) gates.  In each batch the gates share the
common control \(y_{j+1}\), and all other controls and targets in that batch
are pairwise distinct.  By Lemma~\ref{lem:2d-shared-control-toffoli}, each
batch can be implemented on \(P_{j+1}\) with depth \(O(\sqrt{2^j})\), using
only borrowed qubits.  The routing inside each batch is reversed after the
batch, so the recursive layout is restored before the next part of the
construction is applied.  Therefore, if \(D_j\) is the depth of \(G_j\), then
\[
    D_{j+1}
    \le
    D_j + O(\sqrt{2^j}).
\]
It follows that
\[
    D_k
    =
    \sum_{j=1}^{k-1} O(\sqrt{2^j})
    =
    O(\sqrt{2^k})
    =
    O(\sqrt N).
\]

After applying \(G_k\), each higher-degree monomial register contains
\(r_A\oplus M_A(y)\).  Let
\[
    \mathcal H=\{A\subseteq[k]: |A|\ge2,\ \alpha_A=1\}.
\]
If \(\mathcal H\) is nonempty, use parity fan-in to add the parity of the
registers \(r_A\), \(A\in\mathcal H\), into \(b\).  Before applying this
parity fan-in, use two-dimensional SWAP routing to place exactly these
selected registers and the target \(b\) on a connected tree of area
\(O(|\mathcal H|+1)\), and reverse the routing afterwards.  Then apply
\(G_k^\dagger\), which restores every monomial register \(r_A\) to its initial
borrowed value.  If \(\mathcal H\) is nonempty, apply the same parity fan-in
from the registers \(r_A\), \(A\in\mathcal H\), into \(b\) once more, again
with the routing reversed afterwards.  The two parity fan-ins change \(b\) by
\[
    \bigoplus_{A\in\mathcal H} (r_A\oplus M_A(y))
    \oplus
    \bigoplus_{A\in\mathcal H} r_A
    =
    \bigoplus_{A\in\mathcal H} M_A(y).
\]
Thus all unknown borrowed values cancel, while exactly the higher-degree part
of \(g(y)\) is added to \(b\).  The circuit \(G_k^\dagger\) restores all
borrowed monomial registers, and the parity fan-in circuits restore their
input registers.

The monomial-generation circuit and its inverse have depth \(O(\sqrt N)\).
Each parity fan-in involves at most \(O(N)\) registers and has depth
\(O(\sqrt N)\).  Therefore the total depth is \(O(\sqrt N)=O(\sqrt{2^k})\).
The number of borrowed qubits is \(O(N)=O(2^k)\), and no asymptotic clean
ancillary workspace is used.
\end{proof}

\begin{lemma}[Boolean evaluation in the dynamic two-dimensional model]
\label{lem:boolean-dynamic-2d}
Let \(g:\{0,1\}^k\to\{0,1\}\) be an arbitrary Boolean function.  In the dynamic
2D model, the map
\[
    \ket{y}\ket{b}
    \longmapsto
    \ket{y}\ket{b\oplus g(y)}, \forall y\in\{0,1\}^k,~\forall b\in\{0,1\},
\]
can be implemented with constant depth and with
$ 
    O\!\left(2^k\,\mathrm{polylog}\,2^k\right)
$
clean ancillary qubits.
\end{lemma}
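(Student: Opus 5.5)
The plan is to mirror the proof of Lemma~\ref{lem:boolean-dynamic-alltoall}, but to replace every fan-out and every AND/OR primitive by its two-dimensional dynamic counterpart. The resulting layout is a rectangular array: rows are indexed by input bits and columns by the $2^k$ strings $t\in\{0,1\}^k$. This is the same geometry as in Lemma~\ref{lem:2d-dynamic-weighted-phase-preparation}.

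\textbf{Step 1: one-hot encoding.} I would use a $k\times 2^k$ array of constant-size cells in which cell $(i,t)$ holds a copy qubit $X_{i,t}$. For each $i$, I apply the measurement-based path fan-out of Lemma~\ref{lem:fanout-path} along row $i$ to copy $y_i$ into all $X_{i,t}$. This costs constant depth and $O(k2^k)$ qubits, and the rows are disjoint so they run in parallel. I then flip $X_{i,t}$ whenever $t_i=0$. Next, below column $t$ I attach a patch implementing Lemma~\ref{lem:2d-dynamic-multi-and} with $s=k$. This patch has area $O(k\log k)$ and computes $e_t=[y=t]$ into a clean flag placed on the patch boundary. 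To keep the column layout valid, I give each AND patch width $O(1)$ and height $O(k\log k)$, or alternatively I group the columns into strips of width $O(\log k)$. Either choice yields total area $O(2^k k\log k)$.

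\textbf{Step 2: aggregation.} Since exactly one $e_t$ equals $1$, I can write $g(y)=\bigoplus_{t\in G}e_t$, where $G=g^{-1}(1)$. Using parity rather than OR is simpler here. I arrange the flags $e_t$, $t\in G$, together with the target $b$ along a horizontal strip, with one strip qubit per column. I then use the fan-out--parity equivalence~\cite{HoyerSpalek2005Fanout}: a Hadamard-conjugated measurement-based path fan-out along this strip XORs all selected flags into $b$ in constant depth. Flags with $t\notin G$ are simply left out of the path, which is possible because the path can pass through auxiliary strip qubits. The auxiliary strip qubits in that path are restored by the recovery procedure of Lemma~\ref{lem:2d-dynamic-fanout-path}. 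Because $b$ is an external qubit, I route it, or a clean copy of it made by a long-range CNOT along a corridor, to the end of the strip. The strip costs $O(2^k)$ qubits.

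\textbf{Step 3: cleanup and accounting.} I reverse the AND patches, undo the literal flips, and apply the inverse row fan-outs. This restores every ancilla to $\ket0$ while $b$ keeps $b\oplus g(y)$. The process consists of a constant number of constant-depth stages. The area is $O(k2^k)+O(2^k k\log k)+O(2^k)=O(2^k\,\mathrm{polylog}\,2^k)$.

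The main obstacle I expect is geometric rather than logical: making sure the $2^k$ AND patches, the row-fan-out array, and the parity strip coexist on the grid with disjoint fan-out paths in each layer. I plan to handle this with the banded layout used in Lemma~\ref{lem:2d-dynamic-parallel-decoder}, where copy array, AND band and parity band are stacked vertically. In addition, the $k$ input copies for column $t$ are delivered by short vertical paths, each of length $O(k)$, into the $s\times O(\log s)$ input ports of that column's AND patch.
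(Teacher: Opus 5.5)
Your proposal is correct and follows essentially the same route as the paper's proof: fan out private copies of each $y_i$ to all $2^k$ branches, compute the one-hot flags $e_t=[y=t]$ by a constant-depth AND of literals in disjoint branch patches, add $\bigoplus_{t:\,g(t)=1}e_t$ into $b$ via the Hadamard dual of measurement-based fan-out, and uncompute, for $O(2^k\,\mathrm{poly}(k))$ ancillas in total. The only differences are that you instantiate the AND with Lemma~\ref{lem:2d-dynamic-multi-and} and spell out a band layout (the paper just cites the Takahashi--Tani AND with $\mathrm{poly}(k)$ workspace per branch), plus one small slip: XORing the parity into a CNOT copy of $b$ would not update $b$, so drop that option or replace it with computing the parity into a clean qubit, applying a long-range CNOT from it into $b$, and uncomputing the parity.
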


\begin{proof}
Let \(N=2^k\).  We use a one-hot evaluation of \(g\).  For each
\(t\in\{0,1\}^k\), we compute an equality flag
\[
    e_t=[y=t].
\]
For every fixed input \(y\), exactly one flag \(e_t\) is equal to \(1\).
Hence, if
\[
    \mathcal G=\{t\in\{0,1\}^k:g(t)=1\},
\]
then
\[
    g(y)=\bigoplus_{t\in\mathcal G} e_t .
\]

We now describe the dynamic 2D implementation.  For each
\(t\in\{0,1\}^k\), allocate a branch patch \(Q_t\).  The patch \(Q_t\)
contains private copies \(y_1^{(t)},\ldots,y_k^{(t)}\), one clean flag qubit
\(e_t\), and \(\mathrm{poly}(k)\) clean workspace qubits for a constant-depth
AND computation.  The patches \(Q_t\) are pairwise disjoint.

First, for each input bit \(y_i\), use measurement-based fan-out to copy
\(y_i\) to the clean targets \(y_i^{(t)}\), for all
\(t\in\{0,1\}^k\).  The \(k\) fan-out structures are laid out on disjoint
strips of the two-dimensional grid and are performed in parallel.  By
Lemma~\ref{lem:fanout-path}, each such fan-out has constant depth and uses
\(O(N)\) clean ancillary qubits.  Thus this step has constant depth and uses
\(O(kN)\) clean ancillary qubits.

Next, each branch \(Q_t\) computes \(e_t=[y=t]\) from its private copies.  If
the \(i\)-th bit of \(t\) is \(0\), apply an \(X\) gate to
\(y_i^{(t)}\); otherwise do nothing.  After these local \(X\) gates, the flag
\(e_t\) is the AND of the \(k\) resulting literals.  We use the exact
constant-depth AND primitive in the unbounded-fan-out model
\cite{TakahashiTani2016Collapse}, with each fan-out gate implemented by
measurement-based fan-out in the dynamic 2D model.  This requires
\(\mathrm{poly}(k)\) clean ancillary qubits inside \(Q_t\).  Since the patches
\(Q_t\) are disjoint, all equality flags are computed in parallel.  Finally,
undo the local \(X\) gates.  The total clean workspace for all branches is
\(N\,\mathrm{poly}(k)\), and the depth is constant.

It remains to add the value of \(g(y)\) to the target qubit \(b\).  Since the
flags \(e_t\) form a one-hot encoding, it suffices to add the parity of the
selected flags \(e_t\), \(t\in\mathcal G\), to \(b\).  This parity fan-in is
the \(\mathsf H\)-dual of measurement-based fan-out, so it can also be
implemented in constant depth in the dynamic 2D model, using \(O(N)\) clean
ancillary qubits.  If \(\mathcal G=\emptyset\), this step is omitted.  This
updates the target as
\[
    b\longmapsto b\oplus\bigoplus_{t\in\mathcal G} e_t
    =
    b\oplus g(y).
\]

The parity fan-in does not change the flags \(e_t\).  Therefore we can
uncompute all branch computations, restoring the flags \(e_t\) and the local
workspaces in the patches \(Q_t\) to \(\ket{0}\).  Finally, we undo the initial
fan-out of the input bits \(y_i\), which restores all private copies
\(y_i^{(t)}\) to \(\ket{0}\).  The input register \(\ket y\) is left
unchanged, and the target register is \(\ket{b\oplus g(y)}\).

The circuit consists of a constant number of constant-depth dynamic 2D layers.
The number of clean ancillary qubits is
$
    O(kN)+N\,\mathrm{poly}(k)+O(N)
    =
    O(N\,\mathrm{poly}(k)).
$
Since \(N=2^k\) and \(k=\log N\), this is
$
    O\!\left(2^k\,\mathrm{polylog}\,2^k\right).
$
\end{proof}

We can now combine these Boolean-evaluation primitives with the Hamming-weight
constructions developed earlier.

\begin{theorem}[Symmetric Boolean functions]
\label{thm:symmetric-functions}
Let \(f:\{0,1\}^n\to\{0,1\}\) be an arbitrary symmetric Boolean function.
Then the oracle
\[
    U_{\sym}^{(n)}:\ket{x}\ket{b}
    \longmapsto
    \ket{x}\ket{b\oplus f(x)},\quad \forall x\in\{0,1\}^n,~\forall b\in\{0,1\},
\]
can be implemented with the following resources.

\begin{enumerate}
    \item In the standard all-to-all model, \(U_{\sym}^{(n)}\) has depth
    \(O(\log n)\) and uses \(O(n/\log n)\) clean ancillary qubits.

    \item In the dynamic all-to-all model, for every fixed
    \(\varepsilon>0\), \(U_{\sym}^{(n)}\) has depth \(O_\varepsilon(1)\) and
    uses
    $
        O\!\left(n^{1+\varepsilon}\,\mathrm{polylog}\,n\right)
    $
    clean ancillary qubits.

    \item In the standard 2D model, \(U_{\sym}^{(n)}\) has depth
    \(O(\sqrt n)\) and uses \(O(\log^2 n)\) clean ancillary qubits.

    \item In the dynamic 2D model, for every fixed \(\varepsilon>0\),
    \(U_{\sym}^{(n)}\) has depth \(O_\varepsilon(1)\) and uses
    $
        O\!\left(n^{1+\varepsilon}\,\mathrm{polylog}\,n\right)
    $
    clean ancillary qubits.
\end{enumerate}
\end{theorem}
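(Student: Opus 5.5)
The plan is the compute--evaluate--uncompute scheme. Let \(m=\lceil\log(n+1)\rceil\) and extend \(g\) to \(\tilde g:\{0,1\}^m\to\{0,1\}\) by setting \(\tilde g(y)=0\) for \(y>n\). In each model we take \(m\) clean qubits as a temporary weight register and apply \(U_{\HWC}^{(n)}\) to get \(\ket{x}\ket{|x|}\ket{b}\). We then apply the Boolean oracle \(\ket{y}\ket{b}\mapsto\ket{y}\ket{b\oplus\tilde g(y)}\) to the weight register and \(b\), and finally apply \((U_{\HWC}^{(n)})^\dagger\). Since \(\tilde g(|x|)=f(x)\), correctness is immediate, and the weight register plus all HWC workspace return to \(\ket0\). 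Depth adds and ancilla counts are bounded by the maximum (or sum) over the stages, plus \(m=O(\log n)\). Because \(2^m\le 2(n+1)\), the Boolean step always works on a truth table of size \(O(n)\).

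\emph{Standard all-to-all.} For the HWC stage we use Theorem~\ref{thm:alltoall-blocked} with \(B=2^{\sqrt{\log n}}\), as in Corollary~\ref{cor:alltoall-sublinear}. Then \(\log^2 B=\log n\), so the depth is \(O(\log n)\) and the ancilla count is \((n/B)\log B=o(n/\log n)\). For the Boolean stage we use Lemma~\ref{lem:boolean-standard-alltoall} with \(k=m\), which gives depth \(O(\log n)\), \(O(2^m/m)=O(n/\log n)\) clean ancillas, and \(O(n)\) borrowed qubits. The borrowed qubits can be taken to be the input qubits \(x_1,\dots,x_n\) themselves, since those are idle during this stage and are restored. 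If the count of \(x\)'s is short by a constant factor, we instead split \(g\)'s truth table into \(O(1)\) pieces and run them sequentially. This gives item 1.

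\emph{Dynamic all-to-all.} Corollary~\ref{cor:alltoall-dynamic-near-linear} gives the HWC stage with depth \(O_\varepsilon(1)\) and \(O(n^{1+\varepsilon}\polylog n)\) ancillas. Lemma~\ref{lem:boolean-dynamic-alltoall} gives the Boolean stage with constant depth and \(O(n\polylog n)\) ancillas, which is dominated by the HWC cost.

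\emph{Standard 2D.} Theorem~\ref{thm:hw-2d-polylog-ancilla} gives the HWC stage with depth \(O(\sqrt n)\) and \(O(\log^2 n)\) ancillas. For the Boolean stage we use Lemma~\ref{lem:boolean-standard-2d-borrowed}, which gives depth \(O(\sqrt{2^m})=O(\sqrt n)\) and \(O(2^m)=O(n)\) borrowed qubits. Again the input qubits serve as the borrowed workspace; if needed we also borrow the idle \(O(\log^2 n)\) HWC ancillas. First we route the \(m\) weight qubits and \(b\) into an \(O(n)\)-area patch containing the inputs, which takes depth \(O(\sqrt n)\); at the end we reverse the routing.

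\emph{Dynamic 2D.} Corollary~\ref{cor:2d-dynamic-near-linear} gives the HWC stage and Lemma~\ref{lem:boolean-dynamic-2d} gives the Boolean stage. The output register sits on a boundary port, so we copy \(y\) into the Boolean patch with constant-depth long-range CNOTs (Lemma~\ref{lem:fanout-path} and Lemma~\ref{lem:2d-dynamic-fanout-path}) and uncopy afterwards. The routing area is \(O(\log n\cdot\sqrt{n^{1+\varepsilon}\polylog n})\), which is absorbed.

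The main obstacle is the borrowed-qubit bookkeeping in items 1 and 3. The clean budget there is sublinear, so the \(O(n)\) borrowed qubits must come from qubits that are genuinely idle and whose arbitrary contents are restored. The inputs satisfy both conditions, but I still need to check two things: that the count suffices (up to the constant-factor splitting of \(g\) described above), and that in 2D the needed geometric layout of borrowed qubits fits inside the input patch using only SWAP routing of depth \(O(\sqrt n)\).
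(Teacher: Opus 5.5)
Your proposal is correct and takes essentially the same route as the paper. In each model you compute $|x|$ into a temporary $m$-qubit register, apply the matching Boolean-evaluation lemma to that register and $b$, and uncompute; in the two standard models the original input register serves as borrowed workspace. Your ``split $g$'s truth table into $O(1)$ pieces'' is the paper's constant-bit splitting, which the paper makes explicit: it computes a flag $q_s=[u=s]$ from the $c$ top bits and evaluates $(q_s,v)\mapsto q_s\,g(s,v)$ on $m-c+1$ bits, so the extra flag input handles the conditioning and each call needs at most $n$ borrowed qubits.
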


\begin{proof}
Let
$
    m=\lceil\log(n+1)\rceil .
$
Since \(f\) is symmetric, there exists a function
\(g:\{0,1,\ldots,n\}\to\{0,1\}\) such that \(f(x)=g(|x|)\).  We extend
\(g\) arbitrarily to a Boolean function on all \(m\)-bit strings.  This
extension does not affect the computation, because the Hamming-weight register
only takes values in \(\{0,1,\ldots,n\}\).

The general circuit is the same in all four models.  First introduce an
\(m\)-qubit clean register and compute the Hamming weight,
\[
    \ket{x}\ket{b}\ket{0^m}
    \longmapsto
    \ket{x}\ket{b}\ket{|x|},\quad x\in\{0,1\}^n,~\forall b\in\{0,1\}.
\]
Then apply the appropriate Boolean-evaluation circuit for the \(m\)-bit
function \(g\) to the Hamming-weight register and the target qubit \(b\).  This
maps
\[
    \ket{x}\ket{b}\ket{|x|}
    \longmapsto
    \ket{x}\ket{b\oplus g(|x|)}\ket{|x|}
    =
    \ket{x}\ket{b\oplus f(x)}\ket{|x|},\quad x\in\{0,1\}^n,~\forall b\in\{0,1\}.
\]
Finally, apply the inverse Hamming-weight computation to restore the temporary
Hamming-weight register to \(\ket{0^m}\).  In the dynamic models, this means
the logical inverse of the Hamming-weight construction; the measurement-based
fan-out and long-range CNOT primitives implement deterministic logical
operations with the same asymptotic resources.  The temporary \(m\)-qubit
Hamming-weight register is counted as clean ancillary workspace, but
\(m=O(\log n)\) is absorbed by all bounds below.

We will use the following observation for the standard-model Boolean
post-processing.  The Boolean-evaluation primitives in
Lemmas~\ref{lem:boolean-standard-alltoall} and
\ref{lem:boolean-standard-2d-borrowed} use borrowed qubits.  In the present
composition, these borrowed qubits are supplied by the original input register
\(\ket{x}\), which is not modified during the Boolean post-processing and is
restored exactly by the borrowed-qubit primitives.  To avoid constant-factor
issues in the inequality \(2^m=O(n)\), fix a sufficiently large constant
\(c\).  Write the Hamming-weight register as \((u,v)\), where \(u\) consists
of the \(c\) most significant bits and \(v\) consists of the remaining
\(m-c\) bits.  For each \(s\in\{0,1\}^c\), compute the constant-size flag
\(q_s=[u=s]\), apply the Boolean-evaluation primitive to the function
\[
    (q_s,v)\longmapsto q_s\,g(s,v)
\]
on \(m-c+1\) input bits, and then uncompute \(q_s\).  Since \(2^c=O(1)\), this
only changes the depth and clean workspace by constant factors.  By choosing
\(c\) large enough, the number of borrowed qubits required in each such
Boolean-evaluation call is at most \(n\), so the original input register
provides enough borrowed workspace.

We now instantiate this scheme in each model.

In the standard all-to-all model, use
Corollary~\ref{cor:alltoall-sublinear} for the Hamming-weight computation and
Lemma~\ref{lem:boolean-standard-alltoall} for the Boolean post-processing,
with the constant-bit splitting described above.  The Hamming-weight
computation has depth \(O(\log n)\) and uses
$
    O\!\left(\frac{n\sqrt{\log n}}{2^{\sqrt{\log n}}}\right)
$
clean ancillary qubits.  Each Boolean-evaluation call has input length
\(m-c+1=\Theta(\log n)\), depth \(O(\log n)\), and clean ancillary cost
$
    O\!\left(\frac{2^{m-c+1}}{m-c+1}\right)
    =
    O(n/\log n).
$
There are only \(2^c=O(1)\) such calls.  The Hamming-weight workspace and the
temporary \(m\)-qubit register are absorbed by \(O(n/\log n)\).  Therefore the
standard all-to-all implementation has depth \(O(\log n)\) and uses
\(O(n/\log n)\) clean ancillary qubits.

In the dynamic all-to-all model, use
Corollary~\ref{cor:alltoall-dynamic-near-linear} for the Hamming-weight
computation and Lemma~\ref{lem:boolean-dynamic-alltoall} with \(k=m\) for the
Boolean post-processing.  The Hamming-weight computation has depth
\(O_\varepsilon(1)\) and uses
$
    O\!\left(n^{1+\varepsilon}\,\mathrm{polylog}\,n\right)
$
clean ancillary qubits.  The Boolean post-processing has constant depth and
uses
$
    O\!\left(2^m\,\mathrm{polylog}\,2^m\right)
    =
    O\!\left(n\,\mathrm{polylog}\,n\right)
$
clean ancillary qubits, which is absorbed by the near-linear Hamming-weight
bound for every fixed \(\varepsilon>0\).  Hence the dynamic all-to-all bounds
follow.

In the standard 2D model, use
Theorem~\ref{thm:hw-2d-polylog-ancilla} for the Hamming-weight computation and
Lemma~\ref{lem:boolean-standard-2d-borrowed} for the Boolean post-processing,
again with the constant-bit splitting described above.  The Hamming-weight
computation has depth \(O(\sqrt n)\) and uses \(O(\log^2 n)\) clean ancillary
qubits.  Each Boolean-evaluation call has input length \(m-c+1\), and hence
has depth
$
    O\!\left(\sqrt{2^{m-c+1}}\right)
    =
    O(\sqrt n),
$
using borrowed qubits supplied by the original input register and no
asymptotic clean ancillary workspace.  Any nearest-neighbor routing needed to
place the Hamming-weight register, the target qubit, and the borrowed input
qubits in the Boolean-evaluation patch is performed in depth \(O(\sqrt n)\)
on an \(O(n)\)-area grid and is reversed afterwards.  Since there are only
\(O(1)\) Boolean-evaluation calls, the total depth remains \(O(\sqrt n)\), and
the clean ancillary cost remains \(O(\log^2 n)\).

Finally, in the dynamic 2D model, use
Corollary~\ref{cor:2d-dynamic-near-linear} for the Hamming-weight computation
and Lemma~\ref{lem:boolean-dynamic-2d} with \(k=m\) for the Boolean
post-processing.  The Hamming-weight computation has depth \(O_\varepsilon(1)\)
and uses
$
    O\!\left(n^{1+\varepsilon}\,\mathrm{polylog}\,n\right)
$
clean ancillary qubits.  The Boolean post-processing has constant depth and
uses
$
    O\!\left(2^m\,\mathrm{polylog}\,2^m\right)
    =
    O\!\left(n\,\mathrm{polylog}\,n\right)
$
clean ancillary qubits, which is again absorbed by the near-linear
Hamming-weight bound for every fixed \(\varepsilon>0\).  Hence the dynamic
2D bounds follow.
\end{proof}

\section{Discussion and Open Problems}
\label{sec:discussion}

In this work, we presented quantum circuit constructions for two fundamental computational tasks, Hamming weight computation and symmetric Boolean functions. Our circuits are efficient in both circuit depth and ancilla count, and apply to both the standard and dynamic circuit models under all-to-all and 2D qubit connectivity. In most circuit models, the depth of our circuits is asymptotically optimal, and we also establish a depth--ancilla tradeoff that interpolates between efficient-depth and low-ancilla constructions. 

Several questions remain open. First, although our circuits achieve optimal depth in most circuit models, whether the depth-ancilla tradeoff itself is optimal is still unknown. 
Second, our analysis adopts the standard depth metric in which all Clifford gates and arbitrary single-qubit rotations are treated as unit cost. In the fault-tolerant setting, the dominant cost comes from non-Clifford resources, and it is therefore natural to ask what the minimum $T$-count and $T$-depth are for these two problems, and whether mid-circuit measurements can reduce $T$-depth as effectively as they reduce ordinary depth. Other directions, such as extensions to more general connectivity graphs, and to broader function classes including multi-output symmetric Boolean functions, are also worth exploring in future work.

\section*{Acknowledgment}

Wei Zi was supported by the Guangdong Provincial Quantum Science Strategic Initiative under Grant No.~GDZX2503001.

\bibliographystyle{plainnat}
\bibliography{ref}

\end{document}